\newcommand{\Req}{Request\xspace}
\newcommand{\Suc}{Success\xspace}
\newcommand{\CA}{Consumer\xspace}
\newcommand{\CDA}{CDA\xspace}
\newcommand{\CIA}{CIA\xspace}
\newcommand{\CLS}{Consumer Local Storage\xspace}
\newcommand{\CMS}{CMS\xspace}
\newcommand{\CSA}{CSA\xspace}
\newcommand{\DI}{Download Instructions\xspace}
\newcommand{\DKM}{DKM\xspace}
\newcommand{\II}{Installation Instructions\xspace}
\newcommand{\IKM}{IKM\xspace}
\newcommand{\MKM}{MKM\xspace}
\newcommand{\OA}{Order Agent\xspace}
\newcommand{\OCS}{Order Cloud Service\xspace}
\newcommand{\PA}{Producer\xspace}
\newcommand{\PDA}{PDA\xspace}
\newcommand{\PIA}{PIA\xspace}
\newcommand{\PK}{PublicKey\xspace}
\newcommand{\PLS}{Producer Local Storage\xspace}
\newcommand{\Prod}{Supplier\xspace}
\newcommand{\PSA}{PSA\xspace}
\newcommand{\PSS}{PSS\xspace}
\newcommand{\SK}{PrivateKey\xspace}
\newcommand{\SKA}{SKA\xspace}
\newcommand{\SL}{Software List\xspace}
\newcommand{\SR}{Software Repository\xspace}
\newcommand{\SW}{Software\xspace}
\newcommand{\VCS}{Vehicle Cloud Service\xspace}
\newcommand{\VCM}{VCM\xspace}
\newcommand{\Vehc}{Vehicle\xspace}
\newcommand{\VD}{VIN Database\xspace}
\newcommand{\VIN}{VIN\xspace}
\newcommand{\VSO}{VSO\xspace}
\newcommand{\VUUP}{VUUP\xspace}
\newcommand{\ECU}{ECU\xspace}
\newcommand{\SA}{SecurityAccess\xspace}
\definecolor{backcolour}{rgb}{0.95,0.95,0.92}
\newenvironment{longlisting}{\captionsetup{type=listing}}{}
\newcommand{\appendixcode}[3]{
\begin{longlisting}
    \inputminted{text}{#1}
    \captionof{listing}{#2}
    \label{#3}
\end{longlisting}
}
\newtheorem{metaRequirement}{System-level Requirement}[subsection]
\newtheorem{theorem}{Theorem}[section]
\newtheorem{corollary}{Corollary}[theorem]
\newtheorem{lemma}[theorem]{Lemma}
\newlist{labelenumerate}{enumerate}{1}
\setlist[labelenumerate]{label=$\ell_{\arabic*}$:}
\crefname{metaRequirement}{System-level Requirement}{System-level Requirements}
\begin{document}

\begin{frontmatter}
\title{Towards a Formal Verification of Secure \\Vehicle Software Updates\tnoteref{t1}}

\tnotetext[t1]{Preprint accepted for publication in Computers \& Security 2025 \citep{HAGEN2025104751}.}

\author[chalmers]{Martin Slind Hagen}
\author[chalmers]{Emil Lundqvist}
\author[chalmers]{Alex Phu}
\author[chalmers]{Yenan Wang}
\author[chalmers,volvocars]{\newline Kim Strandberg}
\author[chalmers]{Elad Michael Schiller}

\affiliation[chalmers]{organization={Chalmers University of Technology},
            addressline={Computer Science and Engineering}, 
            city={Gothenburg},
            postcode={41296}, 
            country={Sweden}}
\affiliation[volvocars]{organization={Volvo Car Corporation},
            addressline={Department of Research and Development}, 
            city={Gothenburg},
            postcode={40531}, 
            country={Sweden}}

\begin{abstract}
\small
With the rise of software-defined vehicles (SDVs), where software governs most vehicle functions alongside enhanced connectivity, the need for secure software updates has become increasingly critical. Software vulnerabilities can severely impact safety, the economy, and society. In response to this challenge, Strandberg et al.~[escar Europe, 2021] introduced the Unified Software Update Framework (UniSUF), designed to provide a secure update framework that integrates seamlessly with existing vehicular infrastructures. 

Although UniSUF has previously been evaluated regarding cybersecurity, these assessments have not employed formal verification methods. To bridge this gap, we perform a formal security analysis of UniSUF. We model UniSUF's architecture and assumptions to reflect real-world automotive systems and develop a ProVerif-based framework that formally verifies UniSUF’s compliance with essential security requirements — confidentiality, integrity, authenticity, freshness, order, and liveness —demonstrating their satisfiability through symbolic execution. Our results demonstrate that UniSUF adheres to the specified security guarantees, ensuring the correctness and reliability of its security framework.

\end{abstract}

\begin{keyword}
Provable Security \sep Vehicular Systems \sep Secure  Software Updates
\end{keyword}

\end{frontmatter}

\section{Introduction}

Connected cars are quickly becoming the norm, with 96\% of manufactured cars in 2030 expected to have connectivity features~\citep{statista-connected-cars}.
These connected cars feature a multitude of electronic control units (ECUs), with more than 100 ECUs per vehicle~\citep{number-of-ecus-in-cars}.
Maintaining and regularly updating these ECUs is critical to prevent security vulnerabilities.
The massive scale of the automotive industry, with over 70 million cars sold worldwide annually~\citep{statista-cars-sold}, makes it a prime target for malicious actors.
If an attacker compromises the software update process, the result can be malware installation, sensitive data leakage, and even vehicle hijacking, leading to devastating financial, social, and potentially fatal consequences.

Despite the importance of secure updates, ensuring the confidentiality, integrity, and correct execution of the software update process for connected vehicles remains a highly challenging task.
In particular, attackers could exploit weaknesses in the update process to install malicious software, eavesdrop on sensitive information, or revert vehicle software to an obsolete version containing vulnerabilities. 
Moreover, the sheer number of vehicles and ECUs in each car presents a scalability challenge, complicating the implementation of robust security measures across the entire fleet.

\subsection{Existing Solutions and Their Shortcomings}
\label{sec:goals}
Frameworks have been proposed to address these challenges, including the Unified Software Update Framework (UniSUF)~\citep{UniSUF}.
UniSUF proposes a reference architecture intended to serve as input to standards for secure software updates.
UniSUF's specifications were driven by the following development goals~\citep{UniSUFAnalysis}.

\begin{itemize}
    \item[\textbf{Confidentiality}:] To hinder eavesdropping.\begin{itemize}
        \item[\textbf{G1}:] Ensure software confidentiality during the software update process.
        \item[\textbf{G2}:] UniSUF session keys can only be viewed in decrypted format by authorized software components.
    \end{itemize}
    \item[\textbf{Integrity and Authenticity}:] To hinder spoofing and tampering.
    \begin{itemize}
        \item[\textbf{G3}:] The software is authentic against a  
        certificate and remains unchanged during the update process.
        \item[\textbf{G4}:] Only authentic resources are processed.
    \end{itemize}
    \item[\textbf{Freshness}:] To hinder replay attacks. 
    \begin{itemize}
        \item[\textbf{G5}:] An adversary should be unable to revert a vehicle's software to a previously installed version.

        \item[\textbf{G6}:] The system creates unique software distribution files per software update. Each such file can only be used for a designated vehicle.
    \end{itemize}
    \item[\textbf{Order}:] To hinder vulnerabilities that take advantage of running the process in an unintended order.
    \begin{itemize}
        \item[\textbf{G7}:] The software update process should follow the correct order.
    \end{itemize}
    \item[\textbf{Liveness}:] Hinder DoS attacks.
    \begin{itemize}
        \item[\textbf{G8}:] The software update process should eventually terminate, regardless of success or error.
    \end{itemize}
\end{itemize}

\Cref{fig:highlevel-arch} provides a high-level overview of the UniSUF architecture, illustrating the main entities involved in the update process: the Producer, the Consumer, Software Suppliers, the Software Repository, and Electronic Control Units (ECUs).
Software update packages, referred to as VUUPs, are generated by the Producer based on the latest software versions provided by the Software Suppliers, which have also passed internal testing.
The Producer delivers the VUUPs to the Consumer and uploads the software to the Software Repository. Upon receiving a VUUP, the Consumer processes it and installs the updates on the respective ECUs according to the instructions specified in the VUUP.
\Cref{sec:UniSUF Architecture} presents detailed architectures and protocol steps.

\begin{figure}[t]
  \centering
  \includegraphics[width=\columnwidth]{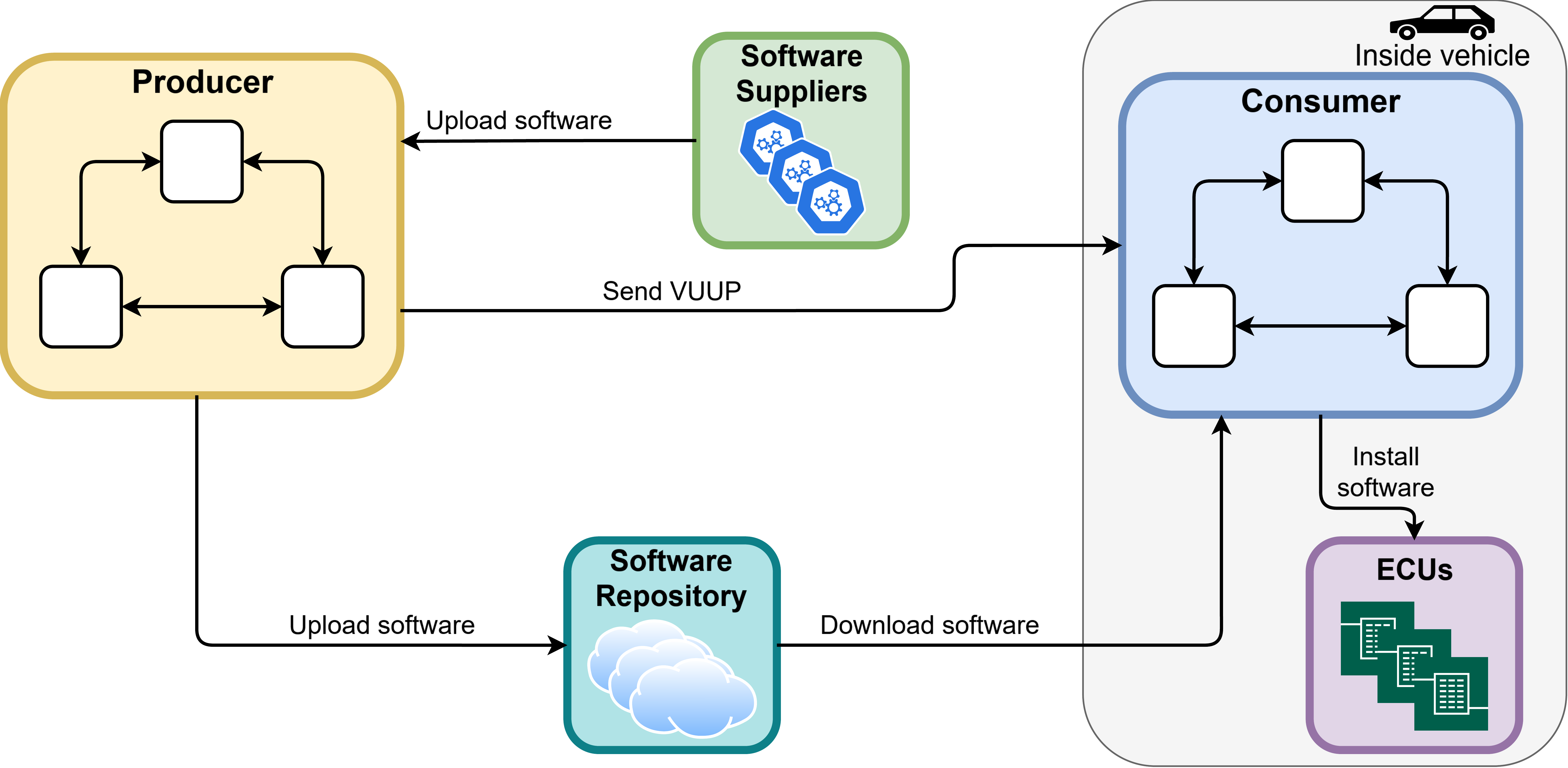}
  \caption{High-level overview of the UniSUF architecture showing the main entities and update flow. 
  \Cref{ch:subproblems} details the architecture components, see Figures~\ref{fig:preparation} to \ref{fig:stream_update_to_ecu}.}
  \label{fig:highlevel-arch}
\end{figure}

UniSUF's prior evaluations focus primarily on practical deployments and lack formal verification of its security guarantees.
This leaves the potential for subtle vulnerabilities that attackers could exploit, notably by exposing cryptographic keys or violating the sequence of operations during the update process.

Several scientific challenges are associated with the formal verification of systems like UniSUF. These include \textbf{[1: Confidentiality]} ensuring that the software update process maintains the confidentiality of secret information throughout execution; \textbf{[2: Integrity and Authenticity]} verifying that no unauthorized modifications occur during the update process; and \textbf{[3: Order and Liveness]} guaranteeing that the update process follows the correct sequence of actions and terminates appropriately.
These challenges are compounded by the complexity of modeling such systems in formal verification tools such as ProVerif~\citep{ProVerif-survey,proverif-manual}, which requires a precise representation of the security assumptions and the adversary model.

\label{sec:research-questions}
\label{sec:problem-description}

The challenges associated with the formal security of UniSUF raise the following research questions.
\begin{itemize}
     \item[\textbf{RQ1}:] UniSUF has certain secrets, essential for its operation, for example, cryptographic keys and disseminated software. The question is whether UniSUF's operation might expose any of its secrets.
    \item[\textbf{RQ2}:] How can we guarantee that the software that UniSUF disseminates is obtained from an authentic source and not manipulated? 
    \item[\textbf{RQ3}:] How can we guarantee that in UniSUF, it is impossible to perform a software update with obsolete software versions?
    \item[\textbf{RQ4}:] Appropriate software updates require a specified order of actions. How can we guarantee that UniSUF's update operation proceeds orderly?
    \item[\textbf{RQ5}:] How do we know that the software update process always ends?
\end{itemize}

\subsection{Our Contribution}
\label{sec:our_contribution}
We address the above challenges and research questions by conducting a formal security analysis using symbolic execution in ProVerif. Our key contributions to advancing the state of the art are as follows:
\begin{itemize}
    \item We model UniSUF’s architecture (\Cref{sec:UniSUF Architecture}) and assumptions (\Cref{ch:preliminaries}) to reflect real-world automotive systems. This model is represented in ProVerif (\Cref{ch:subproblems}) to show the formal satisfiability of essential security properties, including confidentiality, integrity, authenticity, freshness, order, and liveness.
    \item We introduce novel techniques (\Cref{sec:methods}) to simulate UniSUF in a symbolic execution environment within ProVerif, allowing us to formally analyze critical aspects, such as software authenticity and the correct sequence of operations.
    \item Through formal proofs and experimental results, we ensure that UniSUF’s update process terminates and follows the correct procedural order.
\end{itemize}

\noindent The primary outcomes of our results are as follows.
\begin{itemize}
    \item A rigorous formulation of UniSUF’s security requirements, emphasizing confidentiality, integrity, authenticity, freshness, order, and liveness in the software update process.
    \item An open-source ProVerif-based framework to formally verify UniSUF's compliance with these security guarantees.
    \item Through ProVerif's symbolic execution environment, we demonstrate that UniSUF can satisfy the proposed security requirements under realistic adversary models while considering a system architecture that represents real-world deployment.
\end{itemize}

Our results show that UniSUF’s architecture effectively prevents attacks, such as secret exposure and replay attacks while ensuring that software updates proceed in the correct sequence and terminate as expected. This demonstrates that UniSUF’s architecture, assumptions, and security requirements can be formally satisfied, providing strong assurances for its security in real-world deployments (\Cref{sec:Conclusions}).

An important clarification concerns the notion of computational cost and scalability. The effort reported in this paper concerns the formal specification and verification of requirements, rather than the execution of update protocols in deployed vehicles. Consequently, any computational overhead arises exclusively at \emph{design time}, when symbolic verification tools such as ProVerif analyze the model and proof obligations. These costs depend on model size and tool runtime, but are incurred only once during the verification process. They do not translate into runtime overhead for the vehicle platform or its backend infrastructure. Moreover, symbolic verification techniques, including ProVerif, are conceptually scalable in the sense that they reason over unbounded sessions and adversaries. Our approach, therefore, provides strong security assurances without introducing performance or scalability penalties in production systems.

To ensure the reproducibility of our results and to encourage further development, we pledge to release our solution as open-source upon acceptance of the paper, as detailed in Appendix A.

\section{Related Work}
Analyzing the security of a complex system requires consideration of potential threats.
\citet{resilientShield} proposes a security and resilience framework, i.e., Resilient Shield, utilizing a security enhancement methodology \citep{8278159} along with mitigation mechanisms from \citep{9230288} in light of an analysis of attacks targeting vehicles. 
Based on the identified attacks, the authors establish security goals and specify the directives required to achieve them.
\citet{UniSUFAnalysis} details specifications for meeting security goals with an emphasis on vehicle software updates.
The authors provide a detailed threat analysis of UniSUF for different threat scenarios, aligned with security goals~\citep{resilientShield}, and detailed requirements for the UniSUF architecture~\citep{UniSUF}.

\subsection{Formal Verification Tools}

There are formal verification tools that can either assist or automate proofs.
For instance, we can express systems and their properties as logical formulas and use theorem provers such as Coq~\citep{coq} and Isabelle~\citep{isabelle} to either interactively or automatically prove the specified properties.
However, formulating the entire implementation of complex systems is tedious and error-prone.
Instead, it can be more intuitive to verify a system based on a formal model using model checkers such as \textsc{Spin}~\citep{spin}, \textsc{Uppaal}~\citep{uppaal}, and TLA+~\citep{tla+}.
Model checkers take a formula and a model and then verify whether the formula holds within the model.
The appropriate model checker can simplify the process of deriving a system model.
For example, \textsc{Uppaal} is a model-checking environment that provides both a graphical interface and a modeling language to model real-time systems~\citep{uppaal}, which makes it well suited for time-critical systems.

For our purposes, we also need to model the adversary, to be
explicitly defined in general model checkers. In contrast, cryptographic protocol verifiers, such as ProVerif~\citep{proverif-manual}, CryptoVerif~\citep{cryptoverif}, and Tamarin Prover~\citep{tamarin}, are designed to focus on security and implicitly incorporate the adversary model.
For instance, these cryptographic protocol verifiers verify their models under the assumption of the Dolev-Yao adversary model~\citep{Dolev-Yao}.
Therefore, cryptographic protocol verifiers are more appropriate for security properties as they eliminate the need to model an adversary.

\citet{junlangwang} provides protocols to attest that the manufacturer has approved vehicle hardware. 
These protocols enable the replacement of old components with new ones and the attestation of all vehicle components during start-up.
Wang \citet{junlangwang} used formal methods to prove the correctness of his protocols. 
He defined the system model, assumptions, and requirements and used ProVerif~\citep{proverif-manual} to formally verify that his protocols fulfill the specified requirements given the system model and assumptions.
The work by Wang has been an inspiration for our own research in this area.

\citet{5gFormalVerification} used Tamarin Prover to find weaknesses in the Authentication and Key Agreement protocol used by 5G. Tamarin Prover has also been used to analyze WiFi Protected Access 2~\citep{wpa2FormalVerification} and Transport Layer Security 1.3 (TLS 1.3)~\citep{tls1.3FormalVerification}. 
In \cite{tlsProVerif}, an analysis of TLS 1.3 was performed with ProVerif~\cite{proverif-manual}, where they also looked at a privacy extension for TLS 1.3 called Encrypted Client Hello.

\subsection{Formal Verification of Automotive and IoT Software Update Protocols}

Several formal methods have been applied to analyze the security of software update mechanisms in automotive and IoT systems. A significant body of work has focused on \emph{Uptane}, which is a widely adopted framework for secure automotive software updates, considered by many in the industry as a de facto standard. For example, Kirk et al.~\cite{DBLP:journals/jlap/KirkNBSW23} develop a Communicating Sequential Processes (CSP) model of the Uptane protocol together with an attacker inspired by the Dolev--Yao model~\citep{Dolev-Yao}, and use the Failures--Divergences Refinement (FDR4) checker to derive exhaustive traces representing potential security violations. These symbolic traces are then translated into executable test cases and run against a reference Uptane implementation on a hardware testbed. Their analysis validated Uptane’s defenses against the modeled threats, but also demonstrated that certain attacks, such as freeze or spoof, become feasible when specific defenses (e.g., expiration checks) are omitted in practice. Kirk et al. illustrate how CSP-based refinement and test generation can reveal both specification-level exposures and implementation-dependent weaknesses.

In a complementary study,~\cite{DBLP:conf/raid/LorchLTC24} present a comprehensive automated verification of Uptane by combining the Kind~2 infinite-state model checker with the Tamarin cryptographic protocol verifier in an eager combination. Unlike prior Uptane analyses that faced state-space explosion or termination issues, their workflow achieves termination in most instances while covering fine-grained message structures. This analysis rediscovered all five previously known vulnerabilities and identified six new ones, all of which were acknowledged by the Uptane standards body. Their work illustrates how combining model checking with cryptographic reasoning can provide both scalability and coverage under diverse adversary capabilities, including multiple key-compromise scenarios.

Another formal analysis of Uptane is provided by~\cite{boureanu2023uptane}, who used the Tamarin prover to model version~2.0 of the protocol. Building on the threat model in the Uptane~2.0 standard, this work introduced a hierarchical set of attacker tiers, including compromise of primary or secondary ECUs and of individual repositories. 
The authors validated a set of security and privacy requirements that go beyond those specified in the standard, including properties such as agreement and temporal correctness.
They identified several flaws that were responsibly disclosed to the Uptane Alliance. While the verification required substantial computational resources, the study demonstrated that symbolic analysis of Uptane~2.0 at scale is feasible and can directly inform improvements to the evolving standard.

Earlier,~\cite{mahmood2020ota} introduced the first model-based security testing approach for automotive over-the-air~(OTA) updates, targeting the Uptane reference implementation. Their framework combined attack trees for threat modeling with an automated tool that generates and executes corresponding test cases. In a proof-of-concept demonstration, they showed a simulated attack compromising Uptane repositories with malicious firmware. While this approach provides systematic coverage of known threats, its reliance on attack trees limits its ability to capture zero-day or composite attacks, motivating later exhaustive analyses based on formal models.

Not all formal verification studies target Uptane; some address other software update protocols or different aspects of the OTA software update process. \cite{DBLP:conf/icissp/PonsardD21}, for instance, apply the Tamarin prover to an IoT-oriented firmware update scheme called \emph{UpKit}. Their work focuses on proving selected properties, most notably firmware integrity and the freshness of update requests, under a Dolev-Yao adversary. By modeling the cryptographic operations and message flows in Tamarin’s rewrite-rule framework, they verify that UpKit meets its targeted requirements, though the analysis does not cover the full range of software update properties.

Other researchers have looked at implementation-level assurance. \cite{mukherjee2021trusted} propose an Uptane-based OTA update solution deployed entirely inside a Trusted Execution Environment (TEE) on commercial off-the-shelf embedded hardware (ARM TrustZone). Using SAW, the Software Analysis Workbench, they verify code-level security properties of the Uptane client within OP-TEE and demonstrate the approach on a Raspberry Pi~3B. Their threat model assumes that the normal-world OS and network may be compromised, while the TEE and server remain trusted. This work highlights how hardware-enforced isolation can be combined with formal software analysis to strengthen OTA software update implementations.

An earlier foundational effort by~\cite{6093061} introduced a holistic formal methodology for secure OTA updates in vehicles. Their approach extended the AVATAR SysML-based framework to incorporate both security and safety requirements, and integrated formal verification into a model-based development flow. Specifically, AVATAR models of OTA update protocols were translated to ProVerif for symbolic analysis under a Dolev-Yao adversary, while safety aspects were verified using UPPAAL. This allowed properties such as secure authentication, confidentiality, and data integrity of updates to be formally checked already at the design stage. The emphasis was on combining early requirements engineering with formal security and safety verification, thereby demonstrating as early as 2011 the feasibility of security-by-design for vehicular update systems.

Our work on UniSUF diverges from these studies in both goal and approach. Rather than assessing an existing protocol for hidden flaws, we focus on \emph{formally specifying a new multi-ECU update framework and verifying that it satisfies a comprehensive set of security requirements by design}. We encode UniSUF's update procedures in ProVerif and prove, under a standard Dolev--Yao adversary model, that critical properties such as authenticity of the update source, integrity of firmware payloads, confidentiality of sensitive data, and freshness of update commands hold. Ordering is ensured through replay-protection and monotonicity checks, while liveness is argued at the design level rather than mechanically proved. Our verification approach is requirement-driven: each high-level requirement is formalized as a ProVerif query, and the framework is decomposed into interrelated sub-protocols to mirror the modular structure of  the UniSUF framework. This decomposition facilitates scalability and clarity. In positioning, UniSUF extends the requirements-oriented perspective already seen in early frameworks such as AVATAR, while differing from vulnerability-focused analyses of Uptane and its implementations (e.g., attack-tree testing, CSP/FDR model-based testing, or hybrid model checking with cryptographic reasoning). Taken together with Boureanu’s symbolic proof of Uptane~2.0 and Mahmood’s attack-tree-based framework, these efforts trace the evolution of formal verification for software update protocols, from semi-formal test generation, through symbolic and hybrid analyses, to requirements-driven verification by design for new multi-ECU frameworks.

\section{Preliminaries}
\label{ch:preliminaries}
We provide our definitions, assumptions, and requirements.

\subsection{System Settings}
\label{sec:system-settings} 

The system consists of computing entities that interact through communication channels. 
We assume the system is synchronous and that all entities can access universal time.
Every entity has a state, including its variables and all messages in its incoming communication channels. 
The entities update their states by taking atomic steps. 
Each step performs an internal computation that takes one time unit. 
These steps can also receive or send messages.
An unbounded sequence of atomic steps, $X$, denotes an execution.
For a given entity $E$, an execution of $E$ is a subsequence of $X$ from which all steps not taken by $E$ are omitted. 
Each of our studied problems is solved by a distributed algorithm.
The system entities collectively execute an algorithm by individually running a sequence of tasks. 
Each problem is divided into the sub-problems we analyse in \Cref{ch:subproblems}.
The last task in each sequence is the \emph{halt} task.

\subsection{Threat Model}
\label{sec:threat-model}

Based on the Dolev-Yao model \citep{Dolev-Yao}, the adversary has complete control over the communication between entities. In addition, message interception, injection, and modification are also possible. 
The adversary may also delay the delivery of the message by a bounded time $\eta$; therefore, communication channels are assumed to be reliable but without guarantees of FIFO ordering.
This is derived from \citet{junlangwang}.
\subsection{Cryptographic Primitives, Notations, and Assumptions}
\label{sec:crypto-primitives}

We assume access to the standard cryptographic primitives in~\Cref{tab:notations}.
We emphasize the requirement for an authenticated symmetric encryption scheme, such as AES-GCM, which ensures that the encrypted data remain confidential and are authenticated to verify the sender~\citep[Sec. 1]{rfcAesGcm}. 
This is necessary because some secrets must be authenticated, such as inputs to the Trusted Execution Environment~\citep[Tab. 1]{UniSUFAnalysis}).
To simplify our model, we define a certificate as valid if it is signed by the root certificate. For simplicity, we omit the explicit notation of these signatures in our cryptographic descriptions, assuming that all valid certificates are implicitly signed. UniSUF operates under the assumption of a trusted root certificate \citep{UniSUF, UniSUFAnalysis}. Consequently, we assume that this root certificate is securely pre-installed in the vehicle.
 
UniSUF assumes secure and reliable communication between entities.
This can be achieved, for example, by using SSH~\citep{rfcSsh} or mutual TLS~\citep{rfcTls, rfcmTls}.
Therefore, adversaries cannot eavesdrop, tamper, or replay messages, with the exception of one link (see~\Cref{sec:stp-dcp-17}) for which UniSUF uses reliable non-FIFO communication without security guarantees.
UniSUF uses cryptographic materials (see~\Cref{sec:crypto-mat}), such as symmetric keys and cryptographic signatures. 

\begin{longtblr}[
    caption = {Notations used in the communication schemes. Inspired by \citet[Table 1]{junlangwang}.},
    label = {tab:notations}
]{
    colspec = {|p{0.46\textwidth}|X|},
    width = \columnwidth,
    rowhead = 1,
    hlines,
    row{even} = {gray9},
    row{1} = {olive9}, 
    column{0}={3.25cm}
}
    Notation & Description \\
    
    $ Obj_{Key} $ & Symmetric key of type $Obj$. \\
    
    $ E_{Cert}$ & Entity $E$'s certificate is an asymmetric key pair. The key pair's public key is signed by the root certificate and only $E$ knows the private key. The private key is omitted when $E_{Cert}$ is included in a data structure or message. \\
    
    $ E_{\SK}$ & The private key belonging to $E_{Cert}$. \\
    
    $ E_{\PK}$ & The public key belonging to $E_{Cert}$. \\
    
    $ AsymEnc(Message,\, E_{\PK}) $ & Asymmetrically encrypts the given $Message$ with $E_{\PK}$, thereby creating $CipherText$. \\
    
    $ AsymDec(CipherText,\, E_{\SK}) $ & Asymmetrically decrypts the given $CipherText$ with the private key $E_{\SK}$, thereby recovering $Message$. \\
    
    $ SymEnc(Message,\, Obj_{Key}) $ & Symmetrically encrypts the given $Message$ with the symmetric key $Key$, thereby creating $CipherText$. \\
    
    $ SymDec(CipherText,\, Obj_{Key}) $ & Symmetrically decrypts the given $CipherText$ with the symmetric key $Obj_{Key}$, thereby recovering $Message$. 
    \\
    
    $ AuthSymEnc(Message,\, Obj_{Key}) $ & Encrypts the $Message$, similar to $SymEnc(Message, Obj_{Key})$, but will also include an authentication tag that hinders $Message$ from being changed, and validates if the $Obj_{Key}$ is used to encrypt $Message$. \\
    
    $ AuthSymDec(CipherText,\, Obj_{Key}) $ & Decrypts the $CipherText$, similar to $SymDec(CipherText, Obj_{Key})$, but any change to the encrypted message or any message encrypted by $Obj_{Key}' \neq Obj_{Key}$ will be detected. \\
    
    $ Hash(Message) $ & Creates a hash $H$ of $Message$, such that $Message$ cannot be retrieved from $H$. \\
    
    $ Sign(H,\, E_{\SK}) $ & Creates a signature $S$ of the hash $H$ by encrypting $H$ with $E_{\SK}$, such that decrypting $S$ with $E_{\PK}$ returns $H$, i.e., $AsymDec(S, E_{\PK}) = H$. \\
    
    $ [Message]_E $ & Represents data that is signed by $E_{\SK}$. It is shorthand for $Message\;||\; Sign(Hash(Message),$ $\,E_{\SK})$. \\
    
    $ Create_{Obj}(args) $ & A function that creates an object of type \textit{Obj}. Optional arguments $args$ can also be included. The creation details may vary with different values of $Obj$ and $args$. \\
    
    $ \Req(Obj) $ & Creates a flag for requesting an item or functionality of type $Obj$. Such flags are used in messages to model the various requests sent between entities in UniSUF (see \Cref{ch:subproblems}). \\
    
    $ \Suc(Obj) $ & Creates a flag stating that item of type $Obj$ was successfully initiated. Such flags are used in messages to model success statuses sent between entities in UniSUF (see \Cref{ch:subproblems}). \\
    
    $ i_1\;||\ldots||\;i_n $ & Represents concatenation of multiple items, more specifically from $i_1$ to $i_n$. When the three dots operator ($\ldots$) notation is used at the end of the concatenation, e.g. $i\;||\;\ldots$, it indicates that additional but unspecified items are being concatenated after $i$. Note that an item can be any data. \\
    
    $ (i_1,\, \dots,\, i_i,\, \dots) = I$ & Items $i_1$ and up to $i_i$ are extracted from the set of items $I$. The optional dots at the end signify that more items left in $I$ are ignored. \\
\end{longtblr}
\subsection{Update Rounds}
\label{sec:update-rounds}

Where applicable, cryptographic materials are assigned to individual vehicle identification numbers (VIN), $v_{id}$, and must be distributed within a specified deadline, $t_e$ (expiration time)~\citep{UniSUFAnalysis, kstrandberg}. 
The mapping is secured by appending the $v_{id}$ and $t_e$ to the cryptographic material and signing the resulting data.
We use the pair $(v_{id}, t_e)$ to refer to the software \emph{update rounds} in UniSUF. 
We choose the term rounds, rather than sessions, to avoid confusion with the term sessions used, e.g., for SSH~\citep[Sec. 2]{sshSurvey}.
When no $v_{id}$ can be specified, we omit the VIN from our update round notation and use only $t_e$.

For a given problem and its algorithm, an execution of an algorithm is denoted as an update round execution. 
We assume that message transmissions include an update round identifier. 
Therefore, entities can learn about new update rounds and associate each execution of their task sequences with an update round.
We denote this execution as an entity's execution of an update round, which is a subsequence of the update round execution.
We also assume that all entities have a persistent log of all received messages. Each message is identified by the tuple $(r, d)$, where $r$ is the update round identifier and $d$ is the cryptographic material in the message. All entities drop any message that is already in the persistent log.

\subsection{Problem Definition}
\label{sec:problem-definition}
We present our requirements for UniSUF, using the goals derived in \Cref{sec:problem-description}. 

\Crefrange{req:confidential-secrets}{req:termination} specify UniSUF at the system level. 
The \Cref{req:confidential-secrets,req:integrity-of-cryptographic-materials,req:inter-round-uniqueness,req:integrity-of-handling-events} depend on requirements specified for each UniSUF sub-problem. These requirements are presented in \Cref{ch:subproblems}. 
Namely, one derives the specifications of UniSUF sub-problems by specifying the sub-problems' \emph{set of secrets} ($S$), the \emph{set of cryptographic materials} ($\mathcal{D}$), the \emph{set of procedures} ($\{\ell_1,\ \ell_2,\ \dots\}$) that handles cryptographic materials, and the ordering constraints on the procedure invocations, which we call the \emph{handling partial order} ($\mathcal{P}(\ell)$).

\begin{metaRequirement}[Confidential Secrets]
Let $X$ be an update round execution and $S$ be the \emph{set of secrets} in $X$, which we specify per sub-problem (see \Cref{ch:subproblems}).
There is no $s_i \in S$ such that an adversary $A$ can obtain $s_i$ during $X$.
    
\label{req:confidential-secrets}
\end{metaRequirement}

\Cref{req:integrity-of-cryptographic-materials,req:inter-round-uniqueness,req:intra-round-uniqueness} 
consider a set of cryptographic materials $\mathcal{D}$, which we specify for each sub-problem in \Cref{ch:subproblems}. Each element in $\mathcal{D}$ is a pair; the first element is the cryptographic material itself, and the second element is the material's designated origin entity. As certificates are pre-existing cryptographic materials, we state that the origin entity of each certificate is the root CA. 

\Cref{req:integrity-of-cryptographic-materials} requires that the adversary must not manipulate the cryptographic materials in $\mathcal{D}$. 

\begin{metaRequirement}[Integrity of Cryptographic Materials]
UniSUF only uses cryptographic materials created by their designated origin entity, which we specify in \Cref{ch:subproblems}, and is not modified by any other entity.
\label{req:integrity-of-cryptographic-materials}
\end{metaRequirement}

\Cref{req:inter-round-uniqueness} considers procedures that handle cryptographic materials, such as material production, sending, receiving, validation, and software installation.
To safeguard the vehicle during the most critical part of the update process, the vehicle enters offline mode \citep{UniSUF, UniSUFAnalysis, kstrandberg}.
Additionally, the ECUs are normally locked for security reasons, but must be unlocked to install new software.
Therefore, we also classify ECU unlocking and vehicle offline mode activation as handling events.

Let $X$ be an execution of update round $r$, $d$ (documents) a subset of cryptographic materials, and $\ell$ a name of a procedure that handles $d$ in event $e(r,d,\ell) \in X$. In \Cref{ch:subproblems}, we list these procedures per task.
We state that $e(r,d,\ell)$ is a handling event in $X$. Note that $d \subseteq \mathcal{D}$.

\Cref{req:inter-round-uniqueness} specifies that cryptographic materials coupled with their handling procedures and associated with a specific update round are processed only during that update round; i.e., replays of round unique cryptographic materials between update rounds are not allowed. 

\begin{metaRequirement}[Inter-Round Uniqueness]
    \label{req:inter-round-uniqueness}
    Let $e(r,d,\ell)$ and $e(r',d',\ell')$ be handling events during update round executions $X$ and $X'$, respectively. 
    Suppose $(d,\ell)=(d',\ell')$.
    It holds that $r=r'$.
\end{metaRequirement}

\Cref{req:intra-round-uniqueness} specifies that cryptographic materials coupled to a specific update round are only processed once per procedure during that update round.
In other words, replays of materials in an update round are not allowed. 
\begin{metaRequirement}[Intra-Round Uniqueness]
\label{req:intra-round-uniqueness}
    Let $X$ be an update round execution and $e(r,d,\ell) \in X$ be a handling event. 
    No event $e'(r,d,\ell) \text{ exists in} \ X$.
\end{metaRequirement}

\Cref{req:integrity-of-handling-events} specifies that procedures are executed in an order that follows UniSUF's specification.

\begin{metaRequirement}[Integrity of Handling Events]
\label{req:integrity-of-handling-events}
    Let $X$ be an update round execution. 
    The occurrences of handling events, $e(r,d,\ell) \in X$, must follow a handling partial order $\mathcal{P}(\ell)$, which depends only on $\ell$, where $\mathcal{P}(\ell)$ is specified per task (see \Cref{ch:subproblems}). 
\end{metaRequirement}

\Cref{req:termination} prevents non-termination of update rounds, given our system assumptions. The termination is considered timely if all UniSUF entities take the halt task before the update round expires; if not, it is considered late.

\begin{metaRequirement}[Termination]
    \label{req:termination}
    All executions of update rounds must terminate. 
\end{metaRequirement}

\noindent Using the goals specified in \Cref{sec:problem-description}, we discuss how the requirements satisfy the goals. 
Firstly, \textbf{G1} and \textbf{G2} are covered by \textbf{Confidential Secrets}, as we ensure that both the software and the cryptography keys are part of the set of secrets $S$. 
Note that \textbf{G2} is only partially covered because  ProVerif~\citep{proverif-manual} can only prove that the adversary is unable to learn the secrets and not that each secret is only available to a certain set of entities.
Secondly, \textbf{G3} and \textbf{G4} are fulfilled by \textbf{Integrity of Cryptographic  Materials} because the software and other materials produced by the origin entities are never modified throughout the update process.
Thirdly, \textbf{Inter-Round Uniqueness} and \textbf{Intra-Round Uniqueness} together satisfy \textbf{G5} because any update round that processes the installation of software cannot be replayed to execute previous versions. \textbf{G6} is fulfilled because any VUUP can only exist in a single update round, and such an update round is directly coupled to a vehicle via the VIN (see \Cref{sec:system-settings}).
Fourthly, \textbf{G7} is fulfilled by \textbf{Integrity of Handling Events}, and lastly, \textbf{G8} is achieved by \textbf{Termination} since this requires that all executions terminate, legitimate or illegitimately.

\section{UniSUF Architecture}
\label{sec:UniSUF Architecture}
In this section, we detail the UniSUF architecture and its functionalities.

\subsection{Cryptographic Materials}
\label{sec:crypto-mat}

As detailed in \Cref{tab:crypto-materials} and previously mentioned in~\Cref{sec:crypto-primitives}, UniSUF uses different cryptographic materials. The signing process is shown in \Cref{fig:signing}.

\begin{figure}[ht]
    \centering
    \includegraphics[width=\columnwidth]{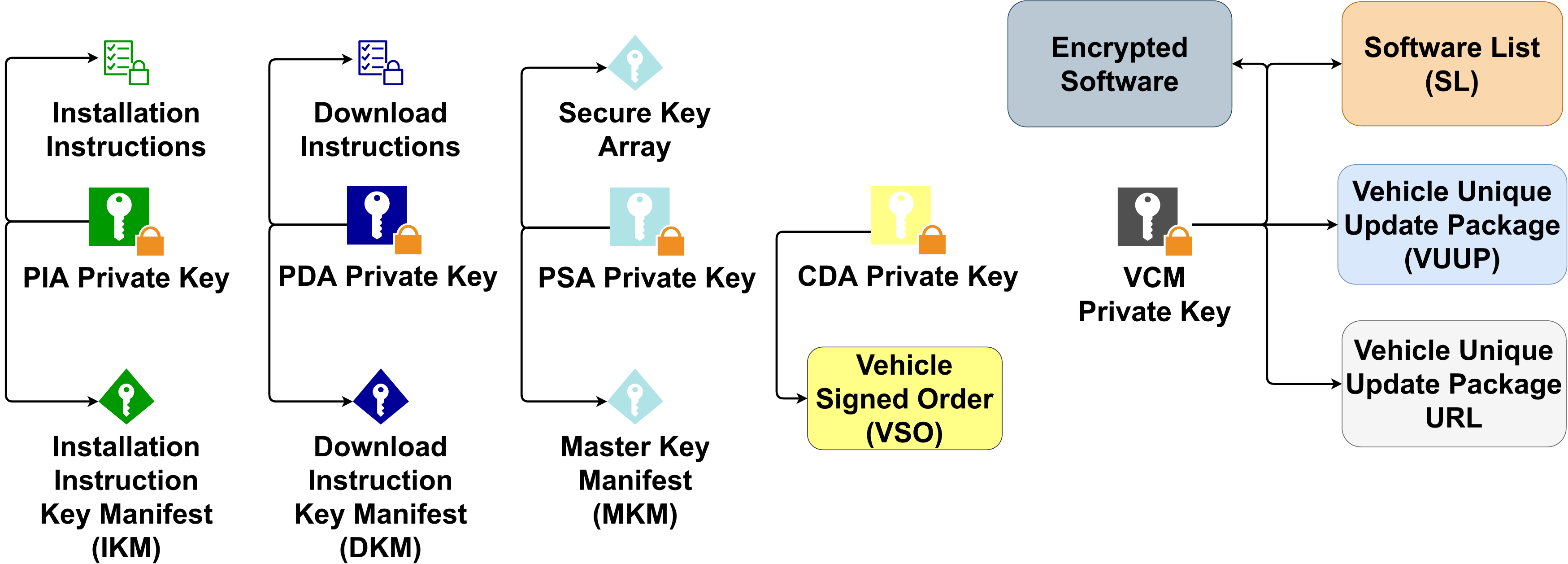}
    \caption{Different cryptographic materials in UniSUF are shown, each with its respective key. Figure derived from \citet[Fig. 3]{UniSUF}.}
    \label{fig:signing}
\end{figure}

\begin{longtblr}[
    caption = {UniSUF cryptographic materials, sorted in alphabetical order.},
    label = {tab:crypto-materials}
]{
    colspec = {|p{0.34\textwidth}|X|},
    width = \columnwidth,
    rowhead = 1,
    hlines,
    row{even} = {gray9},
    row{1} = {olive9}, 
    column{0}={3.25cm}
}
   Cryptographic Material & Description \\    
   Certificate Package & $ PDA_{Cert} \;||\; PIA_{Cert} $ \\
    
    \DI & $ Create_{\DI}([\SL]_{\VCM}) $ \\
    
    Download Instruction Key Manifest (DKM) & $ AsymEnc(\DKM_{Key},\,\Vehc_{\PK})$
    $\newline\phantom|||\; \DKM_{Policy} $ \\
    
    \II & $ Create_{\II}( $
    $ \newline\phantom{xd}[\SL]_{\VCM} \,||\, [\SKA]_{\PSA} $
    $\newline\phantom{xd}||\; [\MKM]_{\PSA} \,||\, $
    $ \newline\phantom{xd}||\; \PSA_{Cert}) $ \\
    Installation Instruction Key Manifest (IKM) & $ AsymEnc(\IKM_{Key},\,\Vehc_{\PK})$
    $\newline\phantom{xd}||\; \IKM_{Policy} $ \\
    
    Master Key Manifest (MKM) & $ \MKM_{\SA} \;||\; \MKM_{\SW} \;||\; \dots $ \\
    
    $ \MKM_{\SA} $ & $ AsymEnc(\MKM_{\SA_{Key}},$
    $ \newline\phantom{xd}\Vehc_{\PK}) $
    $ \phantom{xd}|| \;\MKM_{\SA_{Policy}} $ \\
	
	$ \MKM_{\SW}$ & $ AsymEnc(\MKM_{\SW_{Key}},\,\Vehc_{\PK}) $
    $ \phantom{xd}|| \;\MKM_{\SW_{Policy}} $ \\

    Secure Key Array (SKA) & $ \SKA_{\SA} \;|| \; \SKA_{\SW} \;|| \;\dots $\\
    
    $ \SKA_{\SA}$ & $ AuthSymEnc($
    $ \newline\phantom{xd}\SA_{Key_1},\,\MKM_{\SA_{Key}}) $ 
    $ \newline\phantom{xd}||\dots $ 
    $ \newline\phantom{xd}||\;AuthSymEnc( $
    $ \newline\phantom{xd} \SA_{Key_n},\,\MKM_{\SA_{Key}}) $ \\
   
    $ \SKA_{\SW}$ & $AuthSymEnc(\SW_{Key_1},$
    $\newline\phantom{xd} \MKM_{\SW_{Key}}) $ 
    $ \newline\phantom{xd}||\;\dots $ 
    $ \newline\phantom{xd}||\;AuthSymEnc(\SW_{Key_n},$
    $\newline\phantom{xd} \MKM_{\SW_{Key}}) $\\

    \SW & $ Version \;||\; Content $\\
    $ \SW_{Encased}$ & $[SymEnc([\SW]_{Supplier}, $
    $\newline\phantom{xd} \SW_{Key})]_{\VCM} $\\
  
  Vehicle Unique Update Package (VUUP) & $ \VCM_{Cert} \;||\; [\VUUP_{Content}]_{\VCM} $\\
  
  $ \VUUP_{Content}$ & $CertificatePackage $
  $ \newline\phantom{xd}||\; [SymEnc(\DI,$
  $\newline\phantom{xdxdxd}\DKM_{Key})]_{\PDA} $
  $ \newline\phantom{xd}||\; [\DKM]_{\PDA} $
  $ \newline\phantom{xd}||\; [SymEnc(\II,$
  $\newline\phantom{xdxdxd}\IKM_{Key})]_{\PIA} $
  $ \newline\phantom{xd}||\; [\IKM]_{\PIA} $\\

\end{longtblr}

UniSUF uses Vehicle Unique Update Packages (VUUP) to install vehicle updates. 
A VUUP is an update package produced by UniSUF for a specific vehicle. 
It contains all the necessary cryptographic keys, certificates, and instructions for the software update.
The internal structure of a VUUP file is shown in~\Cref{fig:vuup}.
The VUUP does not contain the actual software files; instead, \DI are included to specify where software files can be downloaded.
The specific implementation of \DI is unspecified but can be seen as URLs to the software update files. 
Note that the \DI is encrypted by a unique session key coupled to its update round. 
This key is retrieved from the Download Instruction Key Manifest (\DKM).

As shown in ~\Cref{fig:encryption}, a key manifest consists of a symmetric session key that has been asymmetrically encrypted, accompanied by a policy defining the key's usage (see \DKM, \IKM, and \MKM). 
Note that the UniSUF term session is equivalent to update rounds (see \Cref{sec:update-rounds}). 
In \Cref{tab:crypto-materials}, none of the key manifests are signed, to remain consistent with the notation used by~\citet{UniSUF}. However, when the key manifests are transmitted during tasks (see \Cref{ch:subproblems}), all key manifests are signed with the certificates according to \Cref{fig:signing}.

Additionally, a VUUP includes \II encrypted by the session key from the Installation Instruction Key Manifest (\IKM).
The \II contains the diagnostic instructions for installing software, the Master Key Manifest (\MKM), and the Secure Key Array (\SKA), as shown in~\Cref{tab:crypto-materials}.
To guide task-specific requirements of Confidential Secrets (see~\Cref{req:confidential-secrets}), we define the notations $\MKM_{Key}$ and $\SKA_{Key}$ to refer to all keys in \MKM and \SKA respectively. 

Thus, the DKM and IKM session keys are packaged into key manifests (i.e., a master key plus a policy). In contrast, the MKM can contain multiple master keys, whereas the DKM and IKM each contain only one.

\citet{kstrandberg} defines encased software as software that undergoes a multi-layered protection process. Initially, the software is signed by the software supplier to ensure its integrity and authenticity. Next, as part of UniSUF, it is encrypted to ensure confidentiality. Finally, the encrypted software is secured with an additional signature, ensuring that the encrypted package can be validated to avoid initiating the decryption process in case of validation failure. The word encased should not be mixed up with the term encapsulated used by \citet{UniSUF} to represent materials contained in the VUUP file. 

\begin{figure}[ht]
    \centering
    \includegraphics[width=\columnwidth]{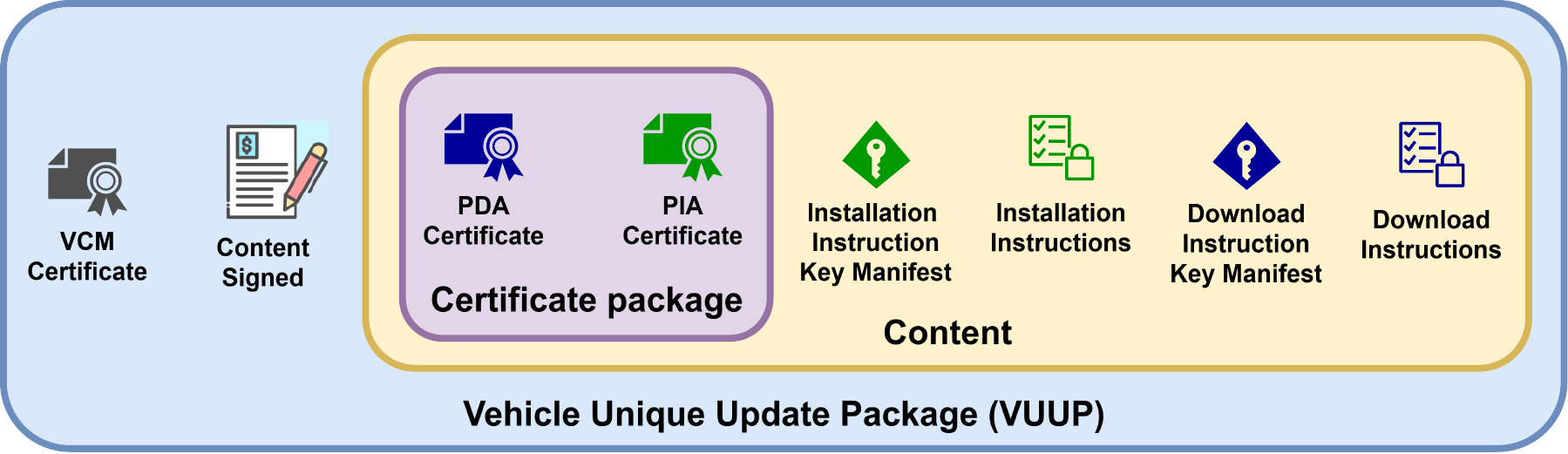}
    \caption{Internal structure of a VUUP file. The blue items are used in the download process, while the green ones are used for the installation process. Note that the VUUP content has been signed by $\VCM_{Cert}$. The figure is derived from \citet[Fig. 3]{UniSUF}.}
    \label{fig:vuup}
\end{figure}

\begin{figure}[ht]
    \centering
    \includegraphics[width=\columnwidth]{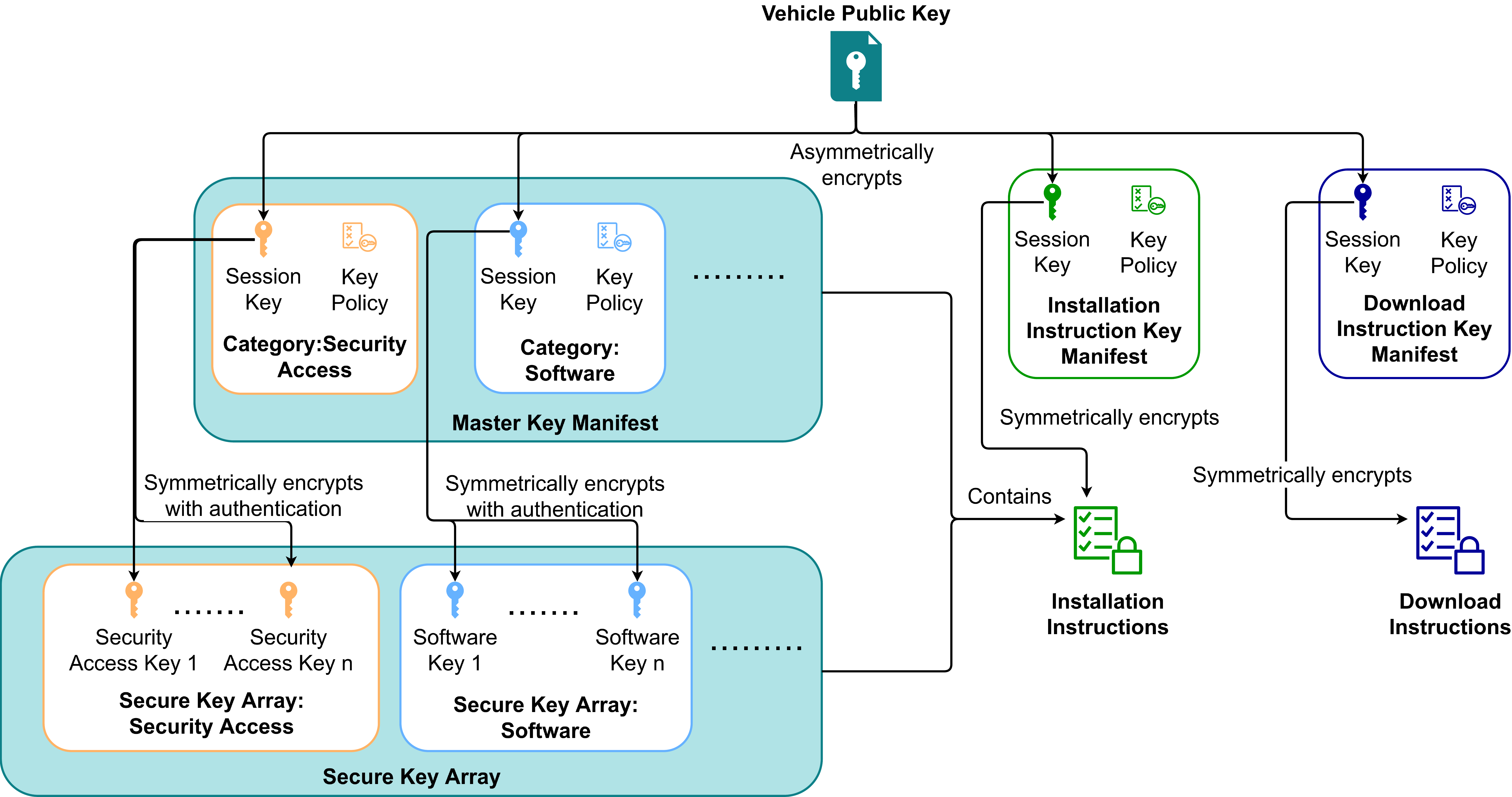}
    \caption{Different cryptographic materials in UniSUF and how they are encrypted. Figure derived from \citet[Fig. 2]{UniSUF}.}
    \label{fig:encryption}
\end{figure}

UniSUF uses the term category to classify master keys based on purpose, such as decrypting software files, or unlocking ECUs to enable update capabilities. 

The Vehicle Signed Order (VSO) is a readout per vehicle that contains detailed information about the vehicle, such as the onboard software versions.
UniSUF uses this information and the latest available software versions to construct a software list containing the software files and configurations for a specific and vehicle unique software update.
\DI and \II are then created based on the software list.

The software itself is located in external sources and not present in the actual VUUP file. 
The software files are signed by the software suppliers and associated with version numbers to prevent installations of older software versions \citep{kstrandberg}. 
UniSUF validates the supplier signature, further encrypts the software, and appends another signature.
Finally, the signed encrypted software is uploaded to the software repository.

\subsection{System Entities}
\label{sec:entities}
As shown in~\Cref{fig:entity_diagram}, UniSUF consists of three entities: Producer, Consumer, and the Software Repository~\citep[Sec. 4]{UniSUFAnalysis}. Additionally, UniSUF interacts with external entities, such as Software Suppliers and ECUs~\citep{kstrandberg}. UniSUF uses redundant entities and interacts with multiple Software Suppliers and vehicles with multiple ECUs ~\citep{UniSUF, UniSUFAnalysis}. 
However, for our proof, we consider a simplified system in which each vehicle has exactly one Consumer and one ECU. Additionally, all vehicles communicate with exactly one Producer and one Software Repository, and there exists only one Software Supplier.

\begin{figure}[ht]
    \centering
    \includegraphics[width=\columnwidth]{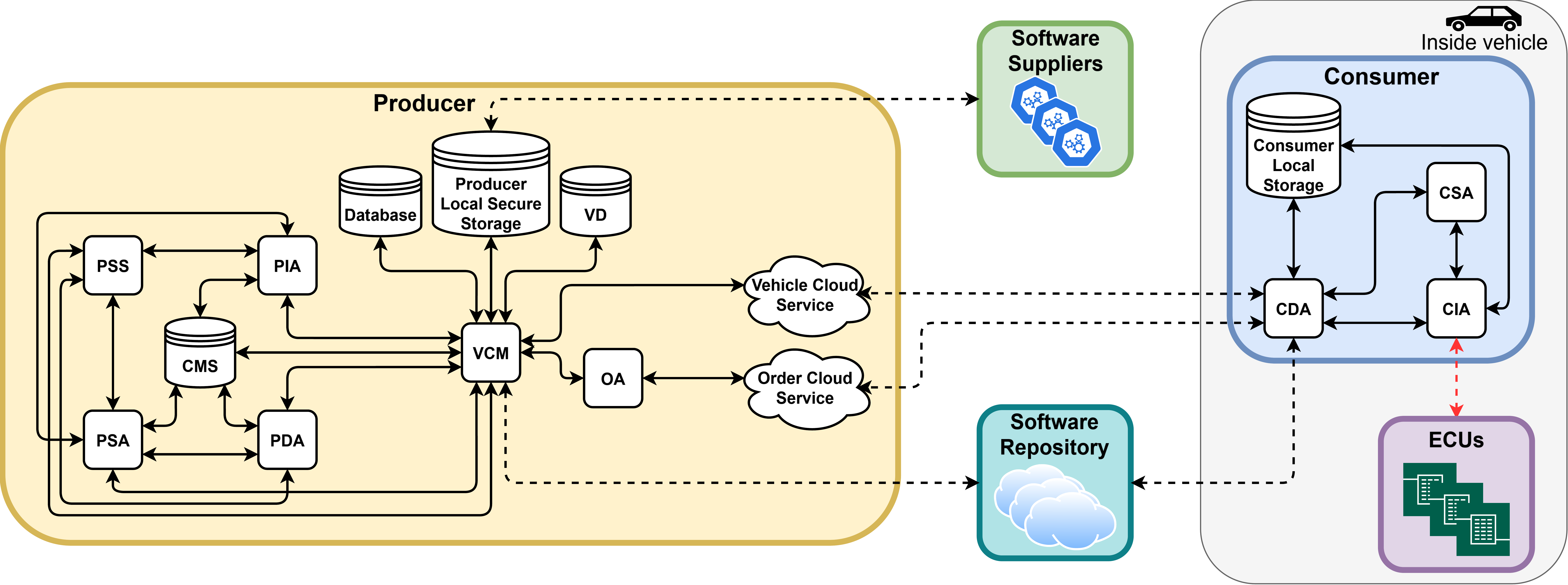}
    \caption{Diagram of all the entities and their communication in UniSUF. Dotted arrows denote communication channels between entities from different modules. The black arrows denote secure communications, while the sole red arrow denotes an insecure communication link.}
    \label{fig:entity_diagram}
\end{figure}

\subsubsection{Software Repository}
The UniSUF Software Repository is an entity that represents multiple distributed repositories \citep{UniSUF}, mainly responsible for software storage, where each software file is associated with a specific download URL. However, in offline cases, the software can also be stored on Network-Attached Storage (NAS) or a USB stick ~\cite {kstrandberg}. 

\subsubsection{Producer}
\begin{figure}[ht]
    \centering
    \includegraphics[width=\columnwidth]{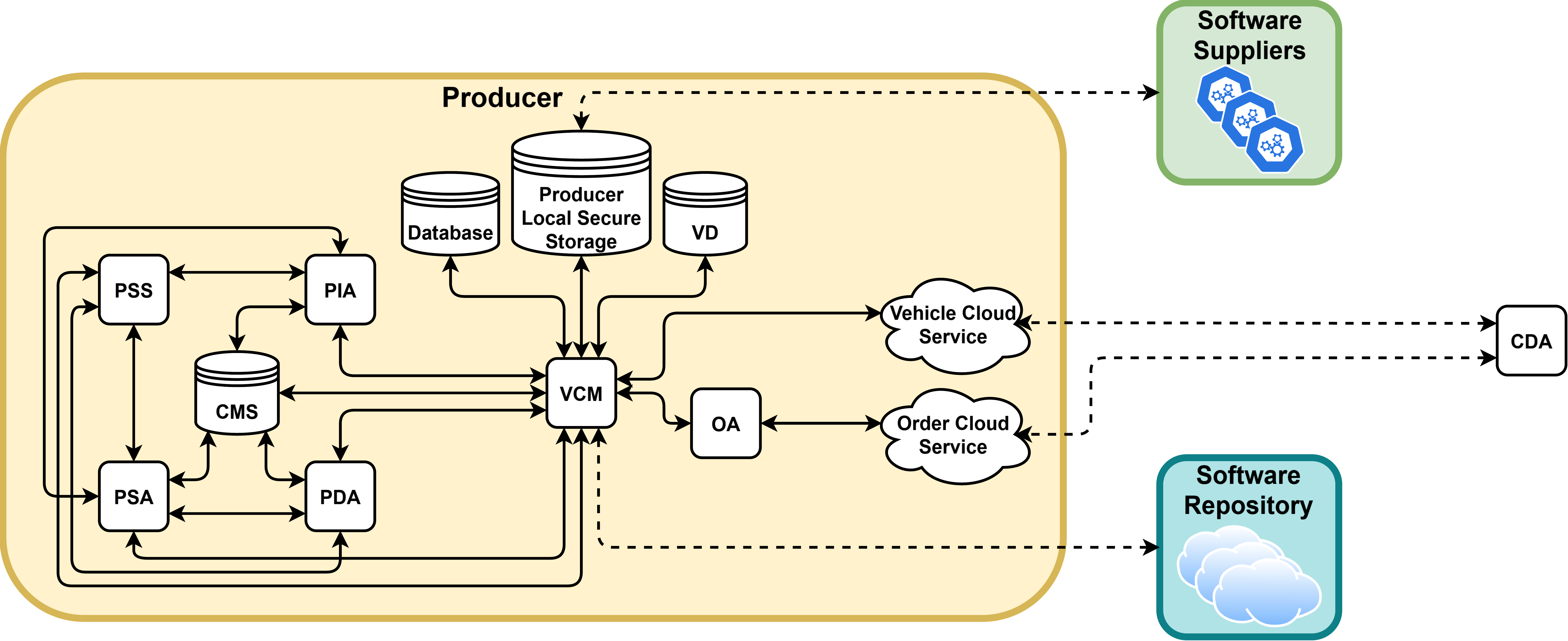}
    \caption{The communication flow between the Producer entities. Dotted arrows denote communication channels between a Producer entity and a non-Producer entity.}
    \label{fig:producer_flow}
\end{figure}

\citet{UniSUF} define the Producer as a collection of different sub-entities responsible for producing and securing software update packages.  
The Producer is also responsible for disseminating software to different storage repositories. 
As shown in~\Cref{fig:producer_flow}, UniSUF has the following 12 producer entities (cf. \cite{UniSUF}, Table 1).

\begin{itemize}
    \item \textbf{Producer Local Secure Storage} -- Stores software files received from software suppliers.
    
    \item \textbf{Version Control Manager (VCM)} -- Coordinates the producer entities and finalizes the creation of the VUUP file for a specific vehicle.

    \item \textbf{Producer Signing Service (PSS)} -- Produces signatures on behalf of other entities.
    
    \item \textbf{Cryptographic Material Storage (CMS)} -- Securely stores cryptographic materials, such as keys for decrypting software or unlocking ECUs, as well as certificates. 
  
    \item \textbf{Producer Security Agent (PSA)} -- Generates session keys and retrieves additional keys from the \CMS, such as keys for unlocking ECUs, performing privileged diagnostic requests, and decrypting software. These additional keys are further encrypted with session keys, which are, in turn, encrypted using a vehicle-specific public certificate.
    
    \item \textbf{Database} -- Store URLs to software files located in software repositories.

    \item \textbf{Order Cloud Service} -- Stores the Vehicle Signed Order (VSO) in a queue and URLs to VUUP files.
    
    \item \textbf{Order Agent (OA)} -- Verifies the validity of incoming VSOs and starts the updating process by forwarding the request to the Version Control Manager (\VCM). 
    
    \item \textbf{Producer Download Agent (PDA)} -- Creates download instructions from the software list received from \VCM. Later in the process, the download instructions are encrypted, signed, and sent to \VCM.
    
    \item \textbf{Producer Installation Agent (PIA)} -- Creates installation instructions from the software list received from \VCM. The installation instructions are bundled with cryptographic material, encrypted, signed, and sent to \VCM.
    
    \item \textbf{VIN Database (VD)} -- Stores data about unique vehicles and software versions. 

    \item \textbf{Vehicle Cloud Service} -- Stores the VUUP files and \VCM certificates that can be downloaded by the Consumer via a VUUP URL.
    
\end{itemize}
 
\subsubsection{Consumer}
\label{sec:consumer}
\begin{figure}[ht]
    \centering
    \includegraphics[width=0.8\columnwidth]{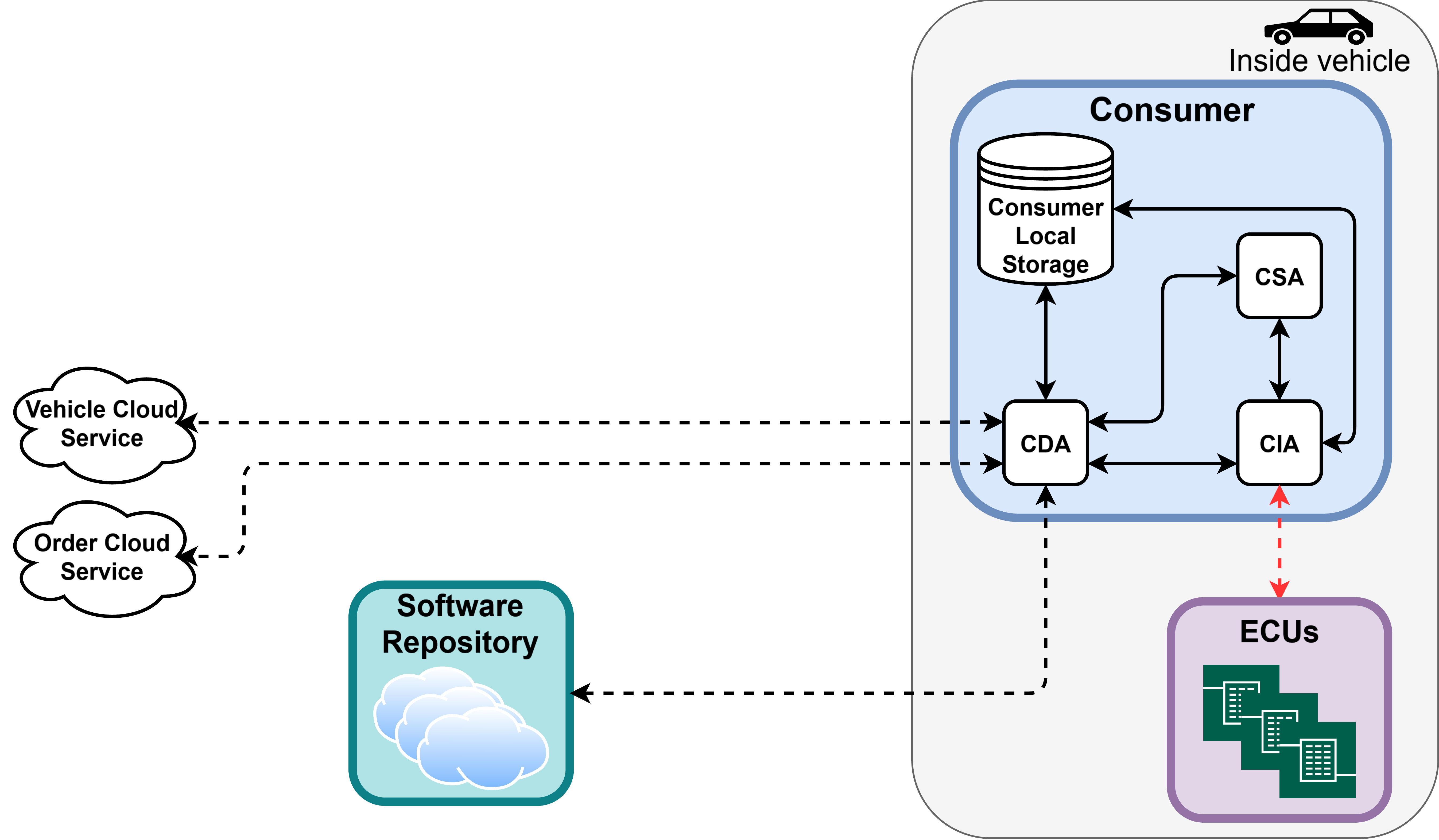}
    \caption{The communication flow between the Consumer entities is denoted by the solid arrows. Dotted arrows denote communication between a Consumer and a non-Consumer entity. The red arrow denotes an insecure communication channel. All other channels are secure.}
    \label{fig:consumer_flow}
\end{figure}

The Consumer is the Producer's counterpart and is responsible for decapsulating the VUUP and the processing of the download and installation instructions~\citep[Sec. 4.2]{UniSUF}. 
UniSUF has the following four Consumer entities as shown in~\Cref{fig:consumer_flow}.

\begin{itemize}
    \item \textbf{Consumer Local Storage} -- Stores signed VUUP, and signed and encrypted software files.

    \item \textbf{Consumer Download Agent (CDA)} -- Executes the download instructions and retrieves software from software repositories.  
    
    \item \textbf{Consumer Security Agent (CSA)} -- Provides a trusted execution environment where, e.g., decryption can occur in an isolated and secure space.
    
    \item \textbf{Consumer Installation Agent (CIA)} -- Executes the installation instructions and then streams the decrypted software to the unlocked ECUs with the help of \CSA.

\end{itemize}

\subsubsection{Software Suppliers} 

Software suppliers create and deliver software to the \PA for the installation in different vehicles~\citep[Sec. 3.3]{UniSUF}. 
The software suppliers also sign their software to provide authenticity.
We assume a simplified model that considers a single software supplier supplying software to a single ECU (see \Cref{sec:ecus}). 

\subsubsection{The Electronic Control Unit}
\label{sec:ecus}
An Electronic Control Unit (ECU) is a vehicle computer responsible for various tasks, from simple signal processing to more advanced functionality, for instance, an infotainment system running various applications.
For our simplified model, we assume a system with only one ECU.
The ECU is first unlocked and put in programming mode by using security access and a secret key~\cite[Sec. 4.2]{UniSUF}, to allow the ECU to receive and install software with Unified Diagnostic Services (UDS)~\citep{kstrandberg, IEEEref:ISO14229}. 

\subsubsection{Adversary}
As shown in \Cref{fig:architecture-with-adversary}, the adversary is based on the Dolev-Yao model, assuming an adversary with access to the communication channels. 

The adversary is actively present on all communication links in the system.
However, most of these links, except the one depicted in red, are secure and reliable communication channels; that is, the adversary cannot interfere, according to Dolev-Yao.
The one exception is the link between the Consumer and the ECUs, which is a reliable but not a secure communication channel, where the adversary can potentially read, modify, delete, or insert messages.

\begin{figure}[ht]
    \centering
    \includegraphics[width=\columnwidth]{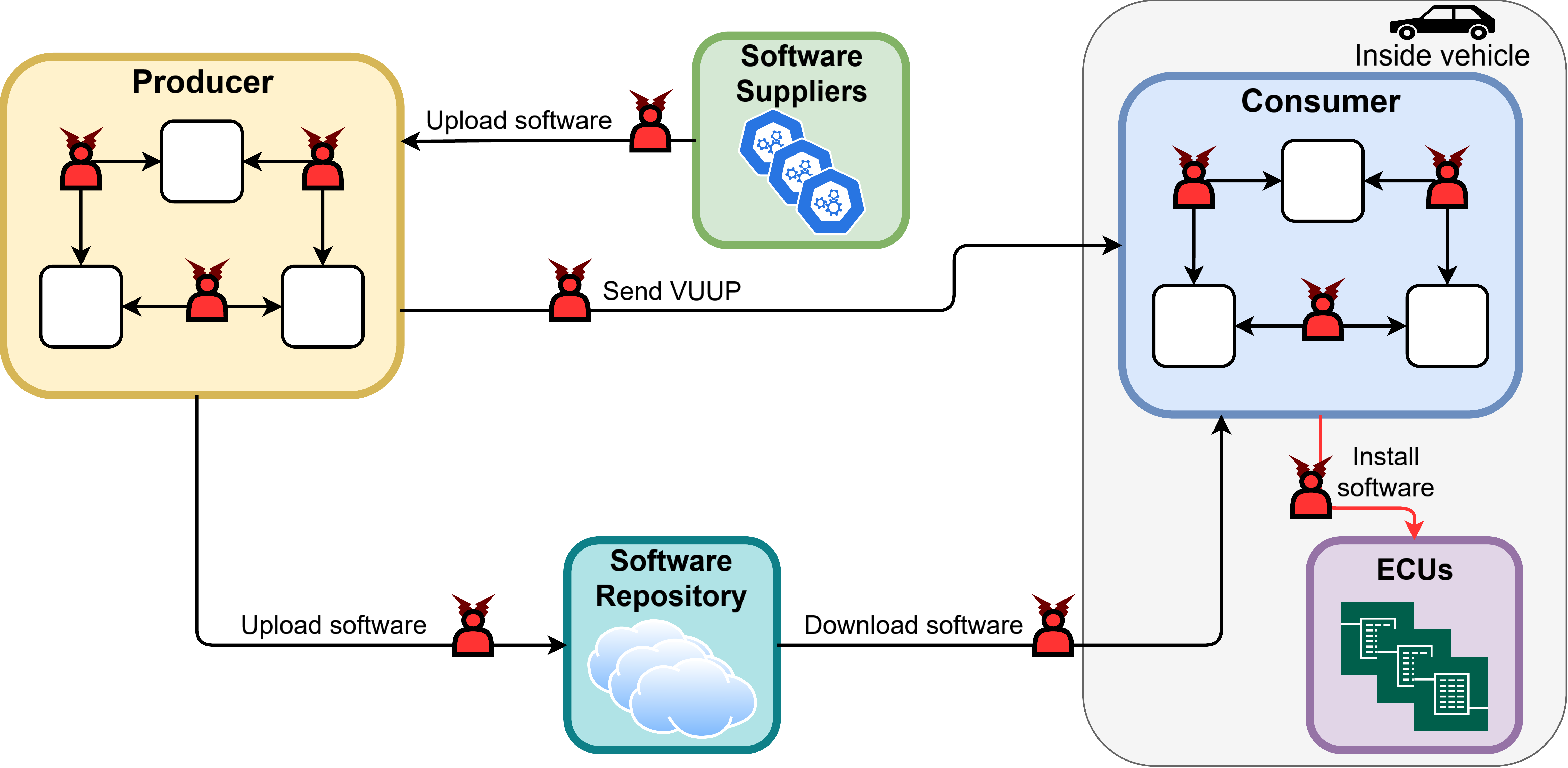}
    \caption{The illustration depicts the presence of adversaries on the high-level UniSUF architecture shown in \cref{fig:highlevel-arch}, where the adversaries are represented by the red devils.}
    \label{fig:architecture-with-adversary}
\end{figure}
\subsection{Modelling UniSUF}
\label{sec:modelling-unisuf}
This section explains the modeling of entities and their respective executions in UniSUF.

\subsubsection{Execution of System Entities}
Each system entity runs a sequence of tasks for a given problem.
The entity running the initial user-invoked task is called the \textit{initiator}; illustrated in \Cref{fig:intitiator_entity}. 
All other entities are listeners (see \Cref{fig:listening_entity}) since their first task is the listening task. This task listens for an initiation message specified for each listener.
The listening task invokes the next task in
sequence and a new listening task, enabling concurrent executions of the task sequence. We divide listeners into two categories: a passive listener that will wait for an initiation message and an active listener that will request an initiation message.  

\begin{figure}[!ht]
    \centering
    \includegraphics[width=\columnwidth]{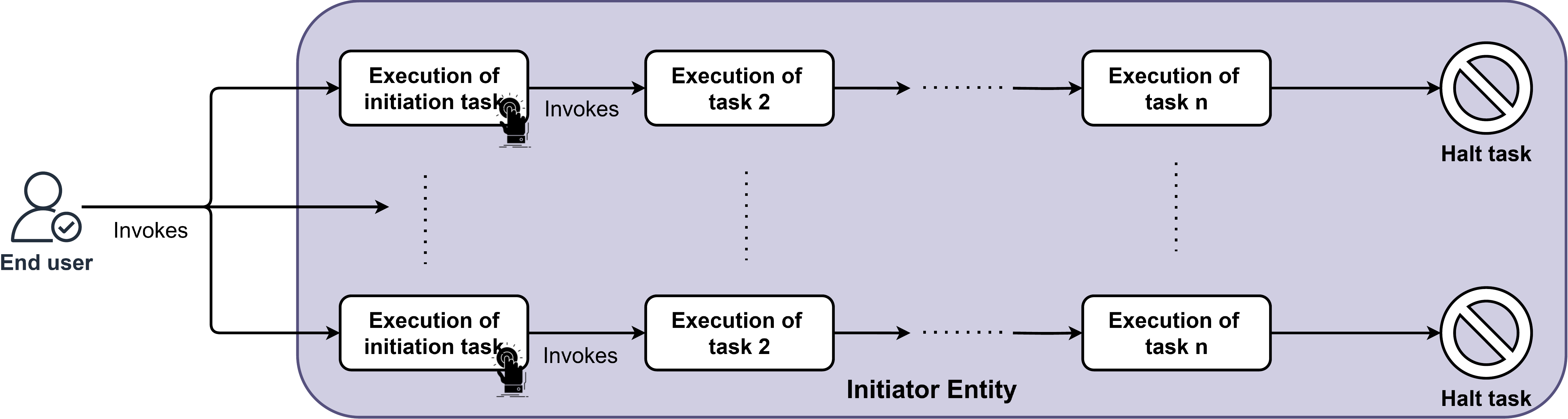}
    \caption{The initial user-invoked task begins an execution of the initiator entity's tasks. The initiation tasks are marked with the symbol of a hand pressing a button.}
    \label{fig:intitiator_entity}
\end{figure}

\begin{figure}[htbp]
    \centering
    \includegraphics[width=\columnwidth]{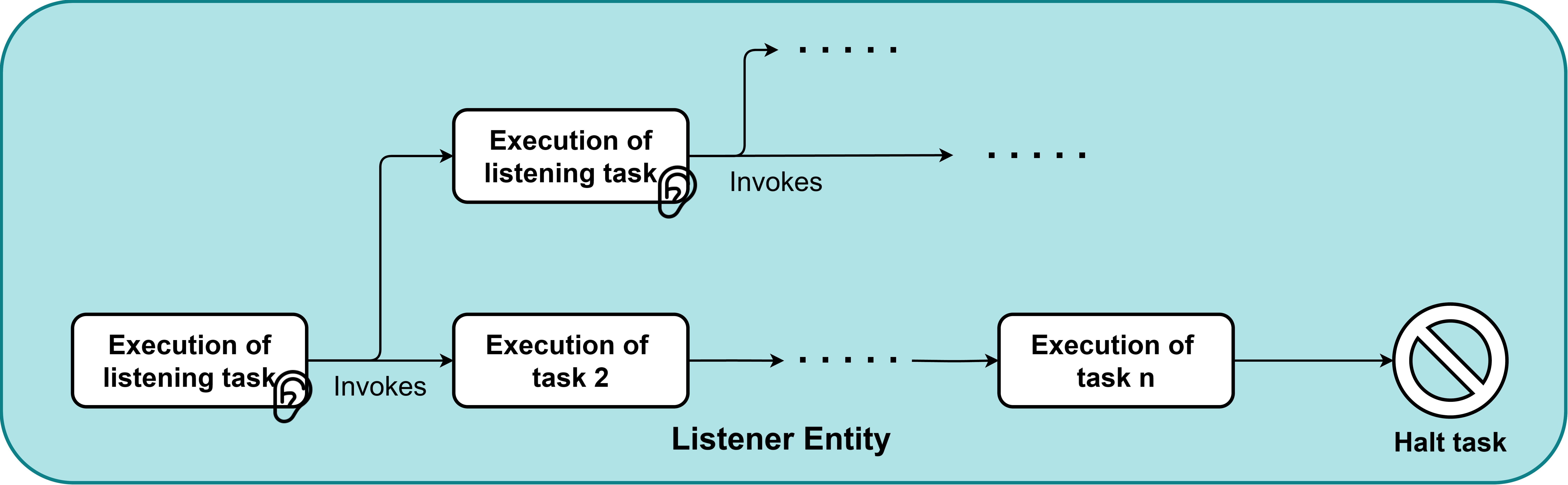}
    \caption{The listening task invokes a copy of itself to continue listening on a new execution. It also begins an execution of the listening entity's tasks. The listening tasks are marked with a symbol of an ear.}
    \label{fig:listening_entity}
\end{figure}

\subsubsection{Lifecycle of Update Rounds}
\label{sec:lifecycle-of-update-rounds}
An update round execution begins when the round identifier is created during the initiation task. Throughout this round, listener entities start executing once their listening task receives a message containing the round identifier. 

As shown in \Cref{fig:initiator_entity_context,fig:listening_entity_context}, an entity maintains a context for each update round.
This context contains the update round identifier and the cryptographic materials (see \Cref{sec:crypto-mat}) used throughout the execution of the round. The context is passed along the task sequence and can be updated by each task. 

\begin{figure}[!ht]
    \centering
    \includegraphics[width=\columnwidth]{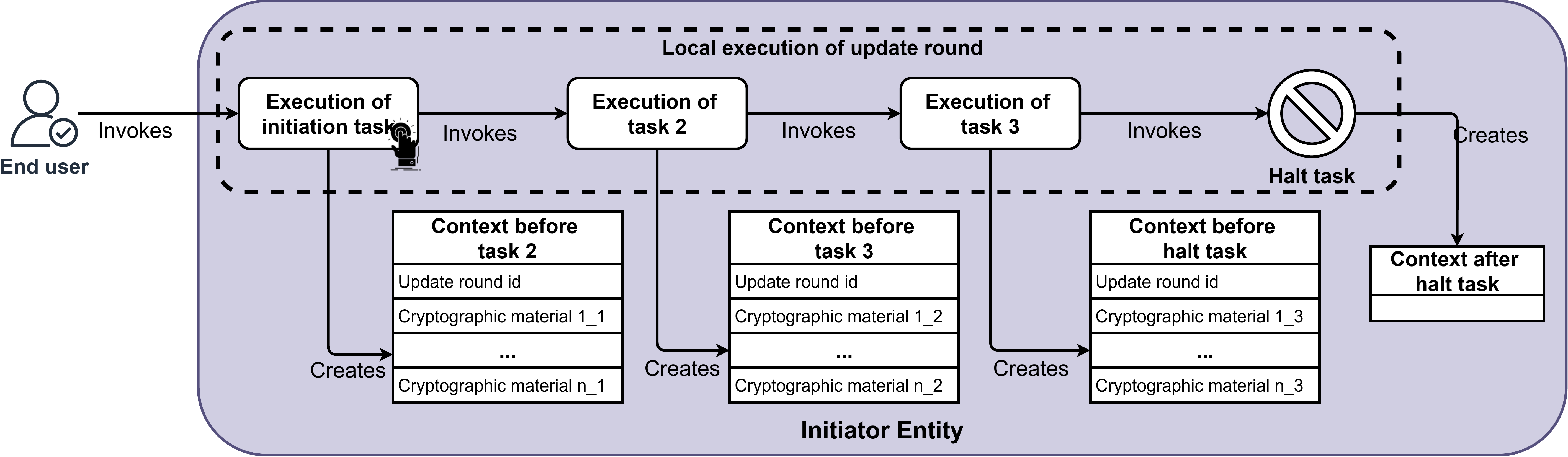}
    \caption{The figure illustrates an example of executing an initiator entity's task sequence. For each task, we also show what the context can contain.}
    \label{fig:initiator_entity_context}
\end{figure}

\begin{figure}[htbp]
    \centering
    \includegraphics[width=\columnwidth]{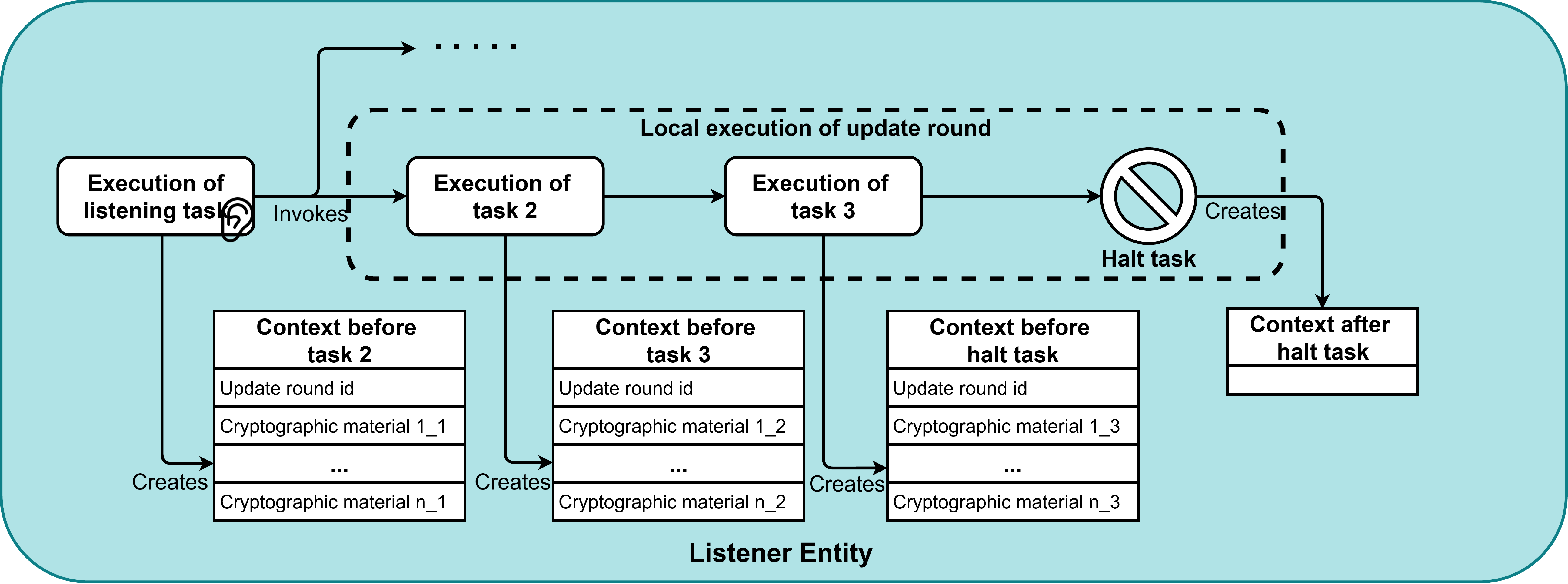}
    \caption{The figure illustrates an example of an execution of a listening entity's task sequence. For each task, we also show what the context can contain.}
    \label{fig:listening_entity_context}
\end{figure}

Each update round identifier encodes its expiration time 
and when such expiration occurs, all entities \textit{halt} their local execution of the update round,
by removing the context associated with the round and ignoring any further messages related to that round.
The update round is terminated once all entities have halted their execution of the update round. 

\subsubsection{Passing Contexts Across Segments of Task Sequences}
\label{sec:passing-contexts-task-sequences}

We divide the main problems into sub-problems, where each sub-problem uses a subset of the system entities.
In \Cref{fig:problem_division}, \emph{sub-problem 1} uses the \emph{initiator} and \emph{listener entity 1}. \Cref{fig:problem_division} shows that an entity's task sequence can be segmented so that each segment belongs to a single sub-problem. For example, \emph{listener entity 1}'s \emph{task 3} and \emph{halt task} form a segment that belongs to \emph{sub-problem 2}.

\begin{figure}[htbp]
    \centering
    \includegraphics[width=\columnwidth]{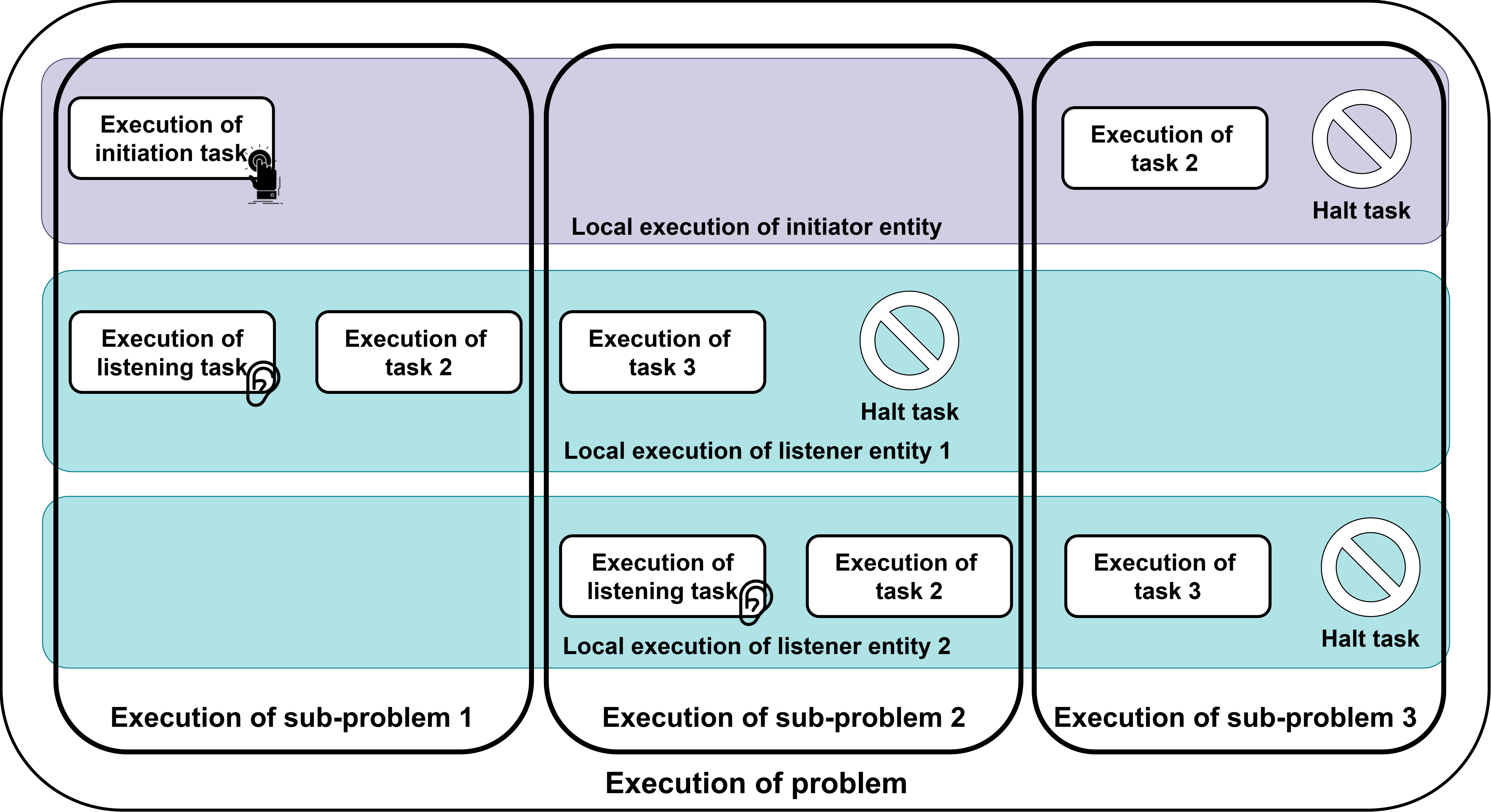}
    \caption{The execution of an example problem. This problem considers three entities, whereas one is the initiator entity. The problem is divided into three sub-problems.}
    \label{fig:problem_division}
\end{figure}

When an entity starts working on a sub-problem for an update round, it possibly already has an update round context. For instance, if the entity has previously participated in other sub-problems for the same update round, i.e., it has previously run either the initiation task or listening task for the same update round. 

\Cref{fig:problem_division_context} shows that \emph{listening entity 2} has a context at the start of the sub-problem since it starts on \emph{task 2}, i.e., it has previously run the \emph{listening task}. The other listener, \emph{listener entity 1}, does not start with any context, since it first needs to run its listening task at the start of the sub-problem.

Once a segment's execution finishes, the last context of this segment is passed to the next segment as the starting context.
Therefore, we identify which entities have previously executed task segments in the same update round for a given sub-problem.
For these entities, we specify the cryptographic materials present in their starting contexts (see \Cref{ch:subproblems}).

\begin{figure}[htbp]
    \centering
    \includegraphics[width=\columnwidth]{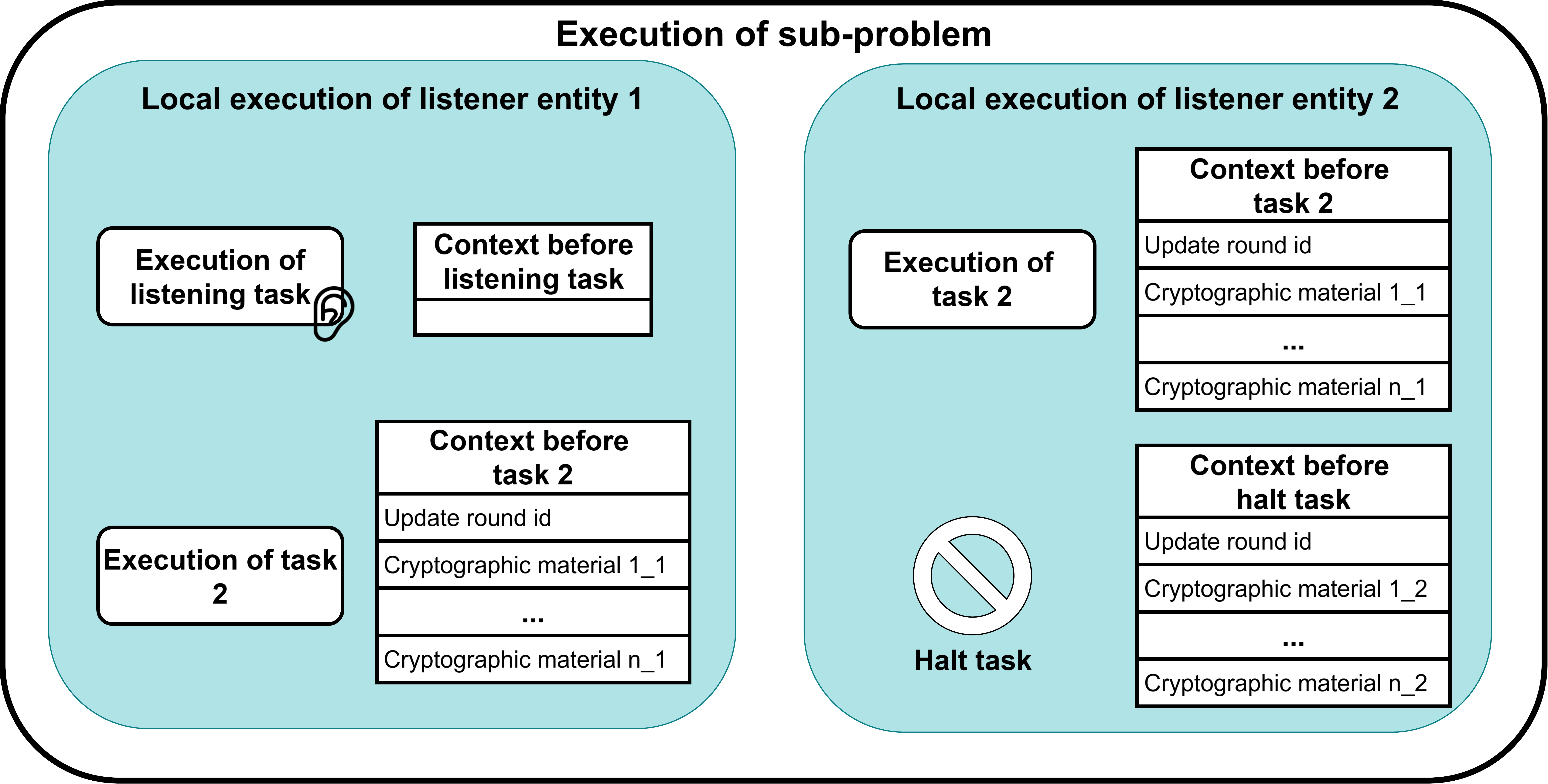}
    \caption{Example of two entities solving a sub-problem together. The figure also shows how each task in the entities is connected to a context.}
    \label{fig:problem_division_context}
\end{figure}

\subsubsection{Well-Known Addresses of UniSUF Entities}

We assume that all consumer entities, as defined in \Cref{sec:consumer}, are aware of the specific Vehicle Identification Number (VIN) of the vehicle they occupy. Each vehicle has a pre-stored public vehicle certificate containing metadata, including VIN-related information~\cite {kstrandberg}.

Additionally, we assume that all entities in UniSUF know each other's addresses, e.g., through existing protocols, such as DNSSEC, enabling entities to securely obtain IP addresses from domain names~\cite {dnssecAnalysis}. 

\section{Sub-Problems}
\label{ch:subproblems}

From the UniSUF specifications~\citep{UniSUF}, we analyse two problems: the \emph{software preparation} and the \emph{software update}.
The software preparation process (cf. \Cref{sec:preparation}) involves safeguarding software received from suppliers, ensuring the software's confidentiality and authenticity for it to be securely incorporated into future vehicle updates.
Additionally, the software update is further divided into the \emph{encapsulation} and the \emph{decapsulation} stages (cf. \Cref{sec:encapsulation,sec:decapsulation}) emphasizing securely updating vehicular software and configurations.
\citet{UniSUF} specifies an additional stage after decapsulation: the post-state.
The post-state encompasses installation reports and logs, potentially affecting upcoming software updates. However, we do not consider the post-state in our simplified model. 

Consequently, the main tasks in UniSUF are dissected into sub-problems, based on the steps provided by \citet{UniSUF}, where each sub-problem, has a description, a diagram depicting its algorithm, a communication scheme, and assumptions and requirements.

\subsection{Preparation}
\label{sec:preparation}
An overview of the entities involved in the preparation stage and their communication links can be seen in \Cref{fig:preparation}. 

In the preparation stage, software supplier files are processed before being used for software updates \citet{UniSUF}. We have divided this stage into two sub-problems: the \textit{Secure Software Files} and the \textit{Upload Software Files}. The focus of the first sub-problem is the encrypting and signing of the software, whereas the second sub-problem handles the software upload and the creation of software URLs. 

The preparation stage manages both software files applicable to multiple vehicles and unique files for specific vehicles \citet{kstrandberg}.
In our model, we assume the case when software is being prepared for multiple vehicles and we additionally assume that the update round is solely identified by the expiration time $t_e$ (see \Cref{sec:update-rounds}). 

\begin{figure}[ht]
    \centering
    \includegraphics[width=\columnwidth]{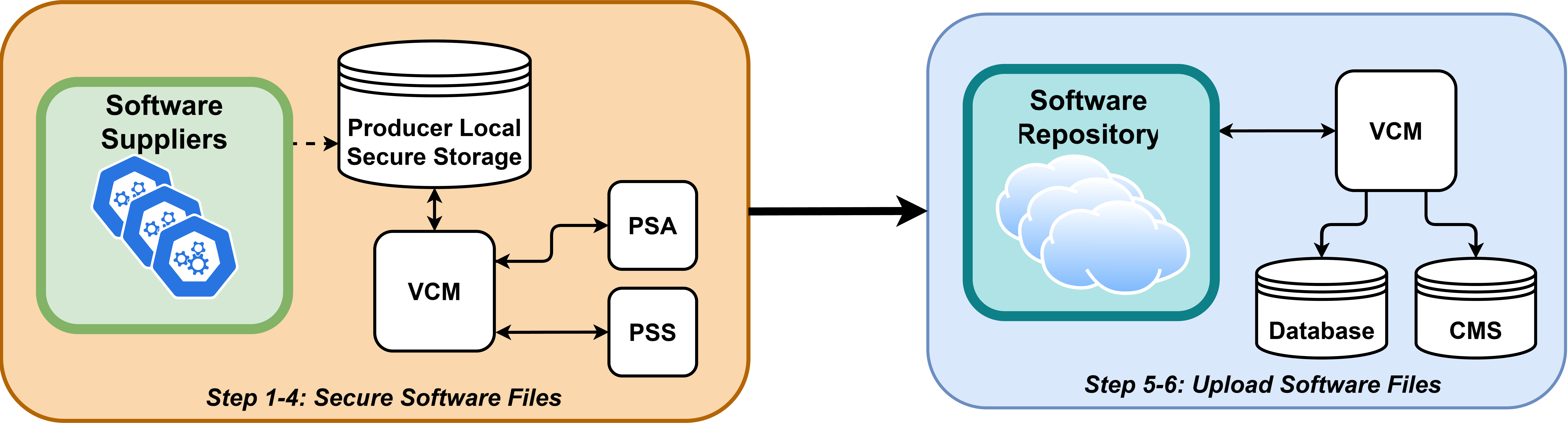}
    \caption{An overview derived from \cite[Sec. 3.3]{UniSUF} for the communication between all entities involved in the preparation stage further divided into two sub-problems.}
    \label{fig:preparation}
\end{figure}

\subsubsection{Step 1--4: Secure Software Files}
\label{sec:prp-1-4}
The initial phase of the update process focuses on securing software files (see \Cref{fig:secure_software_files,tab:secure_software_files}). The software supplier signs the software files before sending them to local storage on the producer side. \VCM validates the signature and encrypts the software using a symmetric key obtained from \PSA; this key is referred to as $Software_{Key}$. The encrypted software is then signed with the \VCM certificate to finalize the $\SW_{Encapsualted}$ assembly.

\begin{figure}[ht]
    \centering
    \includegraphics[width=\columnwidth]{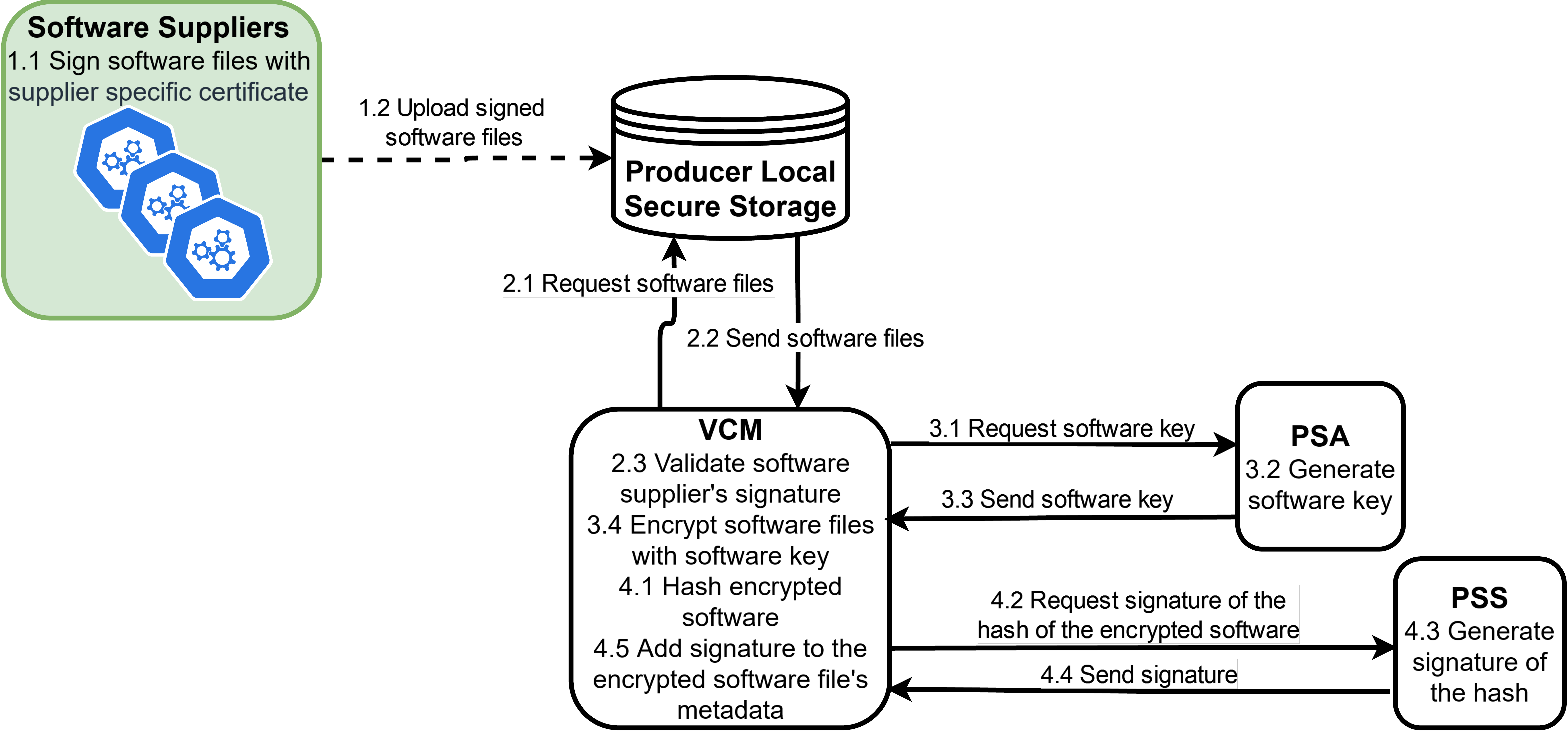}
    \caption{Diagram of the sub-problem \textit{Secure Software Files}.}
    \label{fig:secure_software_files}
\end{figure}

\begin{table}[ht]
    \centering
    \begin{tabular}{lll}
    \midrule
    (1.2)& $ \Prod \to \PLS $ & $ [\SW]_{Supplier} $ \\
    \\
    \hdashline
    \\
    (2.1)& $ \VCM \to \PLS $ & $ \Req(\SW) $ \\
    
    (2.2)& $ \PLS \to \VCM $ & $ [\SW]_{Supplier} $ \\

    (3.1)& $ \VCM \to \PSA $ & $ \Req(\SW_{Key})$ \\
    
    (3.2)& $ \PSA \to \VCM $ & $ \SW_{Key} $ \\
    
    (4.2)& $ \VCM \to \PSS $ & $ SoftwareHash$ \\
    && $SoftwareHash$ = $Hash($ \\
    && $ SymEnc([\SW]_{Supplier},$ \\
    && $\newline\phantom{xd}\SW_{Key})) $ \\
    
    (4.4)& $ \PSS \to \VCM $ & $SignedSoftwareHash$\\
    && $SignedSoftwareHash$ = \\
    && $\newline\phantom{xd}Sign(SoftwareHash,$\\
    && $\newline\phantom{xdxd} \VCM_{\SK}) $\\
    
    \midrule
    \end{tabular}
    \caption{Communication scheme for the sub-problem \textit{Secure Software Files}. The dashed line represents parallel processes.}
    \label{tab:secure_software_files}
\end{table}

\begin{center}
    \begin{align*}
        S = & \{\SW, \ \SW_{Key}, \ Supplier_{\SK}, \ VCM_{\SK}\} \\
        \mathcal{D} = & \{(\SW, \ \Prod), \ (\SW_{Key}, \ \PSA), \  (SoftwareHash, \ \VCM), \\
        & \ (SignedSoftwareHash, \ \PSS)\} \\
    \end{align*}
\end{center}

To derive the specific sub-problem requirements from the system requirements (see \Cref{sec:problem-definition}), we define $S$, $\mathcal{D}$ and the partial order $\mathcal{P}(\ell) = \ell_i < \ell_{i+1}$ for $1 \leq i \leq 4$ with the following labels:
\begin{labelenumerate}
    \item The software suppliers upload the software to \PLS.
    \item \PSA generates the software key.
    \item \VCM generates a hash of the software.
    \item \PSS generates a signature for the software.
    \item \VCM assembles $Software_{Encased}$.
\end{labelenumerate}

\subsubsection{Step 5--6: Upload Software Files} 
Once the software file has been encased, it is uploaded into a software repository. The repository then generates a URL to the uploaded encased software and returns this URL to \VCM (see steps 5.1 and 5.2 in \Cref{fig:upload_software_files,tab:upload_software_files}). Finally, the \VCM stores the software URL in the VIN database and stores the $Software_{Key}$ in \CMS; used for the encryption of software files. 

Note that \VCM has a starting context because its initiation task is in a previous sub-problem (see \Cref{sec:prp-1-4}). \VCM's starting context contains $\SW_{Encased}$ and $\SW_{Key}$.

\begin{figure}[ht]
    \centering
    \includegraphics[width=0.6\columnwidth]{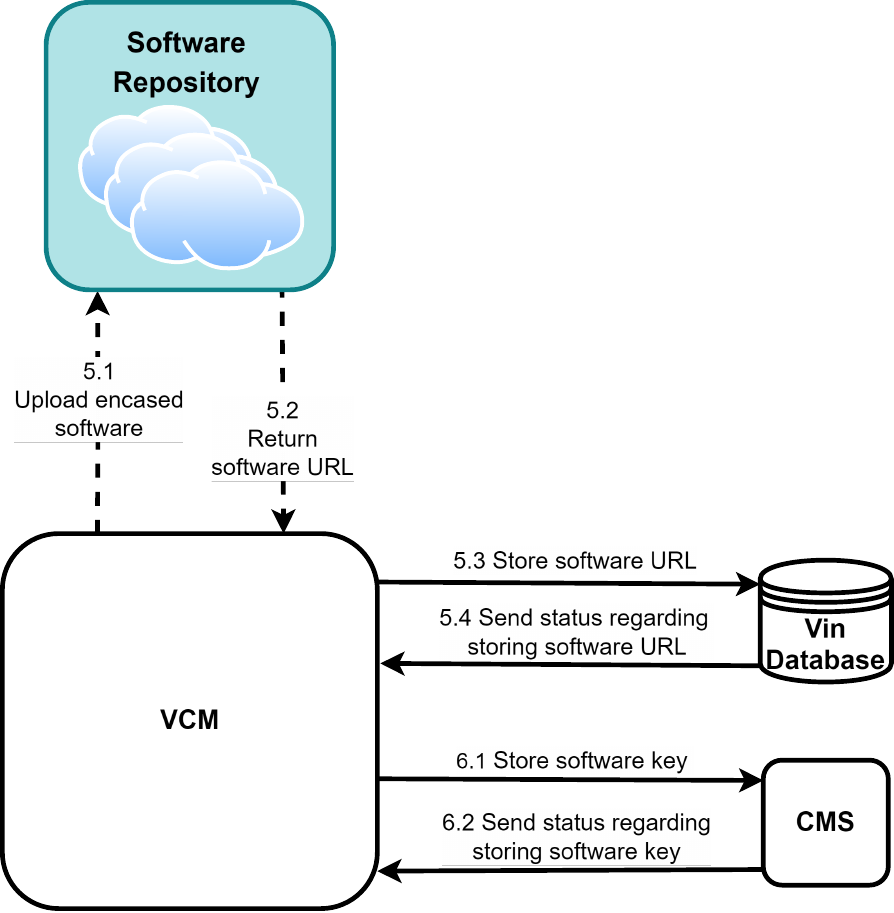}
    \caption{Diagram of the sub-problem \textit{Upload Software Files}.}
    \label{fig:upload_software_files}
\end{figure}

\begin{table}[ht]
    \centering
    \begin{tabular}{lll}
    \midrule

    (5.1)& $ \VCM \to \SR $   & $ \SW_{Encased} $ \\
    
    (5.2)& $ \SR \to \VCM $   & $ \SW_{URL}     $ \\
    
    (5.3)& $  \VCM \to VIN Database $  & $ \SW_{URL}     $ \\
    (5.4)& $  VIN Database \to \VCM$  & $ \Suc(\SW_{URL})     $ \\
    
    (6.1)& $  \VCM \to \CMS $   & $ \SW_{Key}     $ \\
    (6.2)& $ \CMS \to \VCM$   & $ \Suc(\SW_{Key})     $ \\

    \midrule
    \end{tabular}
    \caption{Communication scheme for the sub-problem \textit{Upload Software Files}.}
    \label{tab:upload_software_files}
 \end{table}

\begin{center}
    \begin{align*}
        S = & \{\SW, \ \SW_{Key}, \ Supplier_{\SK}, \ VCM_{\SK}\} \\ 
        \mathcal{D} = & \{(Software_{Encased}, \ \VCM), \ (\SW_{Key}, \ \PSA), \\
        &  \ (Software_{URL}, \ \SR)\} \\
    \end{align*}
\end{center}

\sloppy To derive the specific sub-problem requirements from the system requirements (see \Cref{sec:problem-definition}), we define $S$, $\mathcal{D}$ and the partial order $\mathcal{P}(\ell) = \ell_i < \ell_{i+1}$ for $1 \leq i \leq 3$ with the following labels:
\begin{labelenumerate}
    \item \SR receives the $Software_{Encased}$.
    \item VIN Database stores the $Software_{URL}$.
    \item \CMS stores the $Software_{key}$.
    \item \VCM receives status of $Software_{key}$ being stored in \CMS.
\end{labelenumerate}
\subsection{Encapsulation}
\label{sec:encapsulation}
The encapsulation stage starts when the \OA has received an order request from \CDA, whereafter producer entities work collaboratively to produce a VUUP \citep[Sec. 4.1]{UniSUF}.
\Cref{fig:encapsulation} shows a high-level flow diagram of the different sub-problems involved in the encapsulation stage.
In steps 1--2, the encapsulation stage starts by producing a \VSO, which is later processed to a \SL in step 3.
Furthermore, in steps 4--7, the \SL is sent to \PDA, \PIA, and \PSA to generate necessary materials, for instance, download and installation instructions.
Instructions and other materials are included in a VUUP file (step 8), whereafter the encapsulation stage finalizes by notifying the consumer that updates are available (cf. steps 9--11).

\begin{figure}[ht]
    \centering
    \includegraphics[width=\columnwidth]{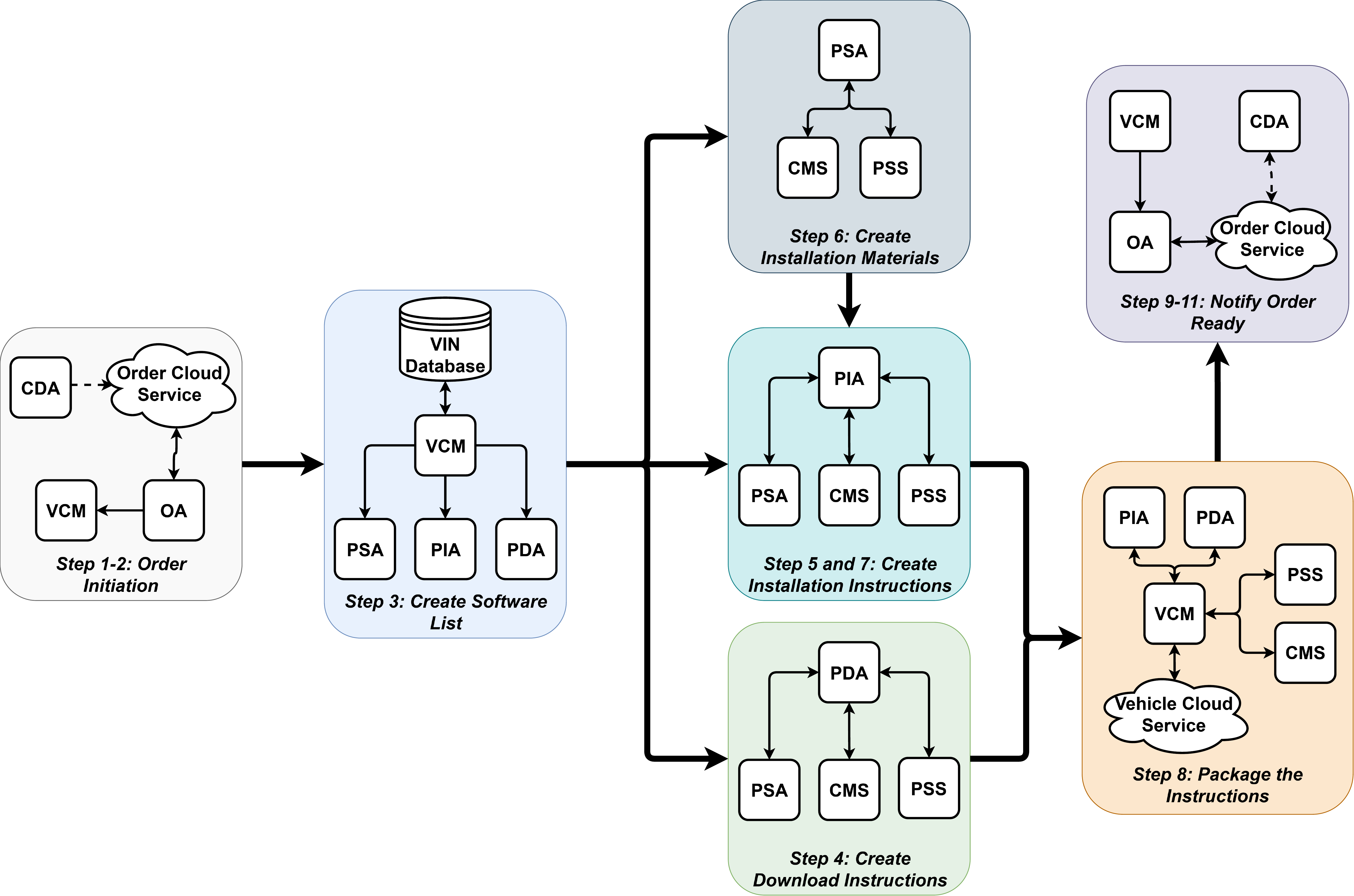}
    \caption{Overview of communications between all entities that are involved in the encapsulation stage.}
    \label{fig:encapsulation}
\end{figure}

\subsubsection{Step 1--2: Order Initiation}
\label{sec:stp-ecp-1-2}
The encapsulation process begins when the \CDA requests an update by sending a signed order (denoted as \VSO) to the \OCS (see steps 1.1--1.3 in \Cref{fig:order_initiation}). 
In step 1.4, \OCS stores the order in a queue, whereafter the \OA attempts to fetch an order from this queue.
If an order is available, \OCS sends a \VSO to \OA, which verifies the signature and initiates \VCM using the \VSO (steps 2.1--2.4).

\begin{figure}[ht]
    \centering
    \includegraphics[width=0.8\columnwidth]{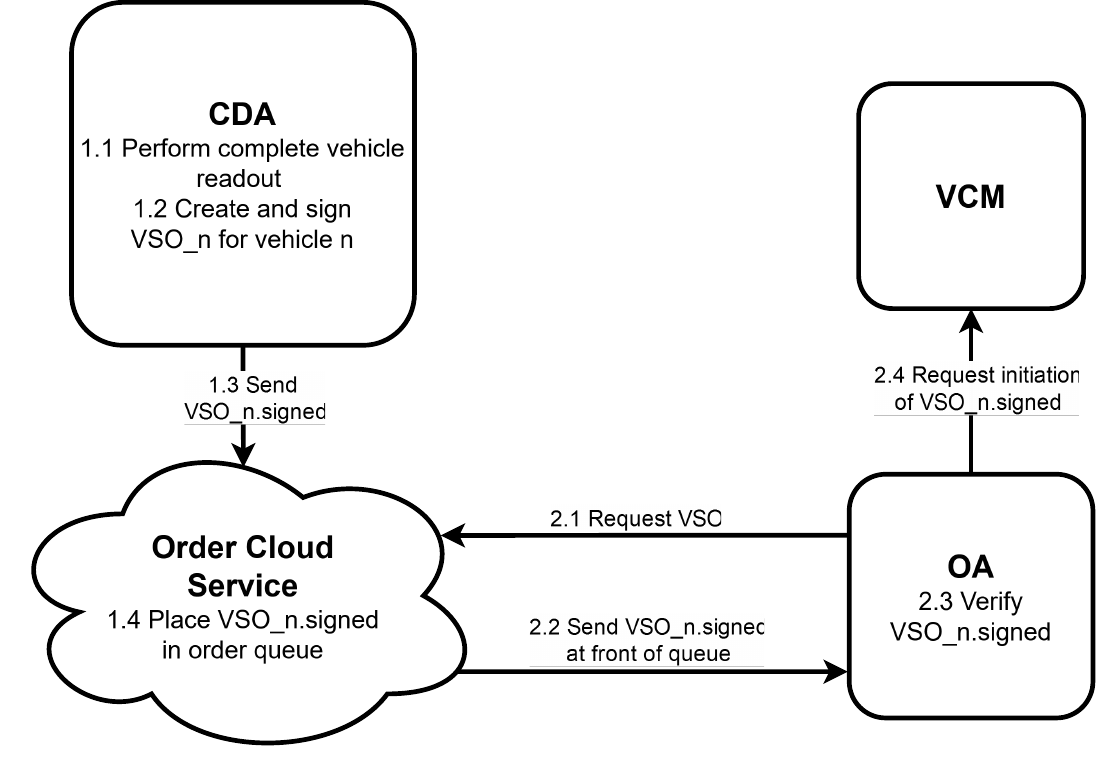}
    \caption{Diagram of the sub-problem \textit{Order Initiation}.}
    \label{fig:order_initiation}
\end{figure}

\begin{table}[ht]
    \centering
    \begin{tabular}{lll}
    \midrule
    (1.3)& $ \CDA \to \OCS $ & $ [\VSO]_{\CA} $ \\
    \\
    \hdashline
    \\
    (2.1) & $ \OA \to \OCS $ & $ \Req(\VSO)$ \\
    (2.2) & $ \OCS \to \OA $ & $ [\VSO]_{\CA} $ \\
    (2.4) & $ \OA \to \VCM $ & $ [\VSO]_{\CA} $ \\
    \midrule
    \end{tabular}
    \caption{Communication scheme for the sub-problem \textit{Order Initiation}. }
    \label{tab:order_initiation}
\end{table}

To derive the specific sub-problem requirements from the system requirements (see \Cref{sec:problem-definition}), we define $S = \{\CDA_{\SK}\}$, $\mathcal{D} =\{(\VSO,\ \CDA)\}$, and the partial order $\mathcal{P}(\ell) = \ell_i < \ell_{i+1}$ for $1 \leq i \leq 4$ with the following labels:
\begin{labelenumerate}
    \item \CDA generates a signed \VSO.
    \item \OCS stores the signed \VSO in its queue.
    \item \OA pulls signed \VSO from \OCS.
    \item \OA initiates \VCM with the signed \VCM.
    \item \VCM sends status on initialisation.
\end{labelenumerate}

\subsubsection{Step 3: Create Software List}
\label{sec:stp-ecp-3}
Given a \VSO, the \VCM will produce a software list (see step 3.1--3.5 in \Cref{fig:create_software_list}). 
The software list contains the software to be installed in the vehicle. 
After creating the software list, it is transmitted to the \PIA, \PSA, and \PDA for further processing (step 3.6).
The software list is used to generate the download and installation instructions.

The entity \VCM has a starting context because its listening task has been included in the previous sub-problem \textit{Order Initiation} (see \Cref{sec:stp-ecp-1-2}).
The starting context of \VCM contains the signed \VSO.

\begin{figure}[ht]
    \centering
    \includegraphics[width=0.8\columnwidth]{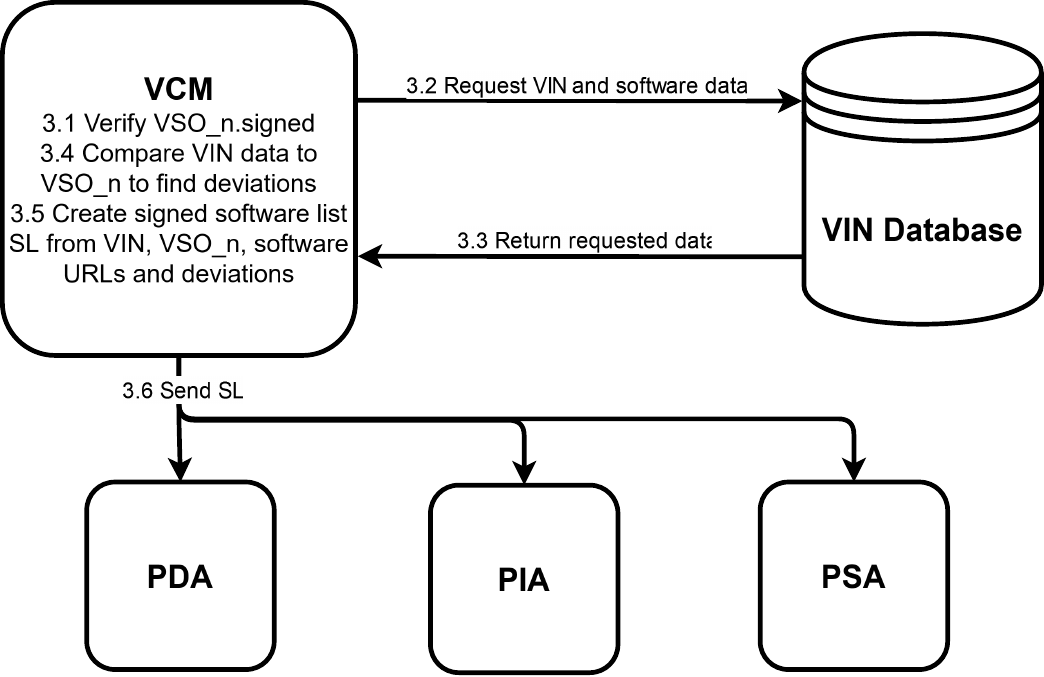}
    \caption{Diagram of the sub-problem \textit{Create Software List}.}
    \label{fig:create_software_list}
\end{figure}

\begin{table}[ht]
    \centering
    \begin{tabular}{lll}
    \midrule
    (3.2) & $ \VCM \to \VD $ & $ \VIN $ \\ 
        & & $ (\VIN, ...) = \VSO $ \\
    (3.3) & $ \VD \to \VCM $ & $ \VIN_{Data}\ \|\ \SW_{Versions} $ \\
    (3.6) & $ \VCM \to \PDA $ & $ [\SL]_{\VCM}$ \\
        & $ \VCM \to \PSA $ & $ [\SL]_{\VCM}$ \\
        & $ \VCM \to \PIA $ & $ [\SL]_{\VCM}$ \\
        & & $ \SL = Create_{\SL}( $ \\
        & & $\quad \VIN_{Data}\ \|\ \SW_{Versions}) $ \\
    \midrule
    \end{tabular}
    \caption{Communication scheme for the sub-problem \textit{Create Software List}. }
    \label{tab:create_software_list}
\end{table}

\begin{center}
\begin{align*}
S = & \{ \CDA_{\SK},\ \VCM_{\SK} \}\\
\mathcal{D} = & \{ (\VSO,\ \CDA),\ (\SL,\ \VCM),\ (\VIN_{Data},\ \VD), \\
& (\SW_{Versions},\ \VD)\}\\
\end{align*}
\end{center}

To derive the specific sub-problem requirements from the system requirements (see \Cref{sec:problem-definition}), we define $S$ and $ \mathcal{D}$ as seen above, and the partial order $P(\ell) = \{\ell_1 < \ell_2,\  \ell_2 < \ell_3,\ \ell_2 < \ell_4,\ \ell_2 < \ell_5,\ \ell_3 < \ell_6,\ \ell_4 < \ell_6,\ \ell_5 < \ell_6\}$. 
In other words, $\ell_1$ happens before $\ell_2$ happens, which then precedes $\ell_3$, $\ell_4$, and $\ell_5$ occurring in parallel, and finally $\ell_6$ happens last.
\begin{labelenumerate}
    \item \VCM obtains the most up-to-date software versions and vehicle data from \VD.
    \item \VCM creates a signed \SL.
    \item \VCM sends the signed \SL to \PDA.
    \item \VCM sends the signed \SL to \PIA.
    \item \VCM sends the signed \SL to \PSA.
    \item \VCM receives status of \SL being sent to \PDA, \PIA  and \PSA.
\end{labelenumerate}

\subsubsection{Step 4: Create Download Instructions}
\label{sec:stp-ecp-4}

Based on the \SL received from the \VCM in the previous sub-problem \textit{Create Software List} (see \Cref{sec:stp-ecp-3}), \PDA creates the \DI (see steps 4.1--4.2).
The \DI is encrypted using a generated session key.
The session key is then used to produce the key manifest \DKM (steps 4.3--4.10). 
This sub-problem finishes by signing the \DI and \DKM (steps 4.12--4.16).

The entities \PDA and \PSA have starting contexts because their listening tasks have been included in a previous sub-problem (see sub-problem \textit{Create Software List}).
The starting context of \PDA contains the signed \SL and $\VCM_{Cert}$.

\begin{figure}[ht]
    \centering
    \includegraphics[width=\columnwidth]{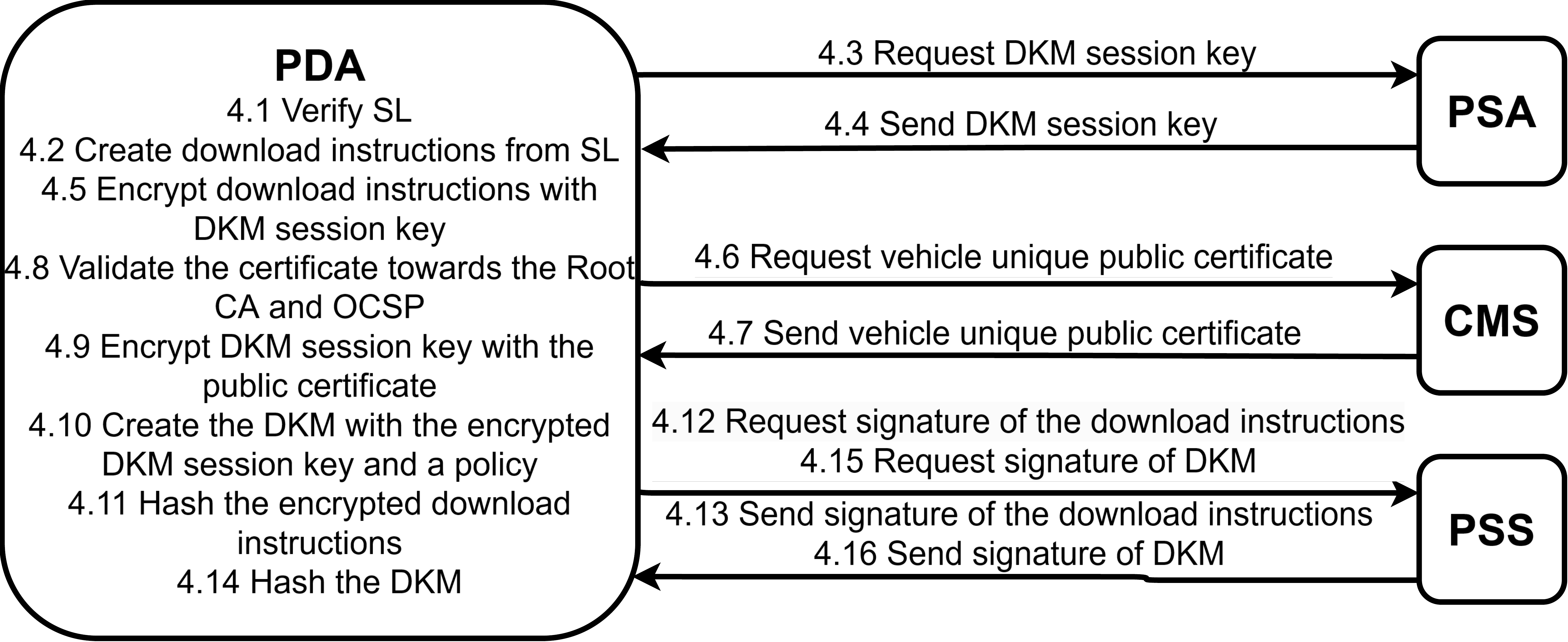}
    \caption{Diagram of the sub-problem \textit{Create Download Instructions}.}
    \label{fig:create_download_instructions}
\end{figure}

\begin{table}[ht]
    \makebox[\textwidth][c]{
    \begin{tabular}{lll}
    \midrule
    (4.3)& $ \PDA \to \PSA $ & $ \Req(\DKM_{Key}) $\\ 
    (4.4)& $ \PSA \to \PDA $ & $ \DKM_{Key} $ \\
    (4.6)& $ \PDA \to \CMS $ & $ \Req(\Vehc_{Cert}) $\\
    (4.7)& $ \CMS \to \PDA $ & $ \Vehc_{Cert} $ \\
    (4.12)& $ \PDA \to \PSS $ & $  Hash( $\\
    & & $\quad SymEnc( $\\ 
    & & $\qquad \DI,\ $\\
    & & $\qquad \DKM_{Key})) $\\
    (4.13)& $ \PSS \to \PDA $ & $ Sign( $\\
    & & $\quad Hash( $ \\ 
    & & $\qquad SymEnc( $ \\
    & & $\qquad\quad \DI,\ $\\
    & & $\qquad\quad \DKM_{Key})),$ \\
          & & $\quad \PDA_{\SK}) $\\
    (4.15)& $ \PDA \to \PSS $ & $  Hash(\DKM) $\\
    (4.16)& $ \PSS \to \PDA $ & $ Sign(Hash(\DKM),\ \PDA_{\SK}) $\\
    \midrule
    \end{tabular}
    }
    \caption{Communication scheme for the sub-problem \textit{Create Download Instructions}. }
    \label{tab:create_download_instructions}
\end{table}

\begin{center}
\begin{align*}
S = & \{ \SL,\ \DI,\  \DKM_{Key},\ Root_{\SK},\\
    &\PDA_{\SK},\VCM_{\SK} \}\\
\mathcal{D} = & \{ (\SL,\ \VCM),\ (\DI,\ \PDA),\\
    & (\DKM_{Key},\ \PSA),\ (\Vehc_{Cert},\ Root),\ (\PDA_{Cert},\ Root),\\
    & (\DKM,\ \PDA) \}
\end{align*}
\end{center}

To derive the specific sub-problem requirements from the system requirements (see \Cref{sec:problem-definition}), we define the sets $S$ and $\mathcal{D}$ as seen above, and the partial order $\mathcal{P}(\ell) = \ell_i < \ell_{i+1}$ for $1 \leq i \leq 10$ with the following labels:

\begin{labelenumerate}
    \item \PDA verifies the \SL
    \item \PDA creates the \DI
    \item \PSA generates $\DKM_{Key}$
    \item \PSA sends $\DKM_{Key}$ to \PDA
    \item \CMS sends $\Vehc_{Cert}$ to \PDA
    \item \PDA generates \DKM
    \item \PSS generates a signature for the \DI
    \item \PSS sends the signed encrypted \DI to \PDA
    \item \PSS generates signature for \DKM 
    \item \PSS sends the signed encrypted \DKM to \PDA
\end{labelenumerate}

\subsubsection{Step 6: Generate Installation Materials}
\label{sec:stp-ecp-6}

In parallel to the creation of \DI and \II (see \Cref{sec:stp-ecp-4,sec:stp-ecp-5-7}), the \PSA generates and secures cryptographic materials necessary for the installation of software updates, e.g., \SKA and \MKM. 
For instance, ECU keys for the unlocking of ECUs, to gain extended privileges.

The entities \PSA, \PSS, and \CMS have starting contexts because their listening tasks have been included in previous sub-problems (see sub-problem \textit{Create Download Instructions}).
\PSA's starting context contains the signed \SL and $\VCM_{Cert}$ and \CMS's starting context contains the $\Vehc_{Cert}$.

\begin{figure}[ht]
    \centering
    \includegraphics[width=\columnwidth]{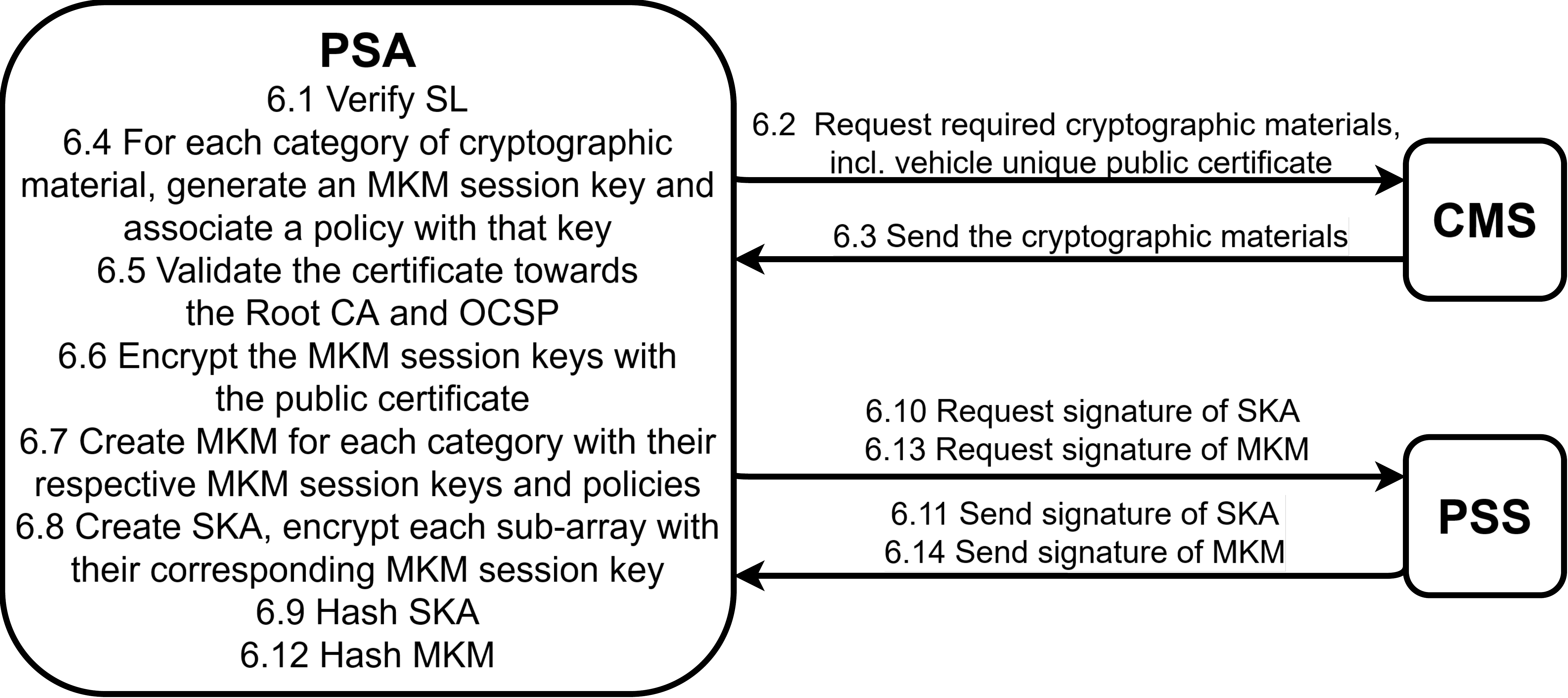}
    \caption{Diagram of the sub-problem \textit{Generate Installation Materials}.}
    \label{fig:generate_installation_materials}
\end{figure}

\begin{table}[ht]
    \centering
    \begin{tabular}{lll}
    \midrule
    (6.2)& $ \PSA \to \CMS $ & $ \SL $\\
    (6.3)& $ \CMS \to \PSA $ & $ \Vehc_{Cert}\ \|\ \SKA_{Key} $ \\ 
    (6.10)& $ \PSA \to \PSS $ & $ Hash(\SKA)$ \\
    (6.11)& $ \PSS \to \PSA $ & $ Sign(Hash(\SKA), \PSA_{\SK}) $\\
    (6.13)& $ \PSA \to \PSS $ & $ Hash(\MKM)$ \\
    (6.14)& $ \PSS \to \PSA $ & $ Sign(Hash(\MKM), \PSA_{\SK}) $\\
    \midrule
    \end{tabular}
    \caption{Communication scheme for the sub-problem \textit{Generate Installation Materials}. }
    \label{tab:generate_installation_materials}
\end{table}

\begin{center}
\begin{align*}
S = & \{ \SL,\ \MKM_{Key},\ \SKA_{Key},\ \ Root_{\SK},\ \PSA_{\SK},\\
    & \VCM_{\SK} \} \\
\mathcal{D} = & \{ (\SL,\ \VCM),\ (\Vehc_{Cert},\ Root),\ (\PSA_{Cert},\ Root),\\
    & (\MKM,\ \PSA),\ (\SKA,\ \PSA) \}
\end{align*}
\end{center}

To derive the specific sub-problem requirements from the system requirements (see \Cref{sec:problem-definition}), we define the sets $S$ and $\mathcal{D}$ as seen above, and the partial order $\mathcal{P}(\ell) = \ell_i < \ell_{i+1}$ for $1 \leq i \leq 9$ with the following labels:
\begin{labelenumerate}
    \item \PSA verifies the \SL
    \item \CMS sends cryptographic material to \PSA
    \item \PSA generates $\MKM_{ECU_{Key}}$ and $\MKM_{SW_{Key}}$
    \item \PSA generates \MKM
    \item \PSA generates \SKA
    \item \PSS generates a signature for the \SKA
    \item \PSS sends the signed \SKA to \PSA
    \item \PSS generates signature for \MKM 
    \item \PSS sends the signed \MKM to \PSA
\end{labelenumerate}

\subsubsection{Step 5 and 7: Create Installation Instructions}
\label{sec:stp-ecp-5-7}
The \PIA receives \SL from \VCM and creates \II based on the \SL (see steps 5.1--5.2 in \Cref{fig:create_installation_instructions}). 
Similarly to sub-problem \textit{Create Download Installation}, the \II are encrypted with a generated session key,  which gives rise to the \IKM (steps 7.1--7.11). 
The \II is also appended with the materials generated in the sub-problem \textit{Generate Installation Materials} (steps 7.1--7.3).

The entities \PIA, \PSA, \PSS, and \CMS have starting contexts because their listening tasks are included in previous sub-problems (see sub-problems \textit{Create Software List} and \textit{Create Download Instructions}).    
\PIA's starting context contains the signed \SL and $\VCM_{Cert}$, \PSA's starting context contains the signed \SKA and signed \MKM, and \CMS's starting context contains the $\Vehc_{Cert}$.

\begin{figure}[ht]
    \centering
    \includegraphics[width=\columnwidth]{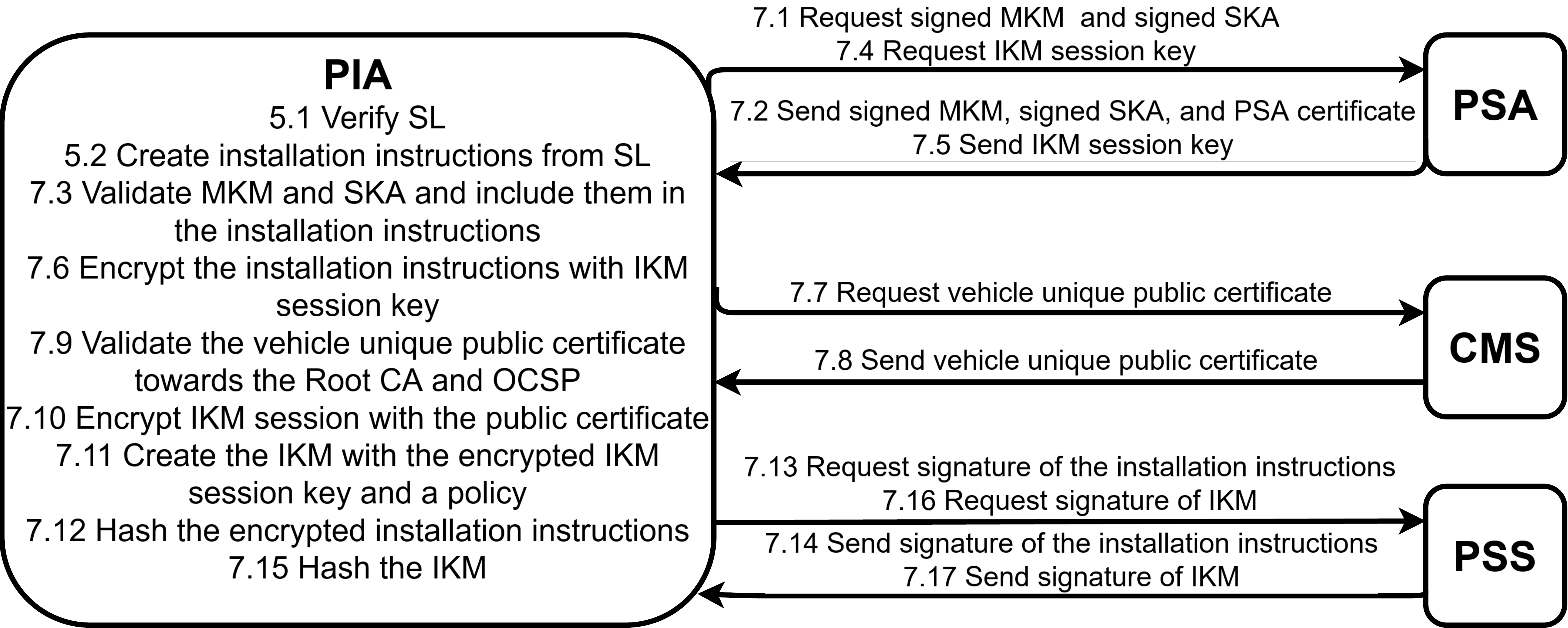}
    \caption{Diagram of the sub-problem \textit{Create Installation Instructions}.}
    \label{fig:create_installation_instructions}
\end{figure}

\begin{table}[ht]
    \centering
    \makebox[\textwidth][c]{
    \begin{tabular}{lll}
    \midrule
    (7.1)& $ \PIA \to \PSA $ & $ \Req(\MKM)\ \|\ \Req(\SKA) $ \\
    (7.2)& $ \PSA \to \PIA $ & $ [\MKM]_{\PSA}\ \|\  [\SKA]_{\PSA}\ \|\  \PSA_{Cert} $ \\
    (7.4)& $ \PIA \to \PSA $ & $ \Req(\IKM_{Key}) $ \\
    (7.5)& $ \PSA \to \PIA $ & $ \IKM_{Key} $ \\
    (7.7)& $ \PIA \to \CMS $ & $ \Req(\Vehc_{Cert}) $\\
    (7.8)& $ \CMS \to \PIA $ & $ \Vehc_{Cert} $ \\
    (7.13)& $ \PIA \to \PSS $ & $  Hash( $ \\
    & & $\quad SymEnc( $ \\
    & & $\qquad \II, $ \\
    & & $\qquad \IKM_{Key})) $\\
    (7.14)& $\PSS \to \PIA $ &  $ Sign( $\\
    & & $\quad Hash( $ \\ 
    & & $\qquad SymEnc( $ \\
    & & $\qquad\quad \II, $ \\
    & & $\qquad\quad \IKM_{Key})), $ \\ 
    & & $\quad \PIA_{\SK}) $\\
    (7.16)& $ \PIA \to \PSS $ & $ Hash(\IKM) $\\
    (7.17)& $ \PSS \to \PIA $ & $ Sign(Hash(\IKM), \PIA_{\SK}) $\\
    
    \midrule
    \end{tabular}
    }
    \caption{Communication scheme for the sub-problem \textit{Create Installation Instructions}. }
    \label{tab:create_installation_instructions}
\end{table}

\begin{center}
\begin{align*}
S = & \{ \SL,\ \II,\  \IKM_{Key},\ Root_{\SK},\\
    & \PIA_{\SK},\ \VCM_{\SK} \} \\
\mathcal{D} = & \{ (\SL,\ \VCM),\ (\II,\ \PIA),\\
    & (\IKM_{Key},\ \PSA),\ (\Vehc_{Cert},\ Root),\ (\PIA_{Cert},\ Root),\\
    & (\MKM,\ \PSA),\ (\SKA,\ \PSA) \}
\end{align*}
\end{center}

To derive the specific sub-problem requirements from the system requirements (see \Cref{sec:problem-definition}), we define the sets $S$ and $\mathcal{D}$ as seen above, and the partial order $\mathcal{P}(\ell) = \ell_i < \ell_{i+1}$ for $1 \leq i \leq 11$ with the following labels:
\begin{labelenumerate}
    \item \PIA verifies the \SL
    \item \PIA creates the \II
    \item \PSA sends $[\MKM]_{\PSA}$, and $[\SKA]_{\PSA}$, and $\PSA_{Cert}$
    \item \PSA generates $\IKM_{Key}$
    \item \PSA sends $\IKM_{Key}$ to \PIA
    \item \CMS sends $\Vehc_{Cert}$ to \PIA
    \item \PIA generates \IKM
    \item \PSS generates signature for the \II
    \item \PSS sends the signed encrypted \II to \PIA
    \item \PSS generates signature for \IKM 
    \item \PSS sends signed encrypted \IKM to \PIA
\end{labelenumerate}

\subsubsection{Step 8: Package the Instructions}
\label{sec:stp-ecp-8}

Once the required materials for software update have been created, the materials need to be inserted into the \VUUP file.
The procedure starts with \PDA and \PIA sending data to \VCM (see steps 8.1--8.2 in \Cref{fig:package_the_instructions}).
Before data is packaged into \VUUP content, signatures are validated to ensure authenticity (steps 8.3--8.8).
The \VUUP file is signed by the \PSS and later uploaded to the cloud (steps 8.9--8.14).

The entities \PDA, \PIA, \VCM, \CMS, and \PSS have a starting context because their listening tasks are included in previous sub-problems \textit{Order Initiation} (see \Cref{sec:stp-ecp-3,sec:stp-ecp-4,sec:stp-ecp-5-7,sec:stp-ecp-6}).
\PDA's starting context contains the encrypted and signed \DI and the signed \DKM, and \PIA's starting context contains the encrypted and signed \II and the signed \IKM.

\begin{figure}[ht]
    \centering
    \includegraphics[width=\columnwidth]{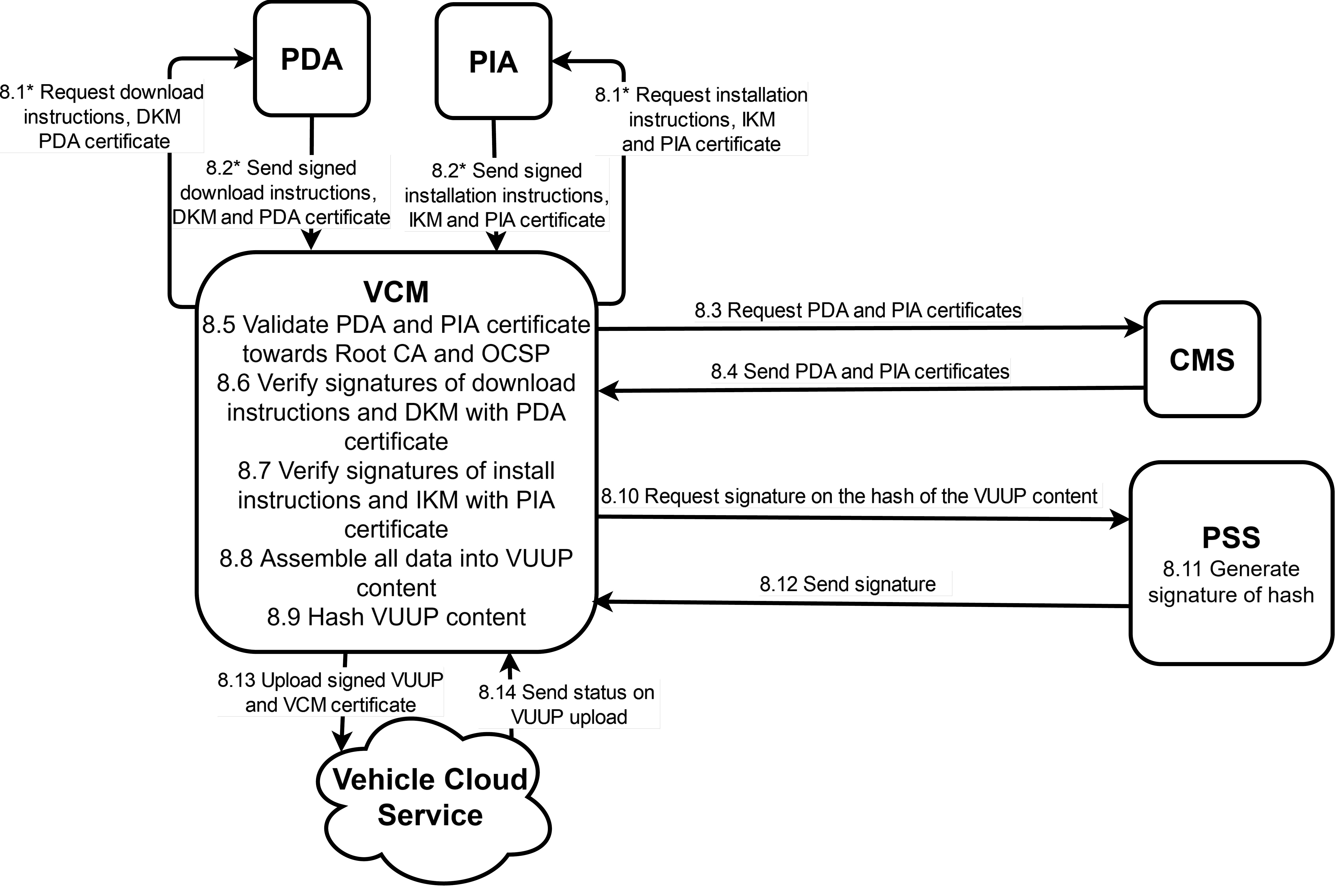}
    \caption{Diagram of the sub-problem \textit{Package the Instructions}. Steps marked with * can occur in parallel.}
    \label{fig:package_the_instructions}
\end{figure}

\begin{table}[!ht]
    \makebox[\textwidth][c]{
    \begin{tabular}{lll}
    \midrule
    (8.1)& $ \VCM \to \PDA $ & $ \Req(\DI) $ \\
         & & $\quad \|\ \Req(\DKM)\ $ \\
         & & $\quad \|\  \Req(\PDA_{Cert})  $ \\
         & $\VCM \to \PIA $ & $ \Req(\II)$ \\
         & & $ \quad \|\ \Req(\IKM)\ $\\
         & & $ \quad \|\ \Req(\PIA_{Cert})  $ \\
    (8.2)& $ \PDA \to \VCM $ & $ [SymEnc(\DI, $ \\
        & & $\quad \DKM_{Key})]_{\PDA} $ \\
        && $\quad \|\ [\DKM]_{\PDA}\ \|\ \PDA_{Cert}  $ \\
        & $ \PIA \to \VCM $ & $ [SymEnc(\II, $ \\
        & & $\quad \IKM_{Key})]_{\PIA} $\\
        && $\quad \|\ [\IKM]_{\PIA}\  \|\ \PIA_{Cert} $ \\
        
    (8.3)& $\VCM \to \CMS $ & $\Req(\PDA_{Cert}) $ \\
    & & $\quad \|\ \Req(\PIA_{Cert})$\\
    
    (8.4)& $\CMS \to \VCM $ & $\PDA_{Cert}\ \|\ \PIA_{Cert}$ \\
    
    (8.10)& $ \VCM \to \PSS $ &  $ Hash(\VUUP_{Content}) $ \\

    (8.12)& $ \PSS \to \VCM $ & $ Sign(Hash(\VUUP_{Content}), $\\ 
    & & $ \quad \VCM_{\SK}) $ \\

    (8.13)&  $ \VCM \to VCS $ & $ \VUUP $ \\
    (8.14)& $ VCS \to \VCM $ & $ Success(VUUP) $ \\

    \midrule
    \end{tabular}
    } 
    \caption{Communication scheme for the sub-problem \textit{Package the Instructions}, where $VCS$ is the Vehicle Cloud Service.}
    \label{tab:package_the_instructions}
\end{table}

\begin{center}
\begin{align*}
S = & \{ Vehicle_{\SK},\ Root_{\SK},\ \PDA_{\SK},\ \PIA_{\SK},\\
    & \PSA_{\SK},\ \VCM_{\SK},\ \DKM_{Key},\ \IKM_{Key}\} \\
\mathcal{D} = & \{ (\VUUP_{URL},\ \VCS),\ (\VUUP,\ \VCM),\\
    & (\DI,\ \PDA),\ (\II,\ \PIA),\\
    & (\PDA_{Cert},\ Root),\ (\PIA_{Cert},\ Root),\ (\VCM_{Cert},\ Root) \}
\end{align*}
\end{center}

For any execution of this sub-problem, the entities \PDA, \PIA, \VCM, \CMS, and \PSS are aware of which update round it belongs to since these entities have been in the same update round in sub-problems \textit{Order Initiation} and \textit{Create Download Instructions}.
To derive the specific sub-problem requirements from the system requirements (see \Cref{sec:problem-definition}), we define $S$ and $\mathcal{D}$ as seen above, and the partial order $\mathcal{P}(\ell) = \{\ell_1 < \ell_3,\ \ell_2 < \ell_3,\ \ell_3 < \ell_4,\ \ell_4 < \ell_5,\ \ell_5 < \ell_6,\ \ell_6 < \ell_7 \}$. 
In other words, $\ell_1$ and $\ell_2$ happen parallel initially, followed sequentially by the remaining handling events, with $\ell_7$ happening last.
\begin{labelenumerate}
    \item \VCM receives \DI and \DKM  from \PDA.
    \item \VCM receives \II and \IKM from \PIA.
    \item \VCM retrieves \PDA and \PIA certificates from \CMS.
    \item \VCM assembles the \VUUP.
    \item \PSS signs the \VUUP. 
    \item \VCM uploads the signed \VUUP along with the \VCM certificate to \VCS.
    \item \VCS receives a status of the signed \VUUP and \VCM certificate being uploaded to \VCS.
\end{labelenumerate}

\subsubsection{Step 9--11: Notify Order Ready}
\label{sec:stp-ecp-9-11}

When the \VUUP has been created, the \OA is notified that the order is ready for download (see step 9 in \Cref{fig:notify_order_ready}).
The notification consists of a signed URL to the \VUUP file.
The signature of this URL is validated by the \OA before it is uploaded to the \OCS (steps 10.1 and 10.2).

In step 11 of this sub-problem, \CDA will pull the status from \OCS to see if any update is available.
We assume updates are always available since if no updates are available, there will be an illegitimate termination of the update process.

The entities \CDA, \OA, \OCS, and \VCM have starting contexts because their listening tasks are included in a previous sub-problem (see sub-problem \textit{Order Initiation}).
\VCM's starting context contains the $VUPP_{URL}$.

\begin{figure}[ht]
    \centering
    \includegraphics[width=\columnwidth]{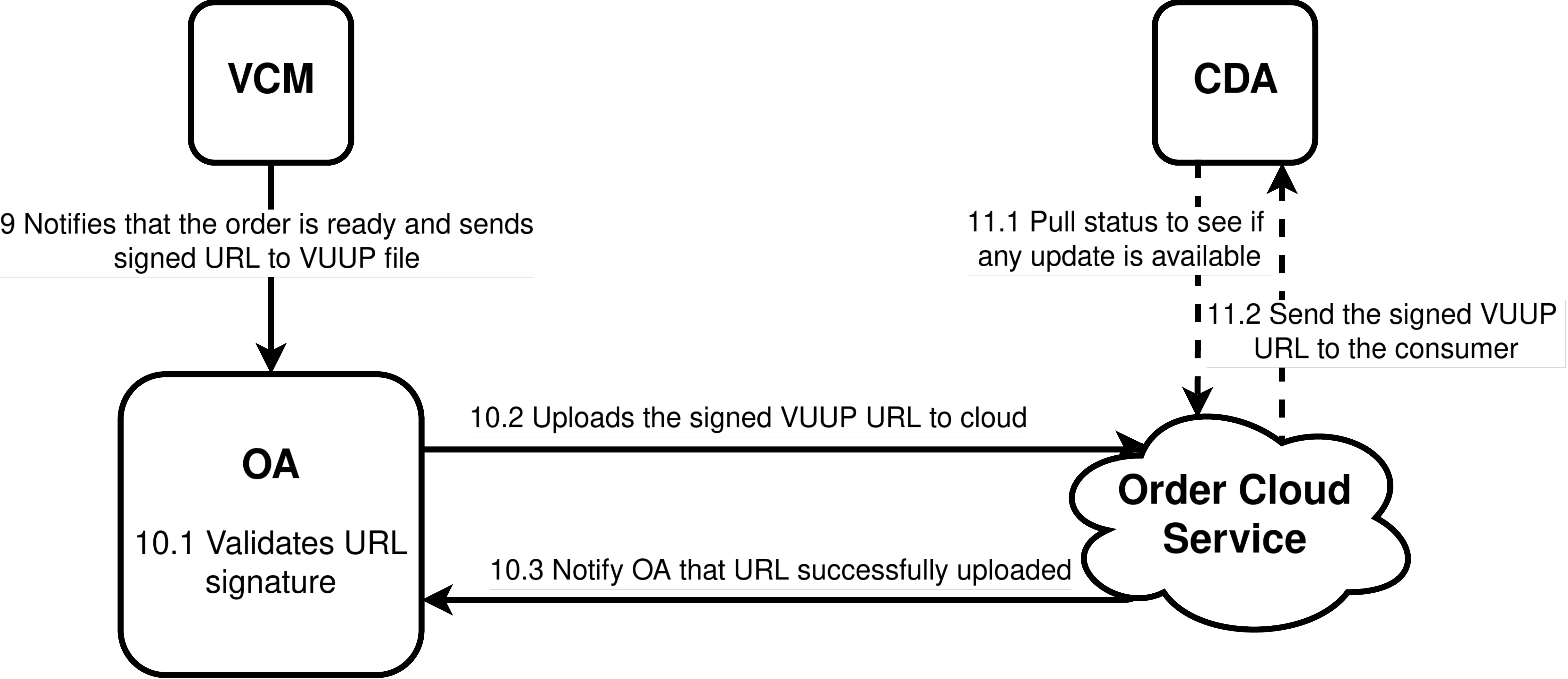}
    \caption{Diagram of the sub-problem \textit{Notify Order Ready}.}
    \label{fig:notify_order_ready}
\end{figure}

\begin{table}[ht]
    \centering
    \begin{tabular}{lll}
    \midrule
    (9)& $ \VCM \to \OA $ & $ [\VUUP_{URL}]_{\VCM} $ \\
    (10.2)& $ \OA \to OCS $ & $ [\VUUP_{URL}]_{\VCM} $ \\
    (10.3)& $ OCS \to \OA $ & $Success(\VUUP_{URL})$ \\
    (11.1)& $ \CDA \to OCS $ & $Request(\VUUP_{URL})$ \\
    (11.2)& $ OCS \to \CDA $ & $[\VUUP_{URL}]_{\VCM}\ \|\ \VCM_{Cert}$ \\
    \midrule
    \end{tabular}
    \caption{Communication scheme for the sub-problem \textit{Notify Order Ready}, where $OCS$ is the Order Cloud Service.}
    \label{tab:notify_order_ready}
 \end{table}

To derive the specific sub-problem requirements from the system requirements (see \Cref{sec:problem-definition}), we define $S = \{ \VCM_{\SK} \}$,
$\mathcal{D} = \{ (\VUUP_{URL},\ \VCS),\  (\VCM_{Cert},\ Root) \}$, and the partial order $\mathcal{P}(\ell) = \ell_i < \ell_{i+1}$ for $1 \leq i \leq 4$ with the following labels:
\begin{labelenumerate}
    \item \VCM sends $[\VUUP_{URL}]_{\VCM}$ to \OA
    \item \OA sends  $[\VUUP_{URL}]_{\VCM}$ to the \OCS
    \item \CDA pulls status from the \OCS
    \item The \OCS sends $[\VUUP_{URL}]_{\VCM}$ to \CDA
\end{labelenumerate}

\subsection{Decapsulation}
\label{sec:decapsulation}
We summarise the decapsulation stage defined by \citet[Sec. 4.2]{UniSUF} as follows: \CDA retrieves a $\VUUP_{URL}$ from \OCS, further used to retrieve the \VUUP from \VCS. Once \CDA has validated the \VUUP, \CDA uses \CSA for the decryption and plain text retrieval of the \DI. The \DI\ is used to download software from the \SR. After that, \CIA will use \CSA to decrypt and for the plain text retrieval of the \II, and with the help of \CSA, install software to the ECUs.

\begin{figure}[ht]
    \centering
    \includegraphics[width=\columnwidth]{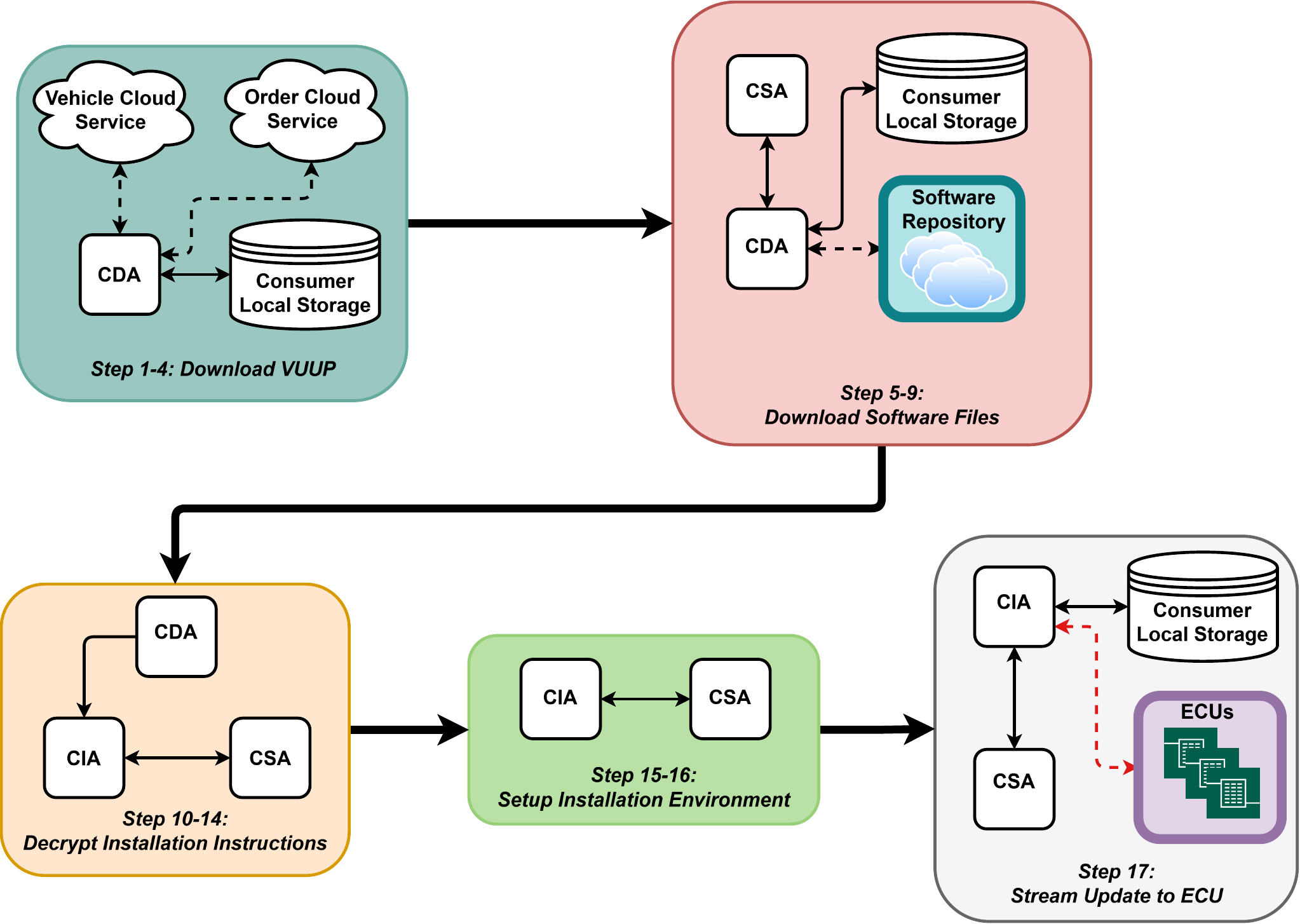}
    \caption{Using the specifications of the decapsulation stage provided by \citet[Sec. 4.2]{UniSUF}, we present the following division of sub-problems.}
    \label{fig:master-decap}
\end{figure}

We have divided the decapsulation stage into five sub-problems. A summary of the division can be seen in \Cref{fig:master-decap}. In \citet[Fig. 4]{UniSUF}, a more detailed version of the entire decapsulation process is presented. 

\subsubsection{Step 1--4: Download VUUP}
\label{sec:stp-dcp-1-4}
Once the encapsulation is done (see \Cref{sec:encapsulation}), \CDA retrieves the $\VUUP_{URL}$, and uses the URL to obtain a valid \VUUP (see step 2.2 in \Cref{fig:download_vuup,tab:download_vuup}). \CDA also guarantees that the content of the VUUP is valid. For this sub-problem, \CDA has a starting context because its initiation task is in a previous sub-problem (see \Cref{sec:stp-ecp-1-2}). \OCS and the \VCS also have starting contexts, which respectively contain the $\VUUP_{URL}$ and \VUUP for the update round (see \Cref{sec:stp-ecp-9-11,sec:stp-ecp-8}). Note that the first steps of this sub-problem, steps 1.1 and 1.2 in \Cref{fig:download_vuup,tab:download_vuup}, are the same as the last steps for the previous sub-problem (see \Cref{sec:stp-ecp-9-11}).

\begin{figure}[ht]
    \centering
    \includegraphics[width=\columnwidth]{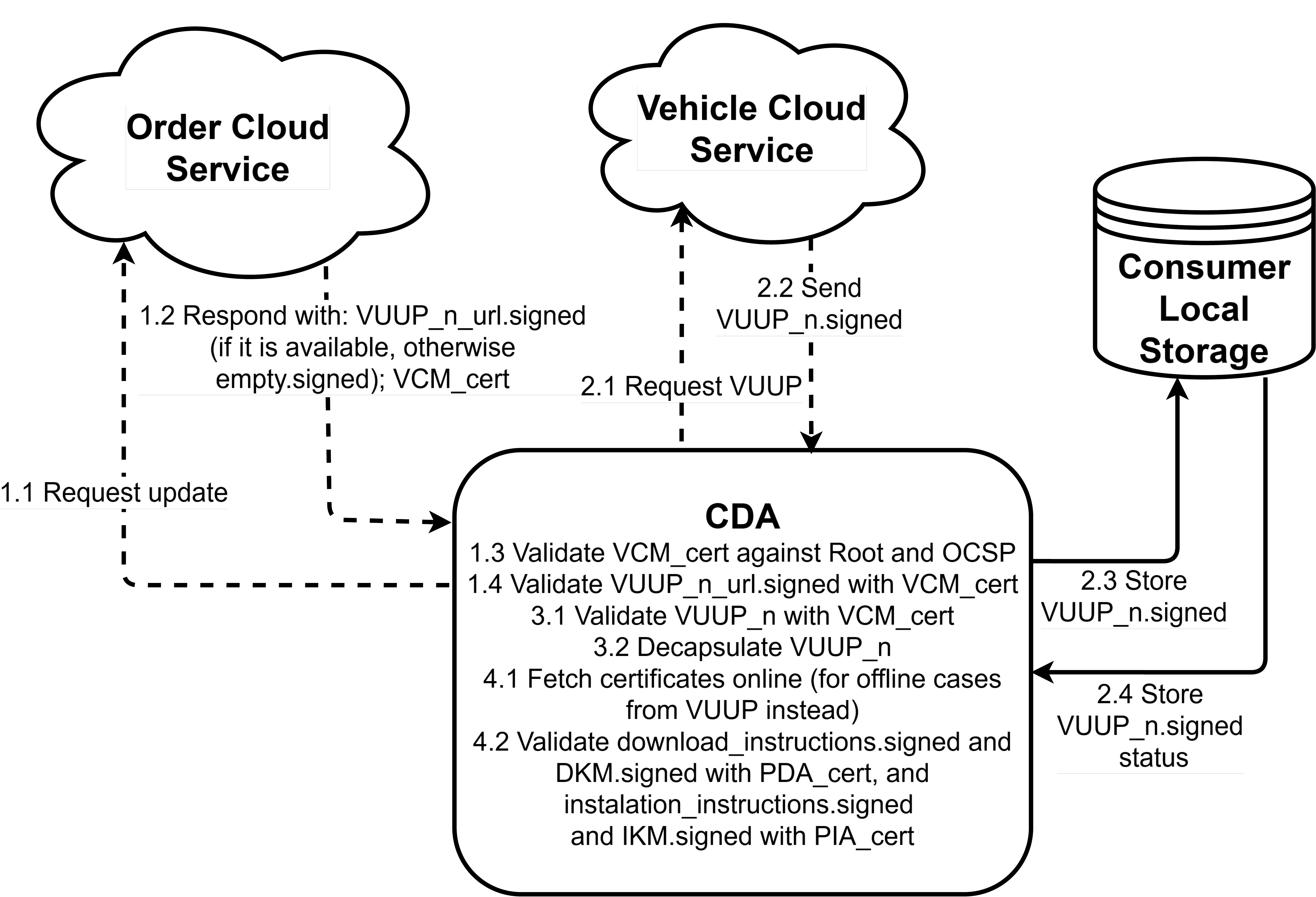}
    \caption{Diagram of the sub-problem \textit{Download VUUP}.}
    \label{fig:download_vuup}
\end{figure}

\begin{table}[ht]
    \centering
    \begin{tabular}{lll}
    \midrule

    (1.1)& $ \CDA \to \OCS $        & $ \Req(VUUP_{URL}) $ \\
    (1.2)& $ \OCS \to \CDA $        & $ [\VUUP_{URL}]_{\VCM}\ \|\ \VCM_{Cert} $ \\
    (2.1)& $ \CDA \to \VCS $        & $ VUUP_{URL}
    $\\
    (2.2)& $ \VCS \to \CDA $        & $ \VUUP$ \\
    (2.3)& $ \CDA \to \CLS $        & $\VUUP$ \\
    (2.4)& $ \CLS \to \CDA $ 
        & $\Suc(VUUP)$ \\
    \midrule
    \end{tabular}
    \caption{Communication scheme for the sub-problem \textit{Download Software Files}.}
    \label{tab:download_vuup}
\end{table}

\begin{center}
    \begin{align*}
        S = & \{\VUUP_{URL}, \ \DI,\ \II,\  \\
        & \DKM_{Key}, \IKM_{Key},\ \MKM_{Key}, \ \SKA_{Key}, \ \VCM_{\SK}, \   \\
        & \ Vehicle_{\SK}, \ \PDA_{\SK},\PIA_{\SK}, \ \PSA_{\SK}, \\
        & Root_{\SK}\} \\
        \mathcal{D} = & \{(\VUUP_{URL}, \ \VCS), (\VUUP, \ \VCM), \ \\
        & \ (\DI, \ \PDA), (\II , \ \PIA), \    \\
        &  (\DKM, \ \PDA),\ (\IKM, \ \PIA), \ (\VCM_{Cert}, \ Root), \ (\PDA_{Cert}, \ Root),\\
        &\ (\PIA_{Cert}, \ Root), \ (Root_{Cert}, \ Root) \}
    \end{align*}
\end{center}

To derive the specific sub-problem requirements from the system requirements (see \Cref{sec:problem-definition}), we define 
$S$, $\mathcal{D}$ and the partial order $\mathcal{P}(\ell) = \ell_i < \ell_{i+1}$ for $1 \leq i \leq 4$ with the following labels:
\begin{labelenumerate}
    \item \OCS has sent the signed $\VUUP_{URL}$.
    \item \CDA has validated the $\VUUP_{URL}$.
    \item \VCS has sent the signed \VUUP.
    \item \CLS has stored the signed \VUUP.
    \item \CDA has validated \DI, \DKM, \II and \IKM.
\end{labelenumerate}

\subsubsection{Step 5--9: Download Software Files}
\label{sec:stp-dcp-5-9}
\CDA requests \CSA to associate master keys in \DKM to specific trusted applications within the Trusted Execution Environment (TEE). This is followed by the decryption of the \DI\ on behalf of \CDA. Then \CDA uses the download instructions to retrieve the software from \SR (see step 9.1 in \Cref{fig:download_software_files,tab:download_software_files}). 
For this sub-problem, \CDA and \CLS have starting contexts because their respective initiation and listening tasks are included in previous sub-problems (see \Cref{sec:stp-ecp-1-2,sec:stp-dcp-1-4}). \CDA's starting context contains the signed \DKM, $\PDA_{Cert}$, $\VCM_{Cert}$, and the signed and encrypted \DI (see \Cref{sec:stp-dcp-1-4}).

\begin{figure}[ht]
    \centering
    \includegraphics[width=\columnwidth]{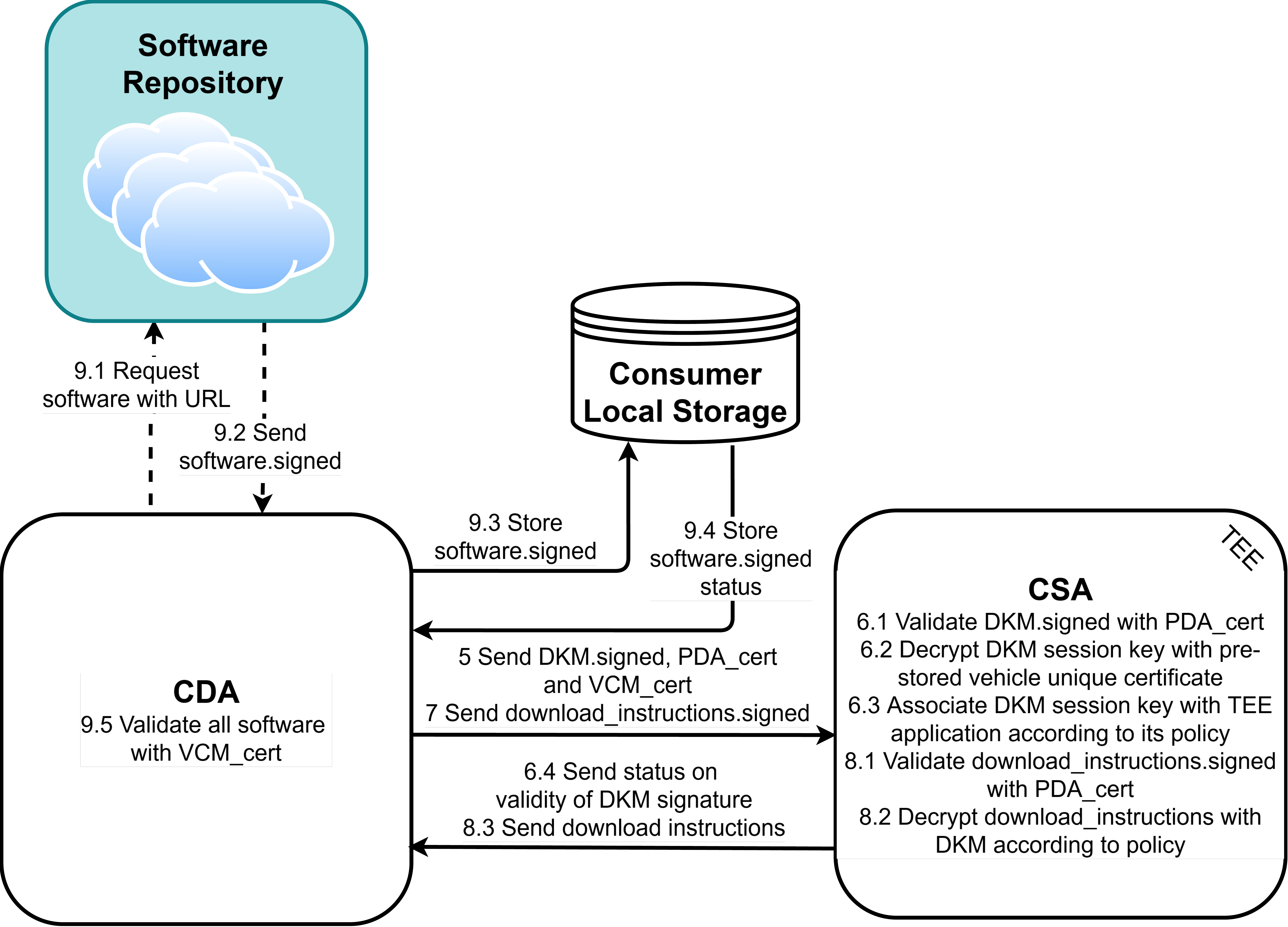}
    \caption{Diagram of the sub-problem \textit{Download Software Files}.}
    \label{fig:download_software_files}
\end{figure}

\begin{table}[ht]
    \centering
    \makebox[\textwidth][c]{
    \begin{tabular}{lll}
    \midrule
    (5)& $ \CDA \to \CSA$           & $ [\DKM]_{\PDA}\ \|\  \PDA_{Cert}\ $ \\
    & & $\quad \|\ \VCM_{Cert}$ \\
    (6.4)& $ \CSA \to \CDA$         & $\Suc(DKM)$ \\
    (7)& $ \CDA \to \CSA$           & $ [SymEnc( $ \\ 
    & & $\quad \DI,$ \\
    & & $\quad \DKM_{Key})]_{\PDA}$ \\
    (8.3)& $ \CSA \to \CDA$         & $ \DI $ \\
    (9.1)& $ \CDA \to \SR$          & $ \SW_{URL} $ \\
    (9.2)&$ \SR \to \CDA$          & $ \SW_{Encased} $ \\
    (9.3)&$ \CDA \to \CLS$         & $ 
    \SW_{Encased} $ \\
    (9.4)& $\CLS \to \CDA$
        & $\Suc(\SW)$\\
    \midrule
    \end{tabular}
    }
    \caption{Communication scheme for the sub-problem \textit{Download Software Files}. Note that the $\VCM_{Cert}$ is sent to \CSA in step 5, so that \CSA has access to the certificate in step 17 (see \Cref{sec:stp-dcp-17}) when it needs to validate software signatures \citep{kstrandberg}.}
    \label{tab:download_software_files}
\end{table}

\begin{center}
    \begin{align*}
        S = & \{\SW, \ \DI, \ \DKM_{Key}, \ \SKA_{\SW_{Key}}, \\
        & \VCM_{\SK}, \ Vehicle_{\SK}, \ \PDA_{\SK}, \ Supplier_{\SK}, \\
        & Root_{\SK}\} \\
        \mathcal{D} = & \{(\SW, \ Supplier), \ (\DI, \ \PDA), \\
        & (\DKM, \ \PDA), \ (\VCM_{Cert}, \ Root),\ (Vehicle_{Cert}, \ Root), \\
        & (\PDA_{Cert}, \ Root), \ (Root_{Cert}, \ Root)\}
    \end{align*}
\end{center}

To derive the specific sub-problem requirements from the system requirements (see \Cref{sec:problem-definition}), we define $S$, $\mathcal{D}$, and the partial order $\mathcal{P}(\ell) = \ell_i < \ell_{i+1}$ for $1 \leq i \leq 5$ with the following labels:
\begin{labelenumerate}
    \item \CSA has associated the \DKM.
    \item \CSA decrypts the \DI.
    \item \CDA has received the decrypted \DI.
    \item \SR has sent $\SW_{Encased}$.
    \item \CLS has stored $\SW_{Encased}$.
    \item \CDA has validated $\SW_{Encased}$.
\end{labelenumerate}

\subsubsection{Step 10--14: Decrypt Installation Instructions}
\label{sec:stp-dcp-10-14}
On behalf of \CIA, \CSA initiates the \IKM, followed by the decryption of \II\ (see steps 12.5 and 14.3 in \Cref{fig:decrypt_installation_instructions,tab:decrypt_installation_instructions}). For this sub-problem, \CDA and \CSA have starting contexts because their respective initiation and listening tasks are run in previous sub-problems (see ~\Cref{sec:stp-dcp-1-4,sec:stp-dcp-5-9}). \CDA's starting context contains $\PIA_{Cert}$, the signed \IKM, and the signed and encrypted \II\ (see \Cref{sec:stp-dcp-1-4}).  

\begin{figure}[ht]
    \centering
    \includegraphics[width=\columnwidth]{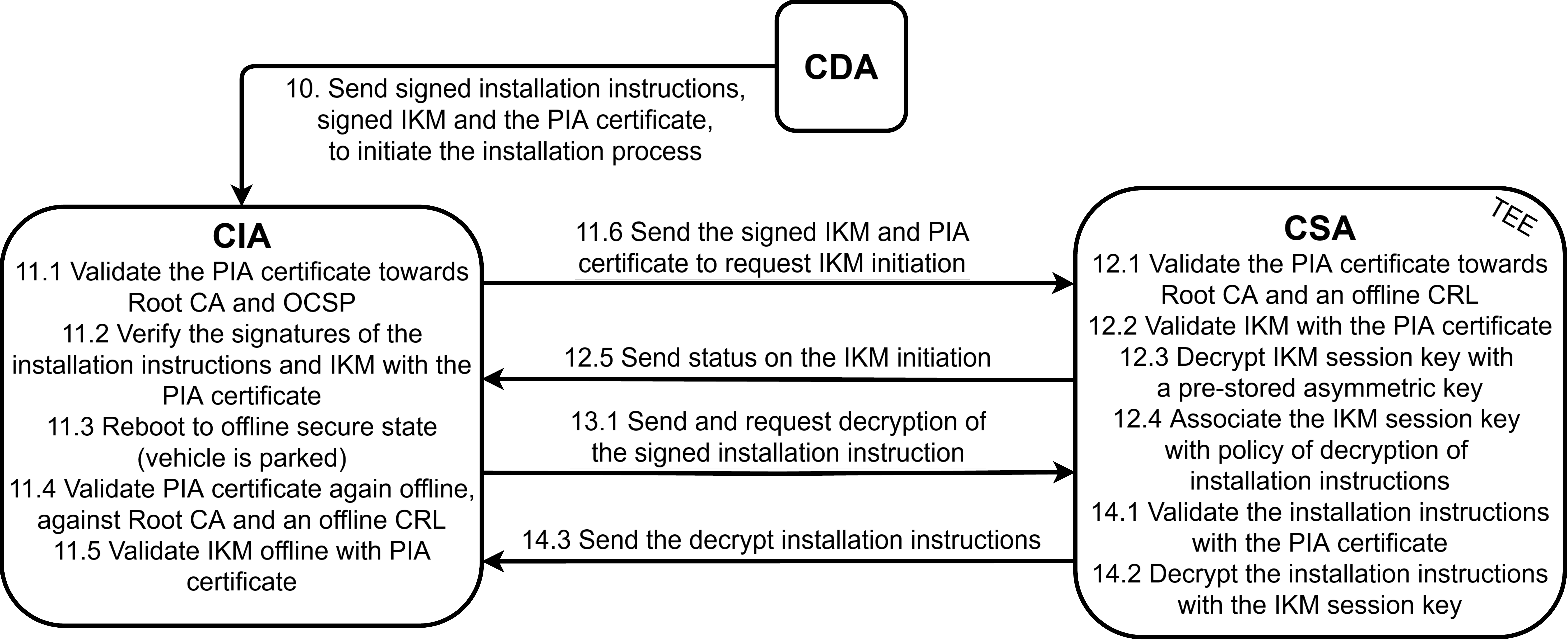}
    \caption{Diagram of the sub-problem \textit{Decrypt Installation Instructions}.}
    \label{fig:decrypt_installation_instructions}
\end{figure}

\begin{table}[ht]
    \centering
    \begin{tabular}{lll}
    \midrule
    
    (10)& $ \CDA \to \CIA $ & $ [SymEnc( $\\
    & & $\quad \II, $ \\
    & &$ \quad \IKM_{Key})]_{\PIA} $ \\
        && $\quad \|\ [\IKM]_{\PIA}\ \|\ \PIA_{Cert} $ \\
    (11.6)& $ \CIA \to \CSA $ & $ [\IKM]_{\PIA}\ \|\ \PIA_{Cert} $ \\
    (12.5)& $ \CSA \to \CIA $ & $ \Suc(\IKM) $ \\
    (13.1)& $ \CIA \to \CSA $ & $ [SymEnc( $ \\ 
    & & $\quad \II, $\\
    & & $\quad \IKM_{Key})]_{\PIA} $ \\
    (14.3)& $ \CSA \to \CIA $ & $ \II $ \\
        \midrule

    \end{tabular}
    \caption{Communication scheme for the sub-problem \textit{Setup Installation Environment}.}
    \label{tab:decrypt_installation_instructions}
\end{table}

\begin{center}
    \begin{align*}
        S = &\{\II, \ \IKM_{Key}, \
        \MKM_{Key}, \ 
        \SKA_{Key}, \\
        & \PIA_{\SK},\ \PSA_{\SK}, \ Root_{\SK}\} \\
        \mathcal{D} = & \{(\II, \ \PIA), \ (\IKM, \ \PIA), \ (\PIA_{Cert}, \ Root), \\
        & \ (Root_{Cert}, \ Root)\}
    \end{align*}
\end{center}

To derive the specific sub-problem requirements from the system requirements (see \Cref{sec:problem-definition}), we define $S$, $\mathcal{D}$ and the partial order $\mathcal{P}(\ell) = \ell_i < \ell_{i+1}$ for $1 \leq i \leq 3$ with the following labels:
\begin{labelenumerate}
    \item \CIA initiates offline mode.
    \item \CSA associates the \IKM session key with its policy.
    \item \CSA decrypts the \II.
    \item \CIA receives the decrypted \II.
\end{labelenumerate}

\subsubsection{Step 15--16: Setup Installation Environment}
\label{sec:stp-dcp-15-16}
On behalf of \CIA (see Step 15.4 in \Cref{fig:setup_installation_environment,tab:setup_installation_environment}), \CSA sets up an installation environment using the \MKM. Afterwards, \CSA is ready to decrypt software and unlock the ECUs (see \Cref{sec:stp-dcp-17}). For this sub-problem, \CSA and \CIA have starting contexts because their listening tasks are run in previous sub-problems (see \Cref{sec:stp-dcp-5-9,sec:stp-dcp-10-14} respectively). \CIA's starting context has access to a set of decrypted \II\ (see \Cref{sec:stp-dcp-10-14}).

\begin{figure}[ht]
    \centering
    \includegraphics[width=\columnwidth]{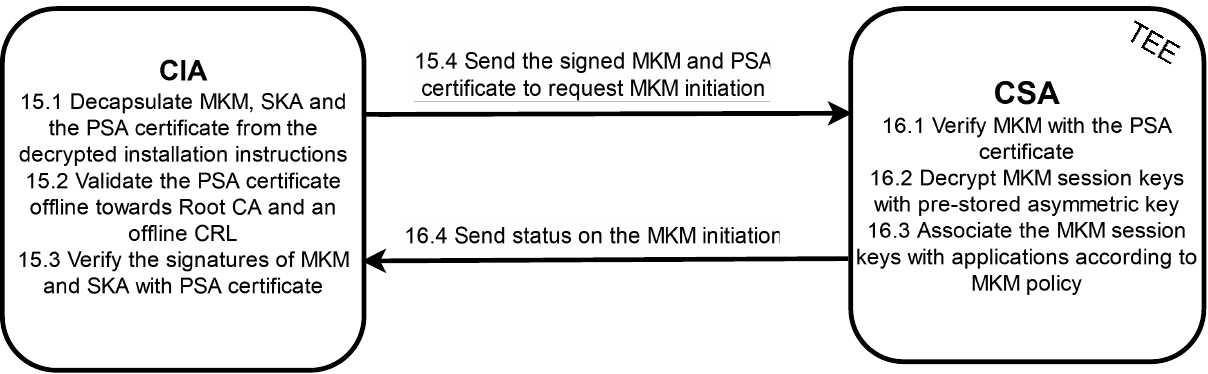}
    \caption{Diagram of the sub-problem \textit{Setup Installation Environment}.}
    \label{fig:setup_installation_environment}
\end{figure}

\begin{table}[ht]
    \centering
    \begin{tabular}{lll}
    \midrule
    (15.4)& $ \CIA \to \CSA $ & $ [\MKM]_{\PSA}\ \|\  \PSA_{Cert} $ \\
    (16.4)& $ \CSA \to \CIA $ & $ \Suc(\MKM) $ \\
    \midrule

    \end{tabular}
    \caption{Communication scheme for the sub-problem \textit{Setup Installation Environment}.}
    \label{tab:setup_installation_environment}
\end{table}

\begin{center}
    \begin{align*}
        S = & \{\II, \ \MKM_{Key}, \ 
        \SKA_{Key}, \ \PSA_{\SK}, \\
        &\ Root_{\SK} \} \\
        \mathcal{D} =& \{(\II, \ \PIA), \ (\SKA, \ \PSA), \ (\MKM, \PSA), 
         \\
         & \ (\PSA_{Cert}, Root), \ (Root_{Cert}, Root)\}
    \end{align*}
\end{center}

To derive the specific sub-problem requirements from the system requirements (see \Cref{sec:problem-definition}), we define $S$, $\mathcal{D}$ and the partial order $\mathcal{P}(\ell) = \ell_i < \ell_{i+1}$ for $1 \leq i \leq 2$ with the following labels:
\begin{labelenumerate}
    \item \CIA validates \MKM and \SKA.
    \item \CSA associated the \MKM keys with their policy.
    \item \CIA receives \MKM status from \CSA.
\end{labelenumerate}

\subsubsection{Step 17: Stream Update to ECU}
\label{sec:stp-dcp-17}
The goal of this sub-problem is to stream software updates to ECUs. This can mainly be divided into two processes; first, an ECU needs to be unlocked via a challenge-response schema, i.e., security access. The \CSA solves the unlocking on behalf of \CIA (see step 17.1 -- 17.9 in \Cref{fig:stream_update_to_ecu} and \Cref{tab:stream_update_to_ecu}). 
Second, the software is transmitted to the ECU from \CIA, after it has been decrypted by the \CSA. The software signature is validated, and depending on whether the software was successfully installed or not, a status message is sent to \CIA (see steps 17.12 -- 17.21). 
These two steps are then repeated to install different software on different ECUs \citep[Fig. 4]{UniSUF}. 
However, as mentioned in \Cref{sec:ecus} we only consider the update of one single ECU for one occasion.

For this sub-problem, \CLS, \CSA, and \CIA have starting contexts because their respective initiation and listening tasks are run in previous sub-problems (see \Cref{sec:stp-dcp-1-4,sec:stp-dcp-5-9,sec:stp-dcp-10-14}). \CLS's starting context contains $Software_{Encased}$. \CSA's starting context contains $\MKM_{\SA_{Key}}$, $\MKM_{\SW_{Key}}$ and $\VCM_{Cert}$. \CIA's starting context contains $\SKA_{\SA_{Key}}$, $\SKA_{\SW_{Key}}$, $\MKM_{\SA_{Key}}$ and $\MKM_{\SW_{Key}}$.

\begin{figure}[ht]
    \centering
    \includegraphics[width=\columnwidth]{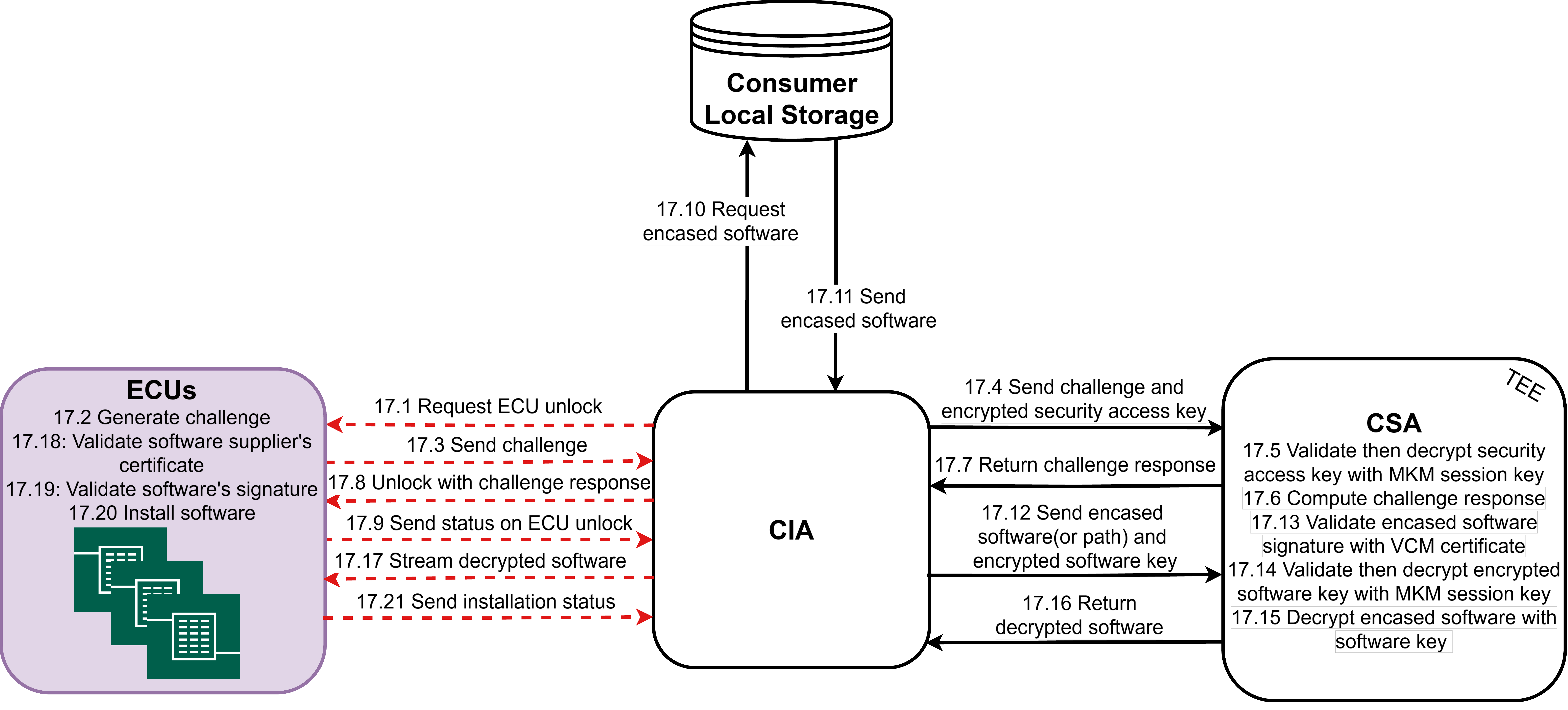}
    \caption{Diagram of the sub-problem \textit{Stream Update to ECU}.}
    \label{fig:stream_update_to_ecu}
\end{figure}

\begin{table}[ht]
    \makebox[\textwidth][c]{
    \begin{tabular}{lll}
    \midrule

    (17.1)& $  \CIA \to ECU $         & $ \Req(ECU) $ \\
    (17.3)& $  ECU \to \CIA $         & $ ECU_{Challenge} $ \\
    (17.4)& $  \CIA \to \CSA $         & $ ECU_{Challenge}\ $ \\
    & &  $\quad \|\ AuthSymEnc( $ \\ 
    & & $\qquad \SKA_{\SA_{Key}}, $ \\
    & & $\qquad MKM_{\SA_{Key}}) $ \\
    (17.7)& $  \CSA \to \CIA $         & $ ECU_{Challenge-Response}    $ \\
    (17.8)& $  \CIA \to ECU $         & $ ECU_{Challenge-Response}    $ \\
    (17.9)& $  ECU \to \CIA $         & $ ECU_{Unlocked} $ \\
    (17.10)& $ \CIA \to \CLS $          & $ \Req(Software)$ \\
    (17.11)& $ \CLS \to \CIA $         & $\SW_{Encased}$ \\
    (17.12)& $  \CIA \to \CSA $         & $ \SW_{Encased} $\\
    & & $\quad \|\ AuthSymEnc( $ \\
    & & $\qquad \SKA_{\SW_{Key}}, $ \\
    & & $\qquad \MKM_{\SW_{Key}}) $ \\
    (17.16)& $  \CSA \to \CIA $         & $ [\SW]_{Supplier} $ \\
    (17.17)& $  \CIA \to ECU $         & $ [\SW]_{Supplier} $ \\
    (17.21)& $  ECU \to \CIA $         & $ ECU_{Installation-Status} $ \\

    \midrule
    \end{tabular}
    }
    \caption{Communication scheme for the sub-problem \textit{Stream Update to ECU}. Note that this sub-problem can be repeated ~\citep[Fig.4]{UniSUF}, and therefore $\SKA_{\SA_{Key}}$, $\SKA_{\SW_{Key}}$ and $\SW$ might refer to different keys and software for different iterations. }
    \label{tab:stream_update_to_ecu}
 \end{table}

\begin{center}
    \begin{align*}
        S = & \{\SKA_{\SW_{Key}},\ \SKA_{\SA_{Key}}, \ \MKM_{\SW_{Key}}, \\
        & \MKM_{\SA_{Key}}, \ Vehicle_{\SK}, \ Supplier_{\SK}, \\
        & VCM_{\SK}, \ Root_{\SK}\} \\
        \mathcal{D} = & \{(\SW, \ \Prod), \ (\SKA_{\SW_{Key}}, \ \PSA), \\
        & (\SKA_{\SA_{Key}}, \ \PSA), \ (\MKM_{\SW_{Key}}, \ \PSA), \\
        & (\MKM_{\SA_{Key}}, \ \PSA), \  (Supplier_{Cert}, \ Root),\ (VCM_{Cert}, \ Root), \\
        & (Root_{Cert}, \ Root)\} \\
    \end{align*}
\end{center}

To derive the specific sub-problem requirements from the system requirements (see \Cref{sec:problem-definition}), we define $S$, $\mathcal{D}$ 
and the partial order $\mathcal{P}(\ell) = \{\ell_1 < \ell_2, < \ell_3 < \ell_4 < \ell_9, \ \ell_5 < \ell_6, < \ell_7 < \ell_8 < \ell_9$\}.
In other words, both sequences $\ell_{1-4}$ and $\ell_{5-8}$ happen before $\ell_9$ but can be executed concurrently.

\begin{labelenumerate}
    \item \ECU generates challenge.
    \item \CIA forwards the challenge to \CSA.
    \item \CSA responds to the challenge.
    \item \ECU accepts the challenge.
    \item \CLS sends $Software_{Encased}$ to \CIA.
    \item \CIA receives the $Software_{Encased}$ from \CLS.
    \item \CIA sends the $Software_{Encased}$ and $Software_{Key}$ to \CSA.
    \item \CSA decrypts software.
    \item \ECU installs software.
\end{labelenumerate}

\section{Methods}
\label{sec:methods}
We present the methods for simulating our assumptions and system settings (see \Cref{ch:preliminaries}) in ProVerif. 
Furthermore, we outline the methods used to model our requirements and provide a summary of the verification results. 
We also formally demonstrate our proofs for intra-round uniqueness and termination, i.e., \Cref{req:intra-round-uniqueness,req:termination}.

\subsection{Simulation of Cryptographic Primitives in ProVerif}
\label{sec:methods-crypto-primitives}
In this section, the simulation of cryptographic primitives is described.

\subsubsection{Unauthenticated Symmetric Encryption}
\citet{proverif-manual} illustrates an example of unauthenticated symmetric encryption, as seen in~\Cref{lst:unauth-sym-enc}, by defining functions for symmetric encryption \verb|senc| and symmetric decryption \verb|sdec|. Both functions take arguments of a \verb|bitstring| (a built-in type) and a \verb|key|.
Note that \verb|key| is a user-defined type that represents symmetric keys.
The functions \verb|senc| and \verb|sdec| output the ciphertext and plaintext, respectively.
The equations on lines 4 and 5 describe the relationship between the functions \verb|senc| and \verb|sdec|, where \verb|m| is the message and \verb|k| is the key.
These equations ensure that whenever the algorithm decrypts some data,  \verb|sdec| outputs the original plaintext if and only if the same key was used during encryption and the ciphertext has not been modified.
Otherwise, it outputs some arbitrary data.

\begin{listing}[ht]
     \begin{minted}{text}
type key.
fun senc(bitstring, key): bitstring.
fun sdec(bitstring, key): bitstring.
equation forall m: bitstring, k :key; sdec(senc(m,k), k) = m.
equation forall m: bitstring, k :key; senc(sdec(m,k), k) = m.
    \end{minted}   
    \caption{Unauthenticated symmetric encryption \cite[Sec. 4.2.2]{proverif-manual}.}
    \label{lst:unauth-sym-enc}
\end{listing}

\subsubsection{Authenticated Symmetric Encryption}
The authenticated encryption in \Cref{lst:auth-sym-enc} ensures that the same key is used for encryption and decryption and that modified ciphertexts are detected.
In ProVerif, we model this by defining the decryption as a destructor through the reserved word \verb|reduc|~\citep[Sec. 3.1]{proverif-manual}. 
If the mentioned abnormalities are detected during the decryption, ProVerif blocks. Alternatively, the \verb|authSdec|: \mintinline[bgcolor=backcolour]{text}{let m = authSdec(c) in P else Q} syntax can be used, such that process \verb|P| is run if decryption succeeded and \verb|Q| is run on failure. If \mintinline{text}{else Q} is omitted, then the process \verb|P| terminates on failure.

To clarify, the behavior of the destructor is specified through reduction rules.
These can be expressed with \texttt{reduc forall}, which introduces universally quantified rewrite rules.
In particular, the rule in \cref{lst:auth-sym-enc} means: for all keys \texttt{k} and messages \texttt{m}, decrypting a ciphertext produced by \texttt{authSenc(m, k)} with the same key \texttt{k} yields the original message \texttt{m}.
In other words, \texttt{reduc forall} tells ProVerif how to symbolically simplify terms when the left-hand side pattern matches.
If no rule applies (e.g., due to a wrong key or a tampered ciphertext), the reduction does not occur, and the process either blocks or executes the else branch.

\begin{listing}[ht]
     \begin{minted}{text}
fun authSenc(bitstring, key): bitstring.
reduc forall m: bitstring, k: key; authSdec(authSenc(m, k), k) = m.
    \end{minted}   
    \caption{Authenticated symmetric encryption \cite[Sec. 3.1.2]{proverif-manual}.}
    \label{lst:auth-sym-enc}
\end{listing}

\subsubsection{Asymmetric Encryption}
Asymmetric encryption  (see \Cref{lst:asym-enc}) is similar to authenticated symmetric encryption. 
The main difference is in the consideration of key pairs. In ProVerif, the keypair is modeled such that the public key can be retrieved from the private key, but not the other way around~\citep{proverif-manual}. Also, a message decrypted with a private key must have been encrypted with the corresponding public key.

\begin{listing}[ht]
     \begin{minted}{text}
type skey.
type pkey.
fun pk(skey): pkey.
fun aenc(bitstring, pkey): bitstring.
reduc forall m: bitstring, k: skey; adec(aenc(m, pk(k)), k) = m.
    \end{minted}   
    \caption{Asymmetric encryption \cite[Sec. 3.1.2]{proverif-manual}.}
    \label{lst:asym-enc}
\end{listing}

\subsubsection{Hash Function}

The hash function (see~\Cref{lst:hash}) takes a \verb|bitstring| as input and outputs an arbitrary \verb|bitstring|, and has no associated destructors or equations~\citep[Sec. 4.2.5]{proverif-manual}. By excluding destructors and equations, the hash function resembles a random oracle model, making it impossible to reverse the hash to obtain the original value.

\begin{listing}[htbp]
     \begin{minted}{text}
fun hash(bitstring): bitstring. 
    \end{minted}   
    \caption{Hash function \citep[Sec. 4.2.5]{proverif-manual}.}
    \label{lst:hash}
\end{listing}

\subsubsection{Digital Signature}
\label{sec:dig-sign}

The modeling of digital signatures can be more complex in comparison to other cryptographic primitives.
\citet{proverif-manual} describe a method to simulate the signing function. 
It is modeled as a function that takes a message of type \verb|bitstring| and a signing private key of type \verb|sskey| as input and outputs the signed message as a \verb|bitstring| (see~\Cref{lst:proverif-digital-signatures}). 
A reducer is used to validate the authenticity of a signed message. The reducer is specified to pattern match on the correct private key, similar to how asymmetric encryption is simulated. 

\begin{listing}[htbp]
    \begin{minted}{text}
type sskey.
type spkey.

fun spk(sskey): spkey.
fun sign(bitstring, sskey): bitstring.

reduc forall m: bitstring, k: sskey; getmess(sign(m, k)) = m.
reduc forall m: bitstring, k: sskey; checksign(sign(m, k), k) = m.
    \end{minted}   
    \caption{Digital signatures schema presented in the ProVerif manual~\citep[Sec. 3.1.2.]{proverif-manual}.}
    \label{lst:proverif-digital-signatures}
\end{listing}

However, UniSUF requires a more complex signing schema, since data signing in UniSUF considers multiple steps and messages.
Therefore, the previously stated method of signing function fails to capture all our use cases.

We specify a more complex signing schema (see~\Cref{lst:digital-signatures}). This schema allows for building a signature step-by-step:
\begin{itemize}
    \item The \verb|sgnHash| is used to sign a hash.
    \item The \verb|createSgn| function takes the signed hash of a message and appends it to the message.
    \item The \verb|validateSgn| function checks that a properly signed message contains the message and the signed hash of the message.
    \item The \verb|equation| allows the more traditional \verb|sgn| signing function to be used interchangeably with \verb|createSgn|.
\end{itemize}

In summary, \verb|sgnHash| and \verb|createSgn| are used to properly model the signing process in UniSUF, while \verb|sgn| serves as a shorthand to create signatures.

\begin{listing}[ht]
    \begin{minted}{text}
type sskey.
type spkey.
fun spk(sskey): spkey.

fun sgnHash(bitstring, sskey): bitstring.
fun sgn(bitstring, sskey): bitstring.
fun createSgn(bitstring, bitstring): bitstring.

equation forall m: bitstring, ssk: sskey;
    createSgn(m, sgnHash(hash(m), ssk)) = sgn(m, ssk).
reduc forall m: bitstring, k: sskey;
    validateSgn(createSgn(m, sgnHash(hash(m), k)), spk(k)) = m.
    \end{minted}   
    \caption{Digital signatures \cite[Sec. 3.1.2]{proverif-manual}.}
    \label{lst:digital-signatures}
\end{listing}

\subsubsection{Certificates}

For certificates, we use a simplified version of the implementation presented by \citet[Appendix A.1]{junlangwang}. 
On line 2 (see~\Cref{lst:certificates}), the \verb|createCert| function outputs a certificate from a signing public key and a signing private key. 
We define two destructors \verb|validateCert| and \verb|getCert| for the function \verb|createCert|. 
The destructor \verb|validateCert| validates whether the public key corresponds to the entity that issued the certificate, and outputs the holder's public key if and only if this validation passes.
The destructor \verb|getCert| is similar to \verb|validateCert|, except that \verb|getCert| does not validate the certificate.

\begin{listing}[ht]
    \begin{minted}{text}
type cert.
fun createCert(spkey, sskey): cert.
reduc forall holderSpk: spkey, issuerSsk: sskey;
    validateCert(createCert(holderSpk, issuerSsk), spk(issuerSsk)) = (holderSpk, spk(issuerSsk)).
reduc forall holderSpk: spkey, issuerSsk: sskey;
    getCert(createCert(holderSpk, issuerSsk)) = (holderSpk, spk(issuerSsk)).
    \end{minted}   
    \caption{Certificates \citep[Appendix A.1]{junlangwang}.}
    \label{lst:certificates}
\end{listing}

\subsection{Representing Requirements in ProVerif}
\label{sec:methods-requirements}

ProVerif allows assertions of the system properties that the model must fulfill.
All such assertions are declared using the \verb|query| keyword~\citep{proverif-manual}.
In this section, we explain our methods for specifying the security requirements of UniSUF in ProVerif.

\subsubsection{Modeling Confidential Secrets}
\label{sec:confidential-secrets}

One of the main features of ProVerif is the ability to verify the secrecy of the variables~\citep{proverif-manual}. 
For secrecy queries, such as \mintinline{text}{query attacker(new x).}, ProVerif will attempt to prove that there exists no reachable state in which the attacker can access the local variable \verb|x|.
This query is a shorthand for \mintinline{text}{not attacker(new x).}, which means the query will output true if there is no state where the attacker can obtain the secret. 
More complex queries can also be created.
We use the signature function in \Cref{sec:dig-sign} as an example: \mintinline{text}{query attacker(sgn(new x, new ssk)).}. 
This checks that \verb|x| signed with the private key \verb|ssk| cannot be learned by the adversary.

\subsubsection{Modeling Integrity of Handling Events}
\label{sec:integrity-handling-events}

ProVerif allows users to define events using the keyword \verb|event|~\citep[Sec. 3.2.2]{proverif-manual}. 
These user events represent our handling events, with the event's name serving as the label ($\ell$). 
Also, data can be associated with user events, as shown in lines 1--3 in~\Cref{lst:correspondence}, thereby representing the cryptographic materials ($d$) in the handling events. The update round identifier ($r$) is also passed in as data to the user events. By using the same $r$ for all events, we ensure that the execution stays the same during the update round. However, $d$ is specified for each handling event in the execution.

\begin{listing}[htbp]
    \begin{minted}{text}
event A(bitstring, bitstring).
event B(bitstring, bitstring).
event C(bitstring, bitstring).

query r: bitstring, d1: bitstring; d2: bitstring, d3:bitstring;
    event(C(r, d1));
    event(C(r, d1)) ==> event((B(r, d2)) ==> event(A(r, d3))).
    \end{minted}
    \caption{Modeling integrity of handling events in ProVerif. Line 6 asserts that event \texttt{C} is reachable. The nested correspondence assertion in line 7 specifies that if the event \texttt{C} has happened, then event \texttt{B} must have occurred at an earlier time, and event \texttt{A} must have happened before event \texttt{B}.}
    \label{lst:correspondence}
\end{listing}

The integrity of handling events requires that handling events are executed according to a specified partial order.
ProVerif correspondence assestions~\citep[Sec. 3.2.2]{proverif-manual} represents this requirement because they specify relationships between the events ensuring they occur in the desired order.
For example, the assertion \mintinline{text}{event(e1) ==> event(e2)} states that whenever event \mintinline{text}{e1} has occurred, \mintinline{text}{e2} must have occurred previously. 

Correspondence assertions can be extended to model a chain of events by using nested correspondence assertions~\citep[Sec. 4.3.1]{proverif-manual}.
The nested correspondence assertion in~\Cref{lst:correspondence} specifies that if event \verb|C| has occurred, then event \verb|B| must have occurred previously, and event \verb|A| must have occurred before \verb|B|.
Note that line 6 is a reachability query that checks if the event \verb|C| is reachable. 
If event \verb|C| is never reached, then the entire nested correspondence assertion is vacuously true and, therefore, not meaningful. That is, any system will meet the requirement as long as it never executes \verb|C|.

The partial order of handling events can be modeled by creating different queries for chains of correspondence assertions. Lines 1 and 2 in~\Cref{lst:handeling-events-partial-order} give an example of two independent sequences that can occur concurrently: \verb|C| must have occurred before \verb|B| and \verb|E| must have occurred before \verb|D|. 
Additionally, line 3 in~\Cref{lst:handeling-events-partial-order} illustrates a method for asserting that multiple events have occurred before a specific event~\citep[Sec. 4.3.1]{proverif-manual}. 
By linking the events using conjunctions, the listed events must have happened before the target event but in no specific order.
This enables modeling concurrent events in a sequence of events.

\begin{listing}[htbp]
    \begin{minted}{text}
query event(B) ==> event(C).
query event(D) ==> event(E).
query event(A) ==> (event(B) && event(D)).
    \end{minted}
    \caption{Code example of partially ordered events being modeled in ProVerif~\citep[Sec. 4.3.1]{proverif-manual}. Note that declarations of events are excluded to avoid clutter.}
    \label{lst:handeling-events-partial-order}
\end{listing}

\subsubsection{Modeling Integrity of Cryptographic Materials}
\label{sec:integrity-cryptograhic-materials}

The integrity of cryptographic materials requires that the materials being processed remain the same during a sub-problem (see \Cref{sec:problem-definition}). This can be modelled by extending the approach used for the integrity of handling events (see~\Cref{sec:integrity-handling-events}). 
By using correspondence assertions, we model the relationships between cryptographic materials that are consumed by different handling events. Because we look at relationships, we can verify that the materials remain unchanged even as they are involved in various cryptographic transformations.

The code snippet in~\Cref{lst:integrity-crypto-mat} demonstrates this approach. Line 3 specifies that the cryptographic materials \verb|cm| have not been modified between the events. Additionally, by looking at the signing function \verb|sgn| (see \Cref{sec:dig-sign}) and the private key \verb|ssk|, we can see that the \verb|cm| was signed by a specific private key. Namely, the key belonging to the public key we specify in \verb|A| and \verb|B|: \verb|pk(ssk)|.

\begin{listing}[htbp]
    \begin{minted}{text}
query cm: bitstring, ssk: sskey; 
    event(C(sgn(aenc(cm, spk(ssk)), ssk)));
    event(C(sgn(aenc(cm, spk(ssk)), ssk))) ==> (event(B(aenc(cm, spk(ssk)))) ==> (event(A(cm, pk(ssk))))).
    \end{minted}
    \caption{Modeling integrity of cryptographic materials. The assertion verifies that if event \texttt{C} occurs (\texttt{cm} is encrypted and then signed), then event \texttt{B} must have occurred earlier (\texttt{cm} was encrypted), and event \texttt{A} must have happened even earlier (the original unencrypted \texttt{cm} was available).}
    \label{lst:integrity-crypto-mat}
\end{listing}

Because we divide UniSUF into sub-problems (see~\Cref{ch:subproblems}), there are assertions where we need cryptographic materials that are not used in the sub-problem. For example, some data could have been previously signed in another sub-problem. In that case, we need to create the signature with a private key only available during the system setup (see~\Cref{sec:methods-system-setup}), but not for any entities in the sub-problem. We create a special \mintinline{text}{event started(r, d).} that is executed during the system setup and contains cryptographic materials ($d$) not used in the sub-problem. This event allows us to make assertions with these materials.

As explained in~\Cref{sec:integrity-handling-events} and in this section, both integrity requirements can be specified using nested correspondence assertions.
Therefore, we use a single nested correspondence assertion to specify the requirements for both handling events and the cryptographic materials.

\subsubsection{Modeling Inter-Round Uniqueness}
\label{sec:inter-round-uniqueness}

Informally, inter-round uniqueness means that no two distinct update rounds produce the same data.
We can verify this property in ProVerif using correspondence assertions, similar to the approach  used by \citet{junlangwang}.
\Cref{lst:inter-uniqueness} demonstrates how the inter-round uniqueness property is expressed in ProVerif.
In this example, line 2 specifies a precondition that checks whether both instances of the event \verb|A| are reachable.
Since both instances of \verb|A| produce the same data, they must originate from the same update round.
Consequently, the assertion on line 3 requires that if both events are reachable, they must have occurred within the same update round.

\begin{listing}[htbp]
    \begin{minted}{text}
query round1: bitstring, round2: bitstring, data: bitstring;
    event(A(round1, data)) && event(A(round2, data))
    ==> round1 = round2.
    \end{minted}   
    \caption{Code example of the inter-round uniqueness query for event \texttt{A}.}
    \label{lst:inter-uniqueness}
\end{listing}

\subsubsection{Verification Summary}
\label{sec:verification_summary}

\begin{table}[htbp]
\caption{\label{tab:req-subproblems-pv}Summary of mapping from system-level requirements to ProVerif specifications and the corresponding verification results.}
\begin{tblr}{
  colspec = {|X[0.22,l,m] | X[0.64,l,m] | X[0.12,c,m]|},
  hlines,
  row{1} = {c}, 
}
\textbf{Requirement (\cref{sec:problem-definition})} &
\textbf{ProVerif formulation (\cref{sec:confidential-secrets,sec:integrity-handling-events,sec:integrity-cryptograhic-materials,sec:inter-round-uniqueness})} &
\textbf{Result} \\
\emph{Confidential Secrets} (Req.~\ref{req:confidential-secrets}) & Secrecy queries, e.g., \texttt{query attacker(new S)}, for all confidential materials \texttt{S}, such as cryptographic keys and software updates. & Verified \\
\emph{Integrity of Cryptographic Materials} (Req.~\ref{req:integrity-of-cryptographic-materials}) & The mapping is similar to that of \emph{Integrity of Handling Event}, but each event is extended with parameters specifying the expected cryptographic materials, e.g.,  \texttt{event(e\_i(d\_1,\dots,d\_j))}, for all handling events and cryptographic materials \{\texttt{d\_1,\dots,d\_j}\} specified. & Verified \\
\emph{Inter-Round Uniqueness} (Req.~\ref{req:inter-round-uniqueness}) & Correspondence assertions to check that no two distinct rounds produce the same data, e.g., \texttt{event(e\_i(r\_1, d)) \&\& event(e\_i(r\_2, d)) ==> r\_1 == r\_2}, for any pair of rounds \texttt{r\_1} and \texttt{r\_2}, some event \texttt{e\_i}, and all produced materials \texttt{d} in the current sub-problem. & Verified \\
\emph{Intra-Round Uniqueness} (Req.~\ref{req:intra-round-uniqueness}) & Not verified using ProVerif. See proof in \cref{sec:proof-intra-round-uniqueness}. & Proven \\
\emph{Integrity of Handling Event} (Req.~\ref{req:integrity-of-handling-events}) & Reachability queries, e.g.,  \texttt{event(e\_1)}, and correspondence assertions using event queries, e.g.,  \texttt{event(e\_1) ==> \dots\ ==> event(e\_j)}, for all handling events \{\texttt{e\_1,\dots,e\_j}\} specified. & Verified \\
\emph{Termination} (Req.~\ref{req:termination}) & Not verified using ProVerif. See proof in \cref{sec:proof-termination}. & Proven \\

\end{tblr}
\end{table}

\Cref{tab:req-subproblems-pv} summarizes the mapping between each system-level requirement from \Cref{sec:problem-definition} and the corresponding ProVerif artifacts used for verification, along with the outcomes obtained.

\begin{table}[htbp]
\centering
\caption{\label{tab:verif-times}ProVerif verification time in milliseconds for each of the three main problems (e.g., preparation, encapsulation, and decapsulation) and their respective sub-problems, see \cref{ch:subproblems}. As mentioned in \Cref{sec:our_contribution}, the verification overhead occurs only at design time.}
\begin{tblr}{
  colspec = {llrr},
  hline{1,Z} = {1.5pt},
  hline{2,Y} = {1pt},
  hline{5,13} = {0.5pt,dashed},
}
Task  & Sub-problem & Time [ms] & \% of Total \\
Preparation   & Step 1--4   &  $31 \pm 1$  & 1.47\%  \\
Preparation   & Step 5--6   &  $21 \pm 1$  & 1.00\%  \\
\textbf{Preparation Total} & & $52 \pm 2$ & 2.47\% \\

Encapsulation   & Step 1--2   &  $19 \pm 1$  & 0.90\%  \\
Encapsulation   & Step 3   &  $58 \pm 1$  & 2.76\%  \\
Encapsulation   & Step 4   &  $58 \pm 1$  & 2.76\%  \\
Encapsulation   & Step 5 and 7   &  $605 \pm 10$  & 28.75\%  \\
Encapsulation   & Step 6   &  $56 \pm 2$  & 2.66\%  \\
Encapsulation   & Step 8   &  $497 \pm 10$  & 23.62\%  \\
Encapsulation   & Step 9--11   &  $17 \pm 1$  & 0.81\%  \\
\textbf{Encapsulation Total} & & $1311 \pm 27$ & 62.31\% \\

Decapsulation   & Step 1--4 &  $314 \pm 6$  & 14.92\%  \\
Decapsulation   & Step 5--9 &  $129 \pm 3$  & 6.13\%  \\
Decapsulation   & Step 10--14 &  $112 \pm 2$  & 5.32\%  \\
Decapsulation   & Step 15--16 &  $29 \pm 1$  & 1.38\%  \\
Decapsulation   & Step 17   &  $156 \pm 3$  & 7.41\%  \\
\textbf{Decapsulation Total} & & $741 \pm 14$ & 35.22\% \\

\textbf{Total} &  & $2104 \pm 43$ & 100\% \\

\end{tblr}
\end{table}

To assess the computational overhead introduced by our formal verification model, we measured the verification time for each sub-problem described in \cref{ch:subproblems}.
The experiments were conducted on a Dell Latitude 5450 laptop using ProVerif 2.05.
For each sub-problem, we recorded the average runtime of the ProVerif process over 50 repeated runs, discarding the maximum and minimum values.
The resulting verification times are summarized in \cref{tab:verif-times}.
Notably, verifying the entire set of sub-problems requires about two seconds.
As discussed in \cref{sec:our_contribution}, the formal verification overhead occurs exclusively at design time and is therefore incurred only once.
\subsection{Simulation of System Settings and Assumptions}
We describe the techniques for simulating the assumptions listed in \Cref{ch:preliminaries}.

\subsubsection{Setting Up Cryptographic Materials and Starting Contexts}
\label{sec:methods-system-setup}

As mentioned in \Cref{sec:entities}, we consider a system with a single producer responsible for producing updates for multiple vehicles. Additionally, vehicles occasionally need to update their software with the latest versions, i.e., each vehicle can be updated multiple times. While some of the cryptographic materials we identify in \Cref{sec:crypto-mat} are used for multiple updates, others are ephemeral, i.e., they can only be used during their designated update round. Next, we show how to simulate the relationships between update rounds and cryptographic materials.

\label{sec:mapping-crypto-to-update-rounds}
We present the Mapping Tree in \Cref{fig:setup-crypto}, which is a tree consisting of different processes. 
The root \emph{process} invokes multiple vehicles in its child processes \emph{setupVehicle}, and each vehicle has multiple update rounds (invoked in the child process \emph{setupUpdateRound}). 
In each process, we create cryptographic materials. Because the materials for each sub-problem vary, we create a tree for each sub-problem. Additionally, the sub-problems related to software preparation (see \Cref{sec:preparation}) do not consider vehicles. Therefore, \emph{setupVehicle} is not included for these sub-problems. Instead, \emph{process} directly invokes \emph{setupUpdateRound}. Moreover, for \emph{Secure Software Files} (see \Cref{sec:prp-1-4}) and \emph{Order Initiation} (see \Cref{sec:stp-ecp-1-2}), \emph{setupUpdateRound} is omitted. 
This is because these sub-problems contain the initiation task, meaning the update round of the current execution has not been initiated (see \Cref{sec:modelling-unisuf}).

\begin{figure}[htbp]
    \centering
    \includegraphics[width=\columnwidth]{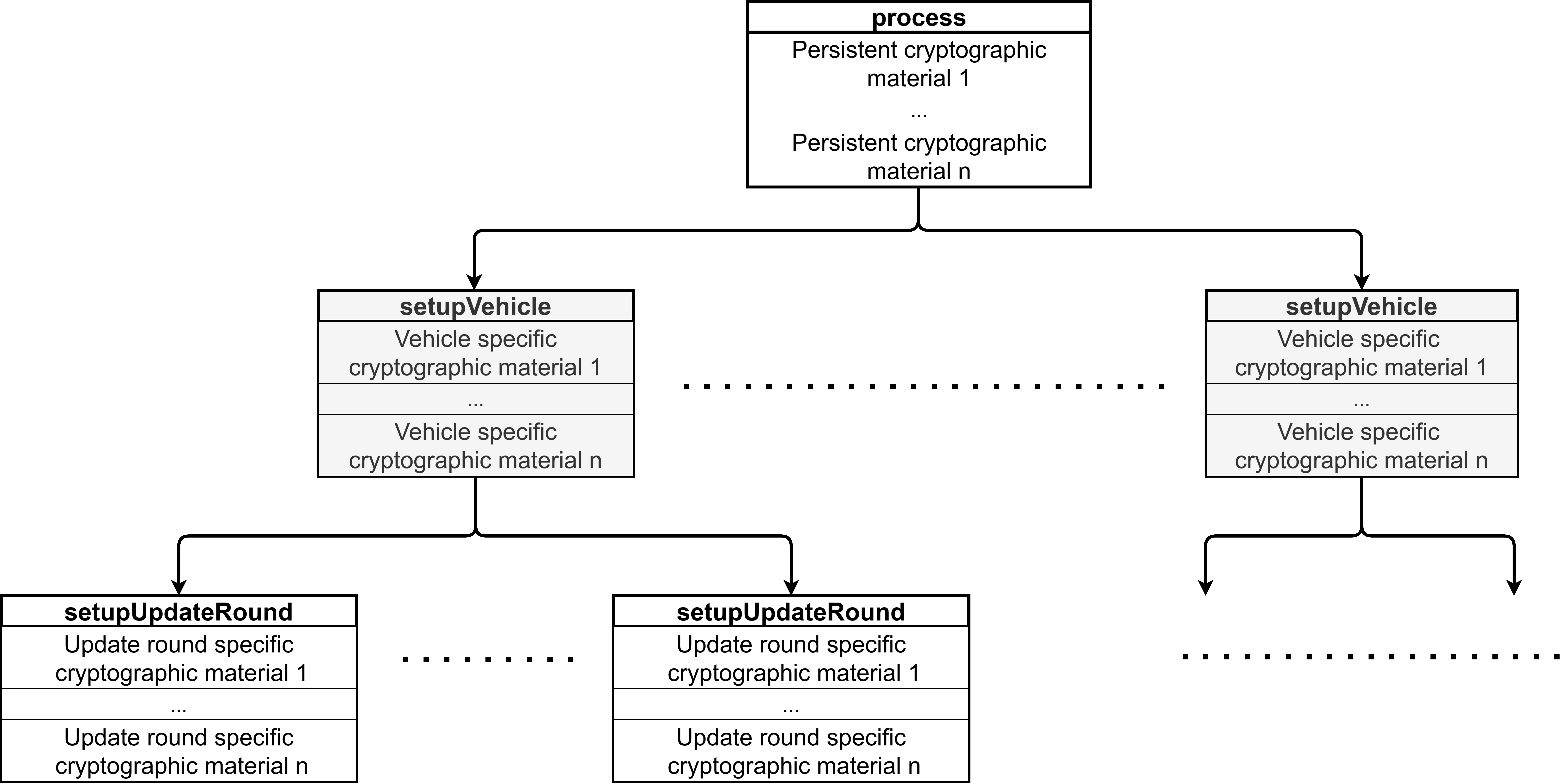}
    \caption{Mapping Tree creates multiple update rounds for each vehicle. The tree also ensures that cryptographic materials are used in their assigned update round. }
    \label{fig:setup-crypto}
\end{figure}

In the root \emph{process}, we create the software supplier certificate and all producer certificates because these are used in all update rounds. 
However, the vehicle certificate is only used in update rounds for its vehicle. 
Therefore, this certificate is created in \emph{setupVehicle}. All other cryptographic materials are created in \emph{setupUpdateRound} because they are only used in a single update round.

Each process passes down its cryptographic materials to its children processes. Therefore, \emph{setupUpdateRound} can access cryptographic materials from \emph{setupVehicle} and \emph{process}, while \emph{setupVehicle} can access materials from the root \emph{process}.

\subsubsection{Mapping Starting Contexts to Cryptographic Materials}
\label{sec:mapping-starting-contexts-to-cryptographic-materials}
As mentioned in \Cref{sec:modelling-unisuf}, an entity participating in a sub-problem can have a starting context. This occurs when the entity has previously run either the initiation or listening task for another sub-problem in the same update round. We, therefore, extend our mapping tree from \Cref{sec:mapping-crypto-to-update-rounds} to account for starting contexts (see \Cref{fig:setup-tree}).

\begin{figure}[!ht]
    \centering
    \includegraphics[width=\columnwidth]{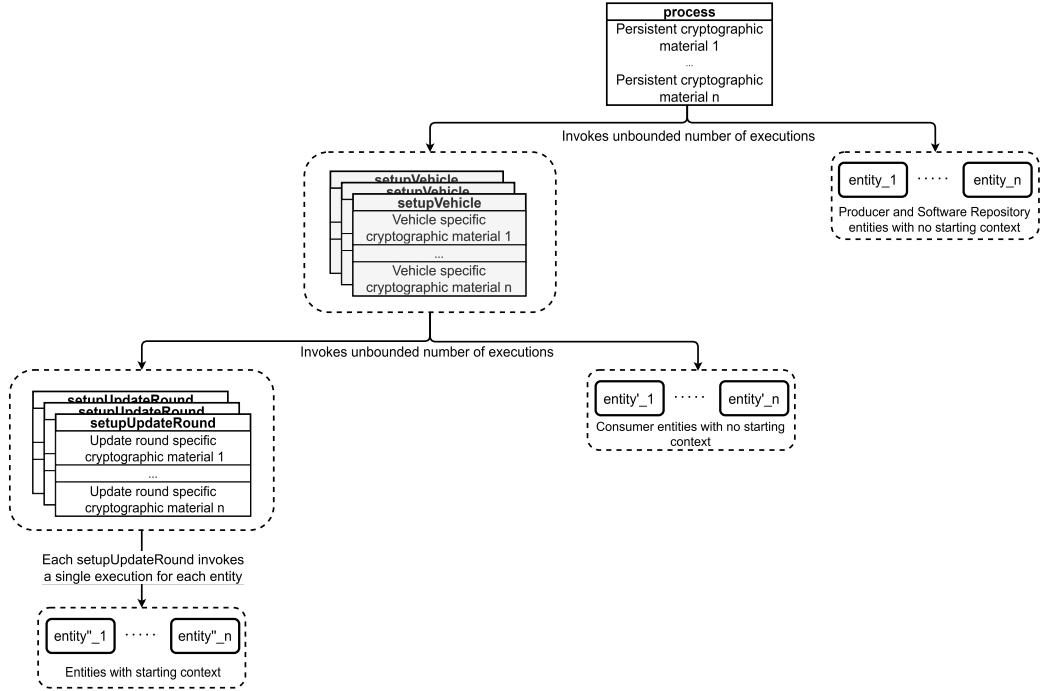}
    \caption{Extension of the mapping tree in \Cref{fig:setup-crypto}. This tree invokes the starting contexts for entities in the tree's sub-problem. The entities with starting contexts receive them when they are invoked by the tree contexts.}
    \label{fig:setup-tree}
\end{figure}

All producer and software repository entities with no starting context are invoked by \emph{process}. Therefore, they only have access to the cryptographic materials used in all update rounds, i.e., they cannot access materials coupled to a vehicle or update round. Consumer entities with no starting context also have access to vehicle-specific materials and are therefore invoked by \emph{setupVehicle}.

An entity with a starting context is invoked by \emph{setupUpdateRound}. These entities can access the cryptographic materials created for the update round. We make sure to specify the materials in accordance with \Cref{ch:subproblems}, such that only materials previously created in previous sub-problems are available. 

Note that we create $v_{id}$ and $t_e$ (see \Cref{sec:update-rounds}) in \emph{setupVehicle} and \emph{setupUpdateRound} respectively. This allows us to separate the different update rounds from each other.

\subsubsection{ProVerif Simulation of the Mapping Tree with Update Rounds}
\label{sec:simulation-mapping-tree-update-rounds}
ProVerif uses a single main process, defined with the reserved word \verb|process|, and sub-processes can be defined as macros by using the reserved word \verb|let|~\citep[Sec. 3.1]{proverif-manual}. We use these two reserved words to implement our mapping tree (see \Cref{fig:setup-tree}), as seen in \Cref{lst:setup-tree}. 
For easier comprehension, the names of the sub-processes follow the mapping tree's structure.

\begin{listing}[!ht]
     \begin{minted}{text}
let entity_1 ((*Parameters for entity_1*)) = 
    (*Algorithm for entity_1....*).
(*
Define entity_n, entity'_1, entity'_n, entity''_1 and entity''_n as entity_1 was defined
but with different parameters and algorithms.
*)

let setupVehicle((*Parameters for the vehicle*)) = 
    (*Initiate Vehicle specific material*)
    !setupUpdateRound((*Send material to each update round*)) |
    !entity'_1((*Send initial material to entity'_1*)) |
    (* .... | *)
    !entity'_n((*Send initial material to entity'_n*)).

let setupUpdateRound((*Parameters needed for the update round*)) = 
    (*Initiate update round specific material*)
    (*Running all entities in an update round*)
    entity''_1((*Send initial material to entity''_1*)) |
    (* .... | *)
    entity''_n((*Send initial material to entity''_n*)).

process
    (*Initiate persistent cryptographic material*)
    !setupVehicle((*Send initial material to each vehicle*)) |
    !entity_1((*Send initial material to entity_1*)) |
    (* .... | *)
    !entity_n((*Send initial material to entity_n*))
    \end{minted}
    \caption{Code example of how entities are instantiated with their corresponding data, thereby simulating the update round mapping tree in \Cref{fig:setup-tree}.}
    \label{lst:setup-tree}
\end{listing}

Note the exclamation operator, \verb|!|, in the listing, which in ProVerif invokes an unbounded number of replications of a process \citep[Sec. 3.1.4]{proverif-manual}.
This operator corresponds to the arrows we mark with \emph{Invokes unbounded amount of processes} in our tree. By using an unbounded amount of invocation per process, we ensure that each vehicle receives multiple updates.

To give an execution of an entity access to the cryptographic materials discussed in \Cref{sec:mapping-starting-contexts-to-cryptographic-materials}, we pass the materials as parameters when invoking the execution. In ProVerif, this is achieved by: \mintinline[bgcolor=backcolour]{text}{entity_1(cryptographicMaterial)} \citep[Sec. 3.1]{proverif-manual}.

Since all public cryptographic materials are available to all, the adversary is explicitly made aware of them.
To simulate this, we send the public material out on a channel that the adversary can read on, as such \citep[Sec. 3.1]{proverif-manual}: 
\begin{minted}{text}
    out(publicChannel, (publicData1, publicData2, ..., publicDataN))
\end{minted}
 Note that this is done for all public cryptographic materials created in \emph{process}, \emph{setupVehicle} and \emph{setupUpdateRound}.

\subsubsection{Reliable Communication}
\label{sec:reliable-communication}

We simulate reliable communication in which the receiving-side messages are delivered according to the order in which the sender fetched them.
The session identifier and message sequence number are included for all messages. 
The sequence number is incremented for each new message.
This simulates a connection establishment and ensures the correct ordering of messages.

\begin{listing}[!ht]
     \begin{minted}{text}
free alice_bob: channel.

table bobSessions(bitstring, bitstring).

let Alice() = 
    new sessionId: bitstring; (* initiate the session identifier *)
    new data1: bitstring; (* data to be sent *)
    out(alice_bob, (sessionId, data1, 1));
    in(alice_bob, (=sessionId, data2: bitstring, =2)).

let Bob() =
    new processId: bitstring; (* each instance of Bob has a unique process ID *)
    (* receive the session ID and data from Alice *)
    in(aliceBob, (sessionId: bitstring, data1: bitstring, =1)); 

    insert bobSessions(sessionId, processId);
    get bobSessions(=sessionId, processId': bitstring) suchthat processId <> processId' in
    (* another execution of Bob is already in the session, so we abort *)
        0
    else (
    (* no other execution of Bob is currently in the session, so we proceed *)
        new data2: bitstring;
        out(alice_bob, (sessionId, data2, 2))
    ).
    \end{minted}   
    \caption{Simulation of reliable communication in ProVerif.}
    \label{lst:rel-communication}
\end{listing}

Our simulation is illustrated in \Cref{lst:rel-communication}.
Alice initiates the session by sending a message that includes a session identifier, data, and sequence number 1 (line 8).
The session identifier tracks the session, while the hardcoded sequence number orders messages within it. 

In line 14, Bob listens to the first message of any session by using the ProVerif pattern matching operator (\verb|=|) \citep[Sec. 3.1.2]{proverif-manual}. When Bob receives such a message, we check if another execution of Bob is already in the session. 
This is because we use an unbounded number of processes (see \Cref{sec:simulation-mapping-tree-update-rounds}), which means that multiple executions of Bob can join the session.
To prevent another execution of the same process from joining the same session, we adapt Wang's solution~\citep[Sec. 9.3.2]{junlangwang}. 

Specifically, in line 3, we set up a table to store the session and the process identifiers. 
Then, in lines 16-17, we add the session identifier and Bob's process identifier to the table. 
We then check if another instance of Bob is already in the session, i.e., if there are at least two different process identifiers for the given session. 
If so, we abort the session establishment process. 
Otherwise, Bob proceeds to join the session.
\citet[Sec. 9.3.2]{junlangwang} states that this solution works because ProVerif schedules the processes such that, at most, one execution is allowed to proceed for a given session.
Furthermore, ProVerif explores all possible schedules, including the schedules in which only a single execution is permitted to continue.

\subsubsection{Secure and Reliable Communication}
We enhance the reliable communication channel with security guarantees.
In ProVerif, based on the Dolev-Yao threat model, the adversary has full control over the communication channels~\citep[Ch. 3]{proverif-manual}. 
They can freely read, update, or insert channel messages.
This means messages are vulnerable to eavesdropping and tampering by an attacker.
To prevent the adversary from reading, updating, or inserting channel messages, we declare the channel as private~\citep[Sec. 6.7.4]{proverif-manual}: \mintinline{text}{free c: channel[private].}.

\subsubsection{Update Rounds}
We describe how we simulate update rounds.

\subsubsection{Replacing Session Identifiers With Update Round Identifiers}
We include the update round identifier in all messages, as mentioned in \Cref{sec:update-rounds}.
The update round identifier provides context to the cryptographic materials transmitted in our code and removes the need for the session identifiers discussed in \Cref{sec:reliable-communication}.
This is because an update round encapsulates multiple peer-to-peer sessions.
This change is simple to implement. 
Instead of using session identifiers (\verb|sessionId|), we use the update round identifier (see \Cref{sec:update-rounds}), i.e., VIN and the expiration time (\verb|vin, expirationTime|), or just the expiration time (\verb|expirationTime|).

\subsubsection{Simulating the Listening Task in ProVerif}
As mentioned in \Cref{sec:modelling-unisuf}, the first task of all listeners is the listening task. We simulate this task so that listeners can discover new update rounds. As mentioned in \Cref{sec:reliable-communication}, we adapt a solution by Wang \citet[Sec. 9.3.2]{junlangwang}, to hinder multiple executions from working on the same session. For update rounds, we do the same, but we use a table containing update round identifiers instead of session identifiers.

 Listeners that are considered producer and software repository entities are made aware of the round's VIN and expiration time, as follows:
\begin{minted}{text}
in(c, (vin: bitstring, expirationTime: bitstring, ..., =1);
\end{minted}
In contrast, consumer listeners are already aware of the VIN of their vehicle (see \Cref{sec:system-settings}). 
Therefore, consumer listeners only learn the expiration time: 
\begin{minted}{text}
in(c, (=vin, expirationTime: bitstring, ..., =1));
\end{minted}
Moreover, the equals operator (\verb|=|), which is used for pattern matching in ProVerif~\citep[Sec. 3.1.4]{proverif-manual}, ensures that the consumer listeners only work on update rounds intended for their vehicle.

\subsubsection{Unbounded Number of Processes in ProVerif and Considerations}

Using unbounded processes running concurrently in ProVerif \citep{proverif-manual} means that our update rounds can also occur concurrently in our model.
However, \citet{kstrandberg} notes that this may not accurately represent reality, where update rounds occur sequentially for a given vehicle. 
The strict sequential update rounds could be achieved by allowing a vehicle to update only once.
Although this would simplify the solution, it would also trivialize the problem, as the adversary would not be able to replay messages to the same vehicle since there would be no other update rounds to target.
We believe that the use of unbounded concurrent processes is crucial to our proof.
We argue that the set of all sequential schedules is a subset of the set of all concurrent schedules.
Therefore, if our proof holds for all concurrent schedules, it will also hold for all sequential schedules.

\subsubsection{Well-Known Addresses}
We add a communication channel for each pair of entities communicating in a task. 
This allows us to separate the different peer-to-peer sessions inside an update round from each other. 
Without multiple channels, a message from Alice to Bob could end up at Charlie, that is, the channels simulate the well-known addresses discussed in \Cref{sec:system-settings}. Specifically, channels alone simulate well-known addresses for the producer and the software repository. Meanwhile, for the consumer entities that belong to a vehicle, the VIN is also needed to simulate well-known addresses.  
The reason is that the VIN separates the multiple vehicles considered in our system setup (see \Cref{sec:methods-system-setup}).
\subsection{ProVerif Libraries}

ProVerif offers a method to organize frequently used functions and macros into a library file, allowing them to be imported into other files to minimize redundant code~\citep[Sec. 6.6]{proverif-manual}. In addition, the libraries ensure that the data structures appearing in multiple proofs are modeled consistently. We have developed libraries for implementing cryptographic primitives,
helper functions for setting up cryptographic materials,
and common message types. 

\subsection{Correctness Proof for Intra-Round Uniqueness}
\label{sec:proof-intra-round-uniqueness}

\begin{lemma}
\label{lemma:intra-round-uniqueness}
Consider an entity $E$ and a handling $e(r,\ d,\ \ell)$, $E$ executes $e(r,\ d,\ \ell)$ at most once. 
\end{lemma}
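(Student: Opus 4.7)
The plan is to prove \cref{lemma:intra-round-uniqueness} by contradiction, leveraging two mechanisms described earlier in the paper: the persistent message log introduced in \Cref{sec:update-rounds}, and the single-execution-per-round property of the listening task established via Wang's table scheme in \Cref{sec:reliable-communication} and discussed in \Cref{sec:modelling-unisuf}. Suppose that during some update round execution $X$, entity $E$ performs two handling events sharing the same triple $(r, d, \ell)$. Each such event is produced by a single invocation of the procedure labelled $\ell$ inside one run of $E$'s task sequence for round $r$. I would split the argument into two sub-cases depending on whether the two events arise from the same run of the task sequence or from two distinct runs.

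For the first sub-case, I would observe that a task sequence is linear: the initiation task or listening task invokes the next task, which invokes the next, and so on to the halt task (see Figures~\ref{fig:intitiator_entity}--\ref{fig:listening_entity} and Section~\ref{sec:modelling-unisuf}). Each specific label $\ell$ is attached to exactly one procedure in the sub-problem specifications of \Cref{ch:subproblems}, and that procedure is placed in a unique position in the sequence. Hence a single run of the sequence can produce $e(r,d,\ell)$ at most once, contradicting the assumption that two events arise from the same run.

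For the second sub-case, I would rule out two distinct runs of $E$'s task sequence for the same round $r$. The listening task of a listener entity, and symmetrically the initiation task of the initiator, is simulated by inserting the round identifier into a per-entity table and aborting if another process instance has already claimed $r$ (\Cref{sec:reliable-communication}). Thus at most one run of $E$'s task sequence can proceed past the first task for round $r$. For the initiator, the round identifier is created exactly once during initiation, so a second run for the same $r$ cannot be launched either. As a back-stop, the persistent log keyed by $(r,d)$ from \Cref{sec:update-rounds} drops any duplicate incoming message carrying $(r,d)$, preventing a second run from receiving the material $d$ that procedure $\ell$ consumes.

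The main obstacle I anticipate is making the second sub-case watertight for procedures that internally generate $d$ (rather than receiving it), such as key generation or hash computation, where the persistent log argument does not directly apply. The resolution is to rely on the task-sequence linearity of the first sub-case together with the single-run-per-round guarantee of the listening/initiation table: $\ell$ can fire only inside a run associated with $r$, and there is at most one such run, so $e(r,d,\ell)$ fires at most once irrespective of where $d$ originates. Combining both sub-cases yields the contradiction, establishing the lemma.
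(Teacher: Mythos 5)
Your proof is essentially correct, but it decomposes the problem differently from the paper. The paper splits on \emph{how the material $d$ is obtained}: either $d$ is derived from materials received in messages, in which case the persistent log of \Cref{sec:update-rounds} (which drops any message already logged under $(r,d)$) prevents a second derivation; or $d$ is generated from scratch, in which case the perfect-cryptography/random-oracle assumption guarantees that a second generation would yield a different $d$, so the same triple $(r,d,\ell)$ cannot recur. You instead split on \emph{whether the two occurrences lie in the same run of $E$'s task sequence or in distinct runs}, using linearity of the task sequence for the former and a single-run-per-round guarantee for the latter. Your argument actually establishes something slightly stronger (the label $\ell$ fires at most once per round per entity, for any $d$), but it leans on Wang's table mechanism from \Cref{sec:reliable-communication}, which is an artifact of the ProVerif encoding used to tame unbounded replication rather than a stated assumption of the abstract system model --- and this lemma is proven on paper precisely because it lives outside what ProVerif verifies. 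To make your second sub-case airtight at the model level you must ultimately fall back on the same two facts the paper uses: a second run requires a second initiation message, which is either an exact duplicate (dropped by the persistent log) or carries different material (in which case freshly generated material differs by the random-oracle assumption, so the triple differs anyway). In short, both proofs are sound; the paper's is more economical in that it appeals only to the \Cref{ch:preliminaries} assumptions, while yours yields a stronger per-label uniqueness statement at the cost of importing a simulation-level premise that needs independent justification.
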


\begin{proof}
Consider an entity, $E$, and a handling event, $e(r,\ d,\ \ell)$, which we denote $e$ for brevity. 
To execute $e$, entity $E$ must be in the update round $r$ and have previously generated the cryptographic material $d$.
Then, there can only be two cases: either the generation of $d$ depends on some cryptographic materials sent to $E$ in messages, $m'$, or $E$ generates $d$ entirely from scratch.
For the sake of simplicity, we assume that there is only one such message. 
Note that similar arguments are held when multiple messages are held.

In the first case, $m'$ must contain $(r,\ d')$ where all $d'$ were used to generate $d$.
To execute the event $e$ more than once, $E$ must have received multiple versions of message $m'$ containing $(r,\ d')$.
However, this is impossible because the entities never process duplicate messages, according to the assumption in \Cref{sec:update-rounds}, which specifies that entities omit any message already in their logs.

In the second case, $d$ is generated from scratch.
As we assume perfect cryptography and, hereby, randomness is based on a random oracle, two generated materials cannot be identical.
Therefore, executing $e$ with the same $d$ multiple times is impossible.

Thus, executing $e$ multiple times in both cases is impossible.
In other words, $E$ executes $e$ at most once.
\end{proof}

From the~\Cref{lemma:intra-round-uniqueness}, we derive that the \textit{Intra-Round Uniqueness} requirement always holds.
\begin{corollary}
The \Cref{req:intra-round-uniqueness} holds for all tasks.
\end{corollary}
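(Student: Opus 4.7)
The plan is to promote the per-entity statement of \Cref{lemma:intra-round-uniqueness} to the level of the full update round execution $X$ demanded by \Cref{req:intra-round-uniqueness}. The first step is to observe that each handling label $\ell$, as enumerated per sub-problem in \Cref{ch:subproblems}, names a procedure performed by exactly one entity (for instance, ``\PSA generates the software key'' fixes the performer to \PSA, and ``\VCM creates a signed \SL'' fixes it to \VCM). Thus, once $\ell$ is fixed, the entity $E$ that can execute any event of the form $e(r,d,\ell)$ is uniquely determined.

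Next I would argue by contradiction. Suppose an update round execution $X$ contains two distinct handling events $e(r,d,\ell)$ and $e'(r,d,\ell)$. By the label-to-entity correspondence above, both must occur in the local subsequence of $X$ executed by the same entity $E$. But then $E$ would execute the handling event $e(r,d,\ell)$ more than once, directly contradicting \Cref{lemma:intra-round-uniqueness}. Hence no such $e'$ exists in $X$, and \Cref{req:intra-round-uniqueness} is satisfied. Since the argument holds for any label $\ell$ drawn from any sub-problem's list, the conclusion applies to all tasks.

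The main obstacle I anticipate is not mathematical but bookkeeping: one must verify the implicit ``one label, one entity'' convention against every labelled list $\{\ell_1, \ell_2, \dots\}$ given in \Cref{sec:prp-1-4} through \Cref{sec:stp-dcp-17}, checking that no label is reused across entities within the same sub-problem. A brief inspection of those lists confirms that each $\ell_i$ explicitly names its performer. With this uniformity check in place, the corollary follows immediately from \Cref{lemma:intra-round-uniqueness} without further machinery, and no additional assumption on the adversary or on message scheduling is required.
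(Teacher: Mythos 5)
Your proposal is correct and follows essentially the same route as the paper, which simply derives the corollary directly from \Cref{lemma:intra-round-uniqueness} without further argument. The only difference is that you make explicit the bridging step the paper leaves implicit --- namely that each label $\ell$ fixes a unique performing entity, so two events $e(r,d,\ell)$ and $e'(r,d,\ell)$ in the same round would both have to be executed by that entity, contradicting the lemma --- which is a worthwhile clarification rather than a departure.
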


\subsection{Correctness Proof for Termination}
\label{sec:proof-termination}

\begin{lemma}
\label{lemma:termination}
All entity executions terminate eventually. Each termination is either timely or late.
\end{lemma}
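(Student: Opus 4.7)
The plan is to argue by case analysis on how an entity's execution of an update round $r=(v_{id},t_e)$ (or just $t_e$) ends, exploiting the synchrony and universal-time assumption from \Cref{sec:system-settings} together with the lifecycle of update rounds in \Cref{sec:lifecycle-of-update-rounds}. The key observation is that each atomic step takes exactly one time unit, the expiration time $t_e$ is a finite value encoded in the round identifier, and every entity has access to universal time. Consequently, at each atomic step an entity can compare the current universal time against $t_e$, so the condition that triggers the forced halt is always observable.

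First I would establish that every task sequence is finite and ends in the \emph{halt} task (\Cref{sec:system-settings}). Hence, if the entity is able to execute each of its tasks to completion before universal time reaches $t_e$ — either because it is the initiator and its user-invoked task fires early enough, or because it is a listener whose incoming messages for round $r$ arrive within the available time budget — then the halt task is eventually reached, the round context is removed, and no further steps are taken for $r$. By the definition preceding \Cref{req:termination}, this constitutes \emph{timely} termination.

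Otherwise, at some atomic step of the entity's execution the universal time equals $t_e$ while the halt task has not yet been reached. This can happen, for instance, when a listening task is still blocked waiting for an initiation message, or when a task is waiting for a reply that was delayed beyond the bound $\eta$ by the Dolev--Yao adversary (\Cref{sec:threat-model}). By \Cref{sec:lifecycle-of-update-rounds}, reaching $t_e$ unconditionally forces the entity to halt its local execution of round $r$: it deletes the associated context and ignores any further messages tagged with $r$. Thus the execution still terminates, but after $t_e$, which matches the definition of \emph{late} termination.

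The main obstacle I anticipate is justifying that the expiration-triggered halt is applicable to \emph{every} task, including blocking ones such as the listening task, and that no sequence of atomic steps can indefinitely postpone the universal-time check. I would discharge this by appealing explicitly to the synchronous model: because each step advances universal time by exactly one unit and $t_e-t_0$ is finite, only finitely many steps can occur before the expiration check is forced; and because entities log received messages and drop duplicates (\Cref{sec:update-rounds}), no replay from the adversary can keep a task alive across $t_e$. Combining the two cases yields that every entity execution terminates, and its termination is classified as timely or late, proving the lemma.
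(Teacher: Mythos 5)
Your proposal is correct and follows essentially the same route as the paper's proof: a two-case split on whether the entity reaches its \emph{halt} task before the expiration time (timely) or is forced to halt by the expiration rule of \Cref{sec:lifecycle-of-update-rounds} (late). The extra detail you add about synchrony and the observability of the expiration check is a sound elaboration of what the paper leaves implicit by assumption, but it does not change the argument.
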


\begin{proof}
Consider an update round; if an entity runs its \textit{halt} task for this update round before the expiration time, it terminates timely by the definition of termination in \Cref{sec:problem-definition}. 
Otherwise, it fails to run the halt task before the expiration time, and by assumption (see \Cref{sec:lifecycle-of-update-rounds}), the entity halts and terminates late.
Therefore, all entity executions always terminate and each termination is either timely or late.
\end{proof}

From the~\Cref{lemma:termination}, we derive that the \textit{Termination} requirement always holds because all update rounds must always terminate if all entities always terminate.
\begin{corollary}
The \Cref{req:termination} holds for all tasks.
\end{corollary}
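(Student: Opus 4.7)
The plan is to lift \Cref{lemma:termination}, which asserts termination of each \emph{entity} execution, to the level of \emph{update round} executions required by \Cref{req:termination}. The bridging device is the definition from \Cref{sec:lifecycle-of-update-rounds}: an update round terminates once every entity participating in the round has halted its local execution of that round.

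First, I would fix an arbitrary update round execution $X$ with round identifier $r$ whose expiration time is $t_e$ (where $r=(v_{id}, t_e)$ for encapsulation/decapsulation and $r = t_e$ for software preparation, see \Cref{sec:update-rounds,sec:preparation}). By the system settings in \Cref{sec:entities} and the simplified model in \Cref{sec:modelling-unisuf}, the set of entities that may participate in $r$ is fixed and finite: it consists of the static collection of producer sub-entities, the consumer sub-entities attached to the vehicle identified by $v_{id}$, a single supplier, a single repository, and the single ECU.

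Second, for each participating entity $E$, I would apply \Cref{lemma:termination} to the $E$-subsequence of $X$. The lemma provides two cases: either $E$ reaches its \emph{halt} task before $t_e$ (timely termination), or, by the lifecycle assumption that each entity removes the round's context and ignores subsequent messages once $t_e$ elapses, $E$ is forced to halt at $t_e$ (late termination). In either case the entity's execution of $r$ is finite. Finally, because the participant set is finite and each participant has halted, the condition from \Cref{sec:lifecycle-of-update-rounds} for round termination is satisfied, hence $X$ terminates. Since $X$ was arbitrary, \Cref{req:termination} holds for every task.

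The main obstacle is conceptual rather than technical: \Cref{lemma:termination} reasons locally, per entity, while \Cref{req:termination} demands global termination of the composite round. The delicate step is the appeal to forced halting at $t_e$, which must hold regardless of adversarial message suppression or delays beyond the bound $\eta$ from \Cref{sec:threat-model}. I would therefore justify this step explicitly by citing the lifecycle assumption rather than relying on progress of the protocol: termination is guaranteed by the expiration mechanism even when the protocol is stalled by the Dolev--Yao attacker, which is precisely what makes the lift from per-entity to per-round termination straightforward.
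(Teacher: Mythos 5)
Your proposal is correct and follows essentially the same route as the paper, which likewise derives the corollary from \Cref{lemma:termination} by observing that an update round terminates once all entities have halted. You merely make explicit two points the paper leaves implicit --- the finiteness of the participant set and the fact that the expiration mechanism forces halting even under adversarial stalling --- both of which are sound and strengthen the presentation without changing the argument.
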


\section{Conclusions}
\label{sec:Conclusions}
Our work scrutinizes UniSUF's requirements and our research questions (see \Cref{sec:problem-description}).
To validate UniSUF's requirements and architecture, we developed a formal model using ProVerif to ensure that the ProVerif program satisfies the specified requirements and the technological assumptions that UniSUF relies on.
Furthermore, we divided the UniSUF update process into smaller, more manageable sub-problems. 
We analyzed and created specific requirements for each sub-problem so that the requirements of the sub-problems together fulfill the System-level Requirements of UniSUF (see \Cref{sec:problem-definition}).

Our verification results show that our symbolic execution of UniSUF in ProVerif fulfils all requirements. 
The system-level requirements established for UniSUF collectively address the research questions posed in \Cref{sec:problem-description}.
The \textit{Confidential Secrets} requirement (\Cref{req:confidential-secrets}) ensures that the system's secrets, such as cryptographic keys and disseminated software, are not exposed to the adversary, addressing \textbf{RQ1}. 
The \textit{Integrity of Cryptographic Materials} requirement (\Cref{req:integrity-of-cryptographic-materials}) guarantees that the software obtained and used by UniSUF originates from the right source and has not been modified by any other entity, addressing \textbf{RQ2}.
The \textit{Inter-Round Uniqueness} and \textit{Intra-Round Uniqueness} (\Cref{req:inter-round-uniqueness} and \Cref{req:intra-round-uniqueness}) collaboratively prevent the use of obsolete software versions and the replay of cryptographic materials within and between update rounds.
This guarantees that UniSUF always performs software updates with the correct up-to-date versions, thus addressing \textbf{RQ3}.
The \textit{Integrity of Handling Events} (\Cref{req:integrity-of-handling-events}) ensures the operations in the software update process follow the order specified by UniSUF, thus addressing \textbf{RQ4}.
The last requirement, \textit{Termination} (\Cref{req:termination}), ensures that the update process always terminates, addressing \textbf{RQ5}.

\subsection{Discussion}
Although our work proves the UniSUF model in ProVerif to be secure, it does not necessarily guarantee the security of a real-world implementation of UniSUF.
There is an inherent discrepancy between the latter and our formal model.
Namely, formal models are intended to be complete, e.g., nothing outside the model's specification can occur, such as compromised components.
However, in real-world deployments, implementation errors and unexpected events, such as evolving attacker capabilities, can affect the system security.

In addition, like any formal analysis, our results depend on the correctness of the working assumptions and abstractions. Modeling errors or oversights could lead to missed vulnerabilities, and alternative attacker models (e.g., beyond the symbolic Dolev-Yao adversary considered here) may expose other risks that fall outside our current scope.
Thus, important challenges remain in the area. 

This does not imply that our formal verification has failed to establish meaningful security guarantees. 
On the contrary, our work has rigorously demonstrated the security properties of UniSUF in the formal model.
Our work establishes the provability of UniSUF's security, which can be a starting point for real-world implementations of UniSUF.

However, the security assurances provided by our model do not automatically transfer to a real-world implementation.
Verifying the correctness of an actual UniSUF implementation requires a substantially different and more comprehensive analysis that goes beyond the scope of our work.
Thus, bridging the gap between the formal model and a real-world implementation remains an open challenge and requires further investigation.

\subsection{Future Work}
Although UniSUF has been our case study, the methodology itself is general: by adapting the modeled roles, message formats, and requirements, a similar decomposition and verification approach can be applied to other software update frameworks in the automotive or IoT domains.
As additional directions for future work, we propose to extend our approach to cover multi-ECU and large-scale vehicular networks. The modular decomposition of UniSUF into sub-problems (\Cref{ch:subproblems}) and the requirement mapping in \Cref{tab:req-subproblems-pv}  naturally support compositional reasoning across ECUs, enabling scalability to larger deployments. Prior analyses of Uptane that incorporate secondary ECUs~\cite{DBLP:journals/jlap/KirkNBSW23,DBLP:conf/raid/LorchLTC24,boureanu2023uptane} highlight both the feasibility and the challenges of scaling symbolic verification, especially with respect to state explosion. We envision that assume-guarantee style reasoning and modular lemmas can address these challenges, while preserving strong guarantees at the system level.

\newpage
\appendix
\section{Implementations in ProVerif}
\label{sec:Implementations}
\subsection{Cryptographic Primitives}
\label{app:crypto-primitives}
\vspace{10px}

\appendixcode{cryptography.pvl}
{ProVerif library file \texttt{cryptography.pvl} that contains all our cryptographic primitives.}{lst:cryptography.pvl}

\subsection{Utilities}
\label{app:utilities}
\vspace{10px}

\appendixcode{uniSufHelpers.pvl}
{ProVerif library file \texttt{uniSufHelpers.pvl} that contains all helper methods for setting up the cryptographic materials in \Cref{sec:crypto-mat}.}{lst:uniSufHelpers.pvl}

\subsection{Message Types}
\label{app:message-types}
\vspace{10px}

\appendixcode{uniSufMessageTypes.pvl}
{ProVerif library file \texttt{uniSufMessageTypes.pvl} that contains all the message types used in our implementation.}{lst:uniSufMessageTypes.pvl}

\end{document}